\theoremstyle{definition} %%upright text, extra space above and below
    \newtheorem{definition}{Definition}
\theoremstyle{plain} %% italic text, extra space above and below
    \newtheorem{theorem}[definition]{Theorem}
    \newtheorem{proposition}[definition]{Proposition}
    \newtheorem{lemma}[definition]{Lemma}
\theoremstyle{remark} %% upright text, no extra space above or below
    \newtheorem{remark}[definition]{Remark}
    \renewcommand{\Tilde}{\widetilde}   
    \newcommand{\tc}{\widetilde{c}}
    \newcommand{\tom}{\widetilde{\omega}}
    \newcommand{\te}{\widetilde{e}}
    \newcommand{\txi}[1]{\widetilde{\xi}^{#1}}
    \DeclareMathOperator{\Ima}{Im}
    \newcommand{\Ker}[1]{\mathrm{Ker}{(#1)}}
    \newcommand{\qsp}[2]{\,\ensuremath{\raise.5ex\hbox{$#1$}\big\slash\raise-.5ex\hbox{$#2$}}}
    \newcommand{\dd}{\mathrm{d}}
    \newcommand{\ndash}{\nobreakdash-\hspace{0pt}}
    \newcommand{\filt}[1]{(#1)}
\newcommand{\bbR}{\mathbb{R}}
\newcommand{\bbZ}{\mathbb{Z}}
\newcommand{\frg}{\mathfrak{g}}
\newcommand{\frh}{\mathfrak{h}}
\newcommand{\frp}{\mathfrak{p}}
\newcommand{\frso}{\mathfrak{so}}
\newcommand{\de}{\partial}
\newcommand{\calA}{\mathcal{A}}
\newcommand{\calB}{\mathcal{B}}
\newcommand{\calF}{\mathcal{F}}
\newcommand{\calG}{\mathcal{G}}
\newcommand{\calN}{\mathcal{N}}
\newcommand{\bA}{\mathbf{A}}
\newcommand{\bE}{\mathbf{E}}
\begin{document}
\title[Corner Structure of GR]{Corner Structure of Four-Dimensional General Relativity in the Coframe Formalism}
\author[G. Canepa]{Giovanni Canepa}
\address{Fakult\"at f\"ur Physik, Universit\"at Wien, Boltzmanngasse 5, 1090 Wien, \"Osterreich}
\email{giovanni.canepa.math@gmail.com}

\author[A. S. Cattaneo]{Alberto S. Cattaneo}
\address{Institut f\"ur Mathematik, Universit\"at Z\"urich, Winterthurerstrasse 190, 8057 Z\"urich, Switzerland}
\email{cattaneo@math.uzh.ch}

\thanks{A. S. C. acknowledges partial support of SNF Grant No.\ 200020 192080 and of the Simons Collaboration on Global Categorical Symmetries. G. C. acknowledges partial support of SNF Grant No P500PT\textunderscore203085. This research was (partly) supported by the NCCR SwissMAP, funded by the Swiss National Science Foundation.}

\maketitle
\begin{abstract}
    This note describes a local Poisson structure (up to homotopy) associated with corners in four-dimensional gravity in the coframe (Palatini--Cartan) formalism. This is achieved through the use of the BFV formalism.  The corner structure contains in particular an Atiyah algebroid that couples the internal symmetries to diffeomorphisms. The relation with $BF$ theory is also described.
\end{abstract}

\setcounter{tocdepth}{3}
\tableofcontents

\section{Introduction}
The goal of this paper is to describe the Poisson structures (up to homotopy) that arise on two-dimensional corners of four-dimensional gravity in the coframe (Palatini--Cartan) formalism. 

{}From a more general perspective, one expects quantum field theory on a cylinder to describe the quantum evolution of a system described by a Hilbert space attached to a boundary component. If the boundary has itself a boundary---a corner for space--time---, the Hilbert space is expected to be a representation of some algebra associated with the corner. 
A standard example where this picture is considered is that of the vertex operator algebra arising from a punctured two-dimensional boundary.

At the classical level, one then expects a symplectic manifold to be associated with a boundary and a Poisson manifold to be associated with a corner. This picture is, however, problematic, since the constructions typically involve singular quotients.

A more suitable picture, which we use in this paper, is that of the Batalin--Fradkin--Vilkovisky (BFV) formalism \cite{BV1,BV2,BV3}, which replaces a (possibly singular) symplectic quotient by a cohomological resolution: namely, one extends the space of boundary fields to a superspace with additional structure (a symplectic structure---the BFV form---together with a Hamiltonian vector field that squares to zero---the BRST operator). 

An added bonus of this formalism is that it naturally produces a structure on the corners \cite{CMR2012,CMR2012b} which, upon choosing a ``polarization'' (i.e., a choice of a foliation by Lagrangian submanifolds) is associated with a Poisson structure (up to homotopy). 

We recall this construction, together with background material, in the first part of Section~\ref{s:detodede}, whereas in its second part we apply it to some 
instructive examples (Yang--Mills, Chern--Simons, and, notably, 4D $BF$ theory).

In Section~\ref{s:BFVPC} we recall the BFV formulation of 4D Palatini--Cartan theory \cite{CCS2020}, and
in Section~\ref{s:cornerind} we apply the aforementioned construction for corners and observe that it is singular. Nonetheless, it is possible to study and describe, in Section~\ref{s:Pstru}, a naturally associated local Poisson algebra up to homotopy. {This algebra is actually generated through Poisson brackets and a differential by the observables}
\[
J_\phi=\frac12\int_\Gamma \phi ee,
\]
where $\Gamma$ is the two-dimensional corner, $e$ is the coframe (tetrad) field (restricted to the corner), and $\phi$ is an $\mathfrak{so}(3,1)$-valued test function (Lie algebra pairing is tacitly understood in the notation).

These particular observables are reminiscent of the area observables considered in loop quantum gravity (see, e.g., \cite{Rovelli} and references therein), where, however, $\Gamma$ is a closed surface inside the boundary instead of being a corner (and Ashtekar $\mathfrak{su}(2)$ variables are used instead). In particular, our observable has a similar form of the variable conjugate to the holonomy of the connection in loop quantum gravity. Namely, $J_\phi$ is the same as the variable (7.7) introduced in \cite{Rovelli2}.
The corner structure leads to the Poisson bracket $\{J_{\phi_1},J_{\phi_2}\}_\text{corner}=J_{[\phi_1,\phi_2]}$, which is in line with the Poisson bracket of area observables, although we use here the Poisson bracket associated with the corner instead of that associated with the boundary\footnote{More precisely, the corner and the boundary observables live on different spaces. The restriction map to the corner yields, however, a map from the boundary fields to the corner fields. The fact that the Poisson brackets among the $J_\phi$s agree when calculated with respect to the boundary or the corner structure simply means that the restriction map is, at least as far as these observables are considered, a Poisson map.} and, unlike in \cite{CP2017}, no regularization is required in this context.

The above observables retain information of the internal $\mathfrak{so}(3,1)$ symmetry of Palatini--Cartan gravity. The other observables they generate, through the differential in the homotopy Poisson algebra, contain information about tangential and transversal vector fields encoding the diffeomorphism symmetry as well.

An interesting fact, which deserves further investigation, is that this corner structure actually turns out to be the corner structure for four-dimensional $BF$ theory restricted to a submanifold of fields. 

In order to understand better the algebra found, it is useful to consider some particular cases. In Section~\ref{s:simplified} we describe two possible restrictions of the general theory, called \textit{constrained} and \textit{tangent} theory, and produce a better description of a restricted version of the aforementioned local Poisson algebra up to homotopy. In the first (Section~\ref{s:constrained_Poisson_algebra}) we impose some ad hoc constraints that do not modify the classical structure of the theory, while in the second (Section~\ref{s:Poissontangent}))
we essentially freeze the generators of transversal diffeomorphisms. In the tangent theory, the associated Poisson manifold turns out to be a Poisson submanifold of the dual space of sections of an Atiyah algebroid associated with the corner (Section~\ref{s:atiyah_tangent}). We briefly discuss the quantization when the corner is a sphere and the fields are assumed to be constant---a situation that is relevant in the case of a punctured boundary (Section~\ref{s:quantization_tangent}).

These results are of course expected to be related to the BMS group \cite{BBM62,Sachs1962,P1963,Strominger2014, HLMS2015} at infinity, which has been extensively studied (see, e.g., \cite{BT2011,FOPS21} and references therein). Furthermore, we expect some connections between the results of this paper and those derived recently in the context of the covariant phase space method (see, e.g., \cite{MS2021, OS2020, CL2021, FGP2020} and references therein). This will be object of future work. 
The difference with our approach is that we assume the boundary of space--time to be a compact manifold with boundary. For a noncompact manifold, one should instead choose an appropriate compactification, related to the chosen  asymptotic conditions for the fields. 
 Given the differences in the formalism, the connection between these structures is not fully understood; hence, we plan to explore these topics in a forthcoming work. A similar Poisson structure on corners, in the case when a momentum map is available, can be found in \cite{RS2023}.

Some of the results in this paper (in particular Sections \ref{s:BFFV_Corner}, \ref{sec:constrainedtheory} except Section \ref{s:constrained_Poisson_algebra},  and the first part of \ref{sec:tangenttheory} until Section \ref{s:Poissontangent}) first appeared in the PhD thesis of the first author\cite{C2021} available online but not published in a peer-reviewed journal.

\subsection*{Acknowledgments} We thank M. Schiavina and S. Speziale for the fruitful discussions that we had during the preparation of this article. A.C. also thanks P. Xu and T. Voronov for their illuminating suggestions.

\section{Preliminaries and relevant constructions}\label{s:detodede}
In this section, we review how the BFV formalism is used to describe coisotropic reduction, which is relevant for the boundary structure of a field theory, how the BF$^2$V formalism is used to describe Poisson structures (possibly up to homotopy),
which is relevant for the corner structure of a field theory, and how the two may be related.
%We will also consider a few examples, like electromagnetism, Yang--Mills theory, Chern--Simons theory, and $BF$ theory. The last example will be relevant for the rest of the paper.

\begin{remark}
We group here some references for this section, not to interrupt the flow of the following. For Poisson and symplectic structures, see, e.g.,  \cite{BW1997}. The notion of coisotrope was introduced in \cite{W1988}. The notion of derived bracket was introduced in \cite{K1996} and generalized in \cite{V2005,V2004}.
The notion of BF$^m$V structures and their mutual relations, in particular arising from relaxed structures, was introduced in \cite{CMR2012,CMR2012b}, although not with this name; note that there is a parallel story developed in derived symplectic geometry, see \cite{Cal15, CPTVV,Saf20} and references therein. 
The existence of BFV structures associated with coisotropic submanifolds is discussed in \cite{Stasheff1997,Schaetz:2008,SchaetzTH,FK2012}.
\end{remark}

\subsection{Background notions}
We start recalling some important preliminaries.
\subsubsection{Poisson and symplectic structures}
\begin{definition}
    A Poisson algebra is a pair $(A, \{\ ,\ \})$ where $A$ is a commutative algebra (for our applications always over $\bbR$) and $\{\ ,\ \}$ is a bilinear, skew-symmetric operation on $A$ which is a derivation w.r.t.\ each argument (Leibniz rule)---i.e., a biderivation---and satisfies the Jacobi identity. The operation $\{\ ,\ \}$ is called a Poisson bracket.
\end{definition}
 The simplest example of a Poisson algebra is any algebra with the zero Poisson bracket. Another interesting example is the symmetric algebra $S(\frg)$ of a Lie algebra $\frg$, where the Lie bracket is extended by the Leibniz rule. Symplectic manifolds also produce Poisson algebras, as we recall below.

\begin{definition}
    A Poisson manifold is a pair $(M,\{\ ,\ \})$ where $M$ is a smooth manifold and $\{\ ,\ \}$ is a Poisson bracket on $C^\infty(M)$. 
\end{definition}
Again we have the simplest example of the zero Poisson bracket. The dual $\frg^*$ of a finite-dimensional Lie algebra $\frg$ is also an example, where the Poisson bracket on $S(\frg)$, now viewed as the algebra of polynomial functions on $\frg^*$, is extended to the whole $C^\infty(\frg^*)$. 

A biderivation $\{\ ,\ \}$ on a smooth manifold $M$ is always determined by a bivector field $\pi$ via
$\{f,g\}=-\pi(\dd f,\dd g)$. If we denote by $[\ ,\ ]$ the Schouten bracket of multivector fields, the Jacobi identity for the bracket is equivalent to the Maurer--Cartan equation $[\pi,\pi]=0$. In this case, $\pi$ is called a Poisson bivector field.
Moreover, we can also write $\{f,g\}=[[\pi,f],g]$, which is an example of derived bracket, on which we will elaborate below. In the trivial case, $\pi$ is the zero bivector field. In the case of the dual of a Lie algebra $\frg$, we have $\pi^{ij}=-f^{ij}_k x^k$, where the $f^{ij}_k$s are the structure constant of $\frg$ in some basis and the $x^k$s are the coordinate on $\frg^*$ w.r.t.\ the same basis.
\begin{definition}
    A symplectic manifold is a pair $(M,\varpi)$ where $M$ is a smooth manifold and $\varpi$ is a closed nondegenerate two-form on $M$. If $M$ is infinite dimensional, we require only weak nondegeneracy, namely, that at every point $x$
\[
\varpi_x(v,w) = 0\ \forall v\in T_xM \implies w=0.
\]
\end{definition}

This condition implies that a function $f$ has at most one Hamiltonian vector field $X_f$: $\iota_{X_f}\varpi = \dd f$.
We say that a function is Hamiltonian if it has a Hamiltonian vector field and denote the space of such functions $C^\infty(M)_\text{Hamiltonian}$. The Poisson bracket of two Hamiltonian functions $f$ and $g$, with Hamiltonian vector fields denoted $X_f$ and $X_g$, respectively, is defined as
\[
\{f,g\} := X_f(g) = \iota_{X_f}\iota_{X_g}\varpi.
\]
It is a Poisson bracket on $C^\infty(M)_\text{Hamiltonian}$. If $M$ is finite dimensional, then $(M,\{\ ,\ \})$ is a Poisson manifold; the corresponding Poisson bivector field is the inverse of the symplectic structure.

\begin{remark}\label{r:Hamiltoniandeg}
The above can be generalized to the case when we drop the nondegeneracy condition. In this case, we say that a vector field $X$ is in the kernel of $\varpi$ if $\iota_X\varpi=0$. We call a function $f$ invariant if $X(f)=0$ for every $X$ in the kernel of $\varpi$. We call, as before, $f$ Hamiltonian if it possesses a Hamiltonian vector field $X_f$: $\iota_{X_f}\varpi = \dd f$. Note that in general the Hamiltonian vector field is no longer unique. A Hamiltonian function is automatically invariant. The action of a Hamiltonian function $f$ on an invariant function $g$ is defined as $\{f,g\} := X_f(g)$, where it does not matter which Hamiltonian vector field we take, and produces an invariant function. If also $g$ is Hamiltonian, then the result is Hamiltonian as well, and $\{\ ,\ \}$  is a Poisson bracket on $C^\infty(M)_\text{Hamiltonian}$.
\end{remark}

\subsubsection{Coisotropic submanifolds and reduction}
\begin{definition}
    A coisotrope in a Poisson algebra $(A, \{\ ,\ \})$ is an ideal $I$ in the commutative algebra $A$ which satisfies $\{I,I\}\subseteq I$: i.e., $I$ is a Lie subalgebra of $(A, \{\ ,\ \})$.
\end{definition}
 Note that $I$ naturally acts on the commutative algebra $A/I$ via the bracket. We also have $(A/I)^I= N(I)/I$, where $N(I):=\{a\in A\ |\ \{a,I\}\subseteq I\}$ is the Lie normalizer of $I$ in $A$. The latter description shows that  $\underline A_I:=(A/I)^I=N(I)/I$ is a Poisson algebra, called the reduction of $A$ w.r.t.\ to $I$.

\begin{definition}
    A coisotropic submanifold of a Poisson manifold $(M,\{\ ,\ \})$ is a submanifold\footnote{We only consider closed submanifolds.} $C$ of $M$ such that its vanishing ideal $I$ is a coisotrope in $(C^\infty(M),\{\ ,\ \})$.
\end{definition}

\begin{remark}
    If $C$ is the zero locus of constraints $\phi_i$, the latter condition is equivalent to having $\{\phi_i,\phi_j\}=f_{ij}^k \phi_k$, where summation over repeated indices is understood and the $f_{ij}^k$s are functions, called the structure functions. Constraints satisfying this condition are called first class in Dirac's terminology.
\end{remark}
% Note that,  %This is what Dirac called first-class constraints.

If $M$ is a finite-dimensional symplectic manifold, then this definition of coisotropic submanifold is equivalent to the geometric one that,  for every $x\in C$, the subspace $T_xC$ be coisotropic, i.e., $(T_xC)^\perp\subseteq T_xC$,\footnote{The orthogonal space is taken w.r.t. the symplectic form, i.e.,
\[
(T_xC)^\perp=\{ v\in T_xM \ |\ \varpi_x(v,w)=0\ \forall w\in T_xC\}.
\]} for every $x\in C$.
The Hamiltonian vector fields of elements of the vanishing ideal span the involutive distribution $(TC)^\perp$. 
\begin{proposition}
    If the quotient space $\underline C$ has a smooth manifold structure for which the projection $\pi\colon C\to\underline C$ is a smooth submersion, then $\underline C$ is endowed with a unique symplectic structure $\underline\varpi$ such that
$\pi^*\underline\varpi=\iota^*\varpi$, where $\iota\colon C\to M$ is the inclusion map. The pair $(\underline C,\underline\varpi)$ is called the symplectic reduction of $C$.
In this case, the resulting Poisson algebra $C^\infty(\underline C)$ is the reduction $\underline A_I$ described above.
\end{proposition}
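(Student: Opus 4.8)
The plan is to exhibit $\underline{C}$ as the leaf space of the characteristic foliation of $\iota^*\varpi$ and to show that this degenerate form descends to a symplectic form on the quotient. First I would pin down the kernel of $\iota^*\varpi$: at $x\in C$, a vector $v\in T_xC$ lies in $\ker(\iota^*\varpi)_x$ precisely when $\varpi_x(v,w)=0$ for all $w\in T_xC$, i.e.\ when $v\in(T_xC)^\perp$; coisotropy gives $(T_xC)^\perp\subseteq T_xC$, so $\ker(\iota^*\varpi)_x=(T_xC)^\perp$, which is exactly the involutive distribution spanned by the Hamiltonian vector fields of elements of $I$ restricted to $C$ (as recalled just above the statement). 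Under the hypothesis that $\underline C$ --- the quotient by this foliation --- is a manifold with $\pi$ a submersion, the fibers of $\pi$ are the leaves of the foliation, hence $\ker d\pi_x=(T_xC)^\perp$ and $d\pi_x$ induces an isomorphism $T_xC/(T_xC)^\perp\xrightarrow{\ \sim\ }T_{\pi(x)}\underline C$.

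Next I would check that $\iota^*\varpi$ is basic for $\pi$, which yields both the existence and the uniqueness of $\underline\varpi$. A form on $C$ is the pullback of a form on $\underline C$ exactly when it is horizontal and invariant along the fibers, and the pulled-back form is unique because $\pi$ is a surjective submersion, so $\pi^*$ is injective on differential forms. Horizontality of $\iota^*\varpi$ is the content of the first step. Invariance follows from Cartan's formula together with $\dd(\iota^*\varpi)=\iota^*\dd\varpi=0$: for $X$ tangent to the fibers, $L_X\iota^*\varpi=\dd\,\iota_X\iota^*\varpi+\iota_X\,\dd(\iota^*\varpi)=0$. This produces a unique $\underline\varpi$ with $\pi^*\underline\varpi=\iota^*\varpi$. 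It is closed, since $\pi^*\dd\underline\varpi=\dd\iota^*\varpi=0$ and $\pi^*$ is injective; and it is (weakly) nondegenerate: if $\underline\varpi_{\pi(x)}(\underline v,\cdot)=0$, lift $\underline v$ to $v\in T_xC$, so that $\varpi_x(v,w)=0$ for every $w\in T_xC$, whence $v\in(T_xC)^\perp=\ker d\pi_x$ and $\underline v=0$. (In the infinite-dimensional case one invokes the standing assumption that $\underline C$ is a manifold in place of a Frobenius argument.)

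For the Poisson-algebra claim I would use that $\pi^*$ identifies $C^\infty(\underline C)$ with the algebra of fiberwise-constant (equivalently, basic) functions on $C$. By the earlier discussion, $C^\infty(C)=C^\infty(M)/I$, the induced action of $I$ on $C^\infty(C)$ is via the bracket, and its invariants are $(C^\infty(M)/I)^I=N(I)/I=\underline A_I$; a class $F+I$ is $I$-invariant precisely when $F\in N(I)$, and geometrically this says $F|_C$ is constant along the characteristic directions, i.e.\ basic. Hence $\pi^*\colon C^\infty(\underline C)\to\underline A_I$ is an isomorphism of commutative algebras, and it remains to match brackets. Given $\underline f,\underline g\in C^\infty(\underline C)$, pick $F,G\in N(I)$ representing $\underline f,\underline g$. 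Since $F\in N(I)$, the Hamiltonian vector field $X_F$ is tangent to $C$, and $X_F|_C$ is $\pi$-projectable: for $h\in I$ one has $[X_F,X_h]=X_{\{F,h\}}$ with $\{F,h\}\in I$, so $X_F|_C$ preserves $(TC)^\perp=\ker d\pi$. Moreover its projection $Y$ satisfies $\pi^*\iota_Y\underline\varpi=\iota_{X_F|_C}\iota^*\varpi=\iota^*\dd F=\pi^*\dd\underline f$, so $Y=X_{\underline f}$. Therefore $\pi^*\{\underline f,\underline g\}_{\underline C}=\pi^*(X_{\underline f}\underline g)=(X_F|_C)(\pi^*\underline g)=\{F,G\}_M|_C$, whose class in $N(I)/I$ is the reduced bracket of $\underline f$ and $\underline g$. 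Thus $\pi^*$ is an isomorphism of Poisson algebras.

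The main obstacle is this last step --- showing that $X_F|_C$ is $\pi$-related to a globally defined vector field on $\underline C$ and that this vector field is the Hamiltonian vector field of $\underline f$ with respect to $\underline\varpi$. It bundles together that $F\in N(I)$ forces $X_F$ to be tangent to $C$, that the characteristic distribution is preserved (so the projection exists), and that the projected field is Hamiltonian for $\underline f$; each of these is a one-line computation, but together they are precisely what makes the algebraic reduction and the symplectic reduction coincide. The existence and uniqueness of $\underline\varpi$, by contrast, are essentially formal once $\pi$ is a surjective submersion and $\iota^*\varpi$ is known to be basic.
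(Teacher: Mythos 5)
Your proof is correct: the identification of $\ker(\iota^*\varpi)$ with $(TC)^\perp$, the basic-form argument for existence and uniqueness of $\underline\varpi$, and the matching of the reduced symplectic bracket with $N(I)/I$ via projectable Hamiltonian vector fields are all sound. The paper states this proposition as recalled background without giving a proof, and your argument is precisely the standard coisotropic-reduction proof it implicitly relies on, so there is nothing to reconcile.
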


If $M$ is an infinite-dimensional symplectic manifold, there are inequivalent ways of defining a coisotropic submanifold. In this paper, we will stick to the algebraic definition. More precisely, we assume that the vanishing ideal $I$ is generated by
its Hamiltonian part $I_\text{Hamiltonian}:=I\cap C^\infty(M)_\text{Hamiltonian}$ and that $I_\text{Hamiltonian}$ is a coisotrope in $C^\infty(M)_\text{Hamiltonian}$.

\begin{remark}
The importance of coisotropic submanifolds in field theory is related to the problem of finding the correct space of initial conditions for the Cauchy problem. Indeed, the coisotropic submanifold $C$ arises as a submanifold of the space of boundary fields with the constraints determined by the Euler--Lagrange equations that do not involve transversal derivatives.
In case this construction arises from the Hamiltonian description associated with a Cauchy surface, the reduced phase space, i.e., the reduction $\underline C$ of 
$C$,
is the correct space of initial conditions for the Cauchy problem. 
\end{remark}

\subsubsection{The graded case: \texorpdfstring{BF$^m$V}{BFmV} structures}
All the above can be extended to the world of graded algebras and graded manifolds (supermanifolds with an additional $\bbZ$\ndash grading on the local coordinates). Note that we assume both a grading and a parity, the latter being responsible for the sign rules. In all the examples in this paper they are related, with the parity being the grading modulo two.

\begin{definition}
    A graded Poisson algebra is a pair $(A, \{\ ,\ \})$ where $A$ is a graded commutative algebra  
and $\{\ ,\ \}$ is a bilinear, graded skew-symmetric operation on $A$ which is a graded derivation w.r.t.\ each argument (graded Leibniz rule) and satisfies the graded Jacobi identity.
\end{definition}
 It is important to notice that the grading of the bracket may be a shifted grading w.r.t.\ the original one. 

An even bracket of degree $0$---the straightforward generalization from the ungraded case---is also known as a BFV bracket. An odd bracket of degree $+1$ is also known as a BV bracket. We will call an odd bracket of degree $-1$ a BF$^2$V bracket.

\begin{definition}
    An $n$-graded symplectic manifold is a pair $(M,\varpi)$ where $M$ is a graded manifold and $\varpi$ is a closed nondegenerate two-form on $M$ of homogenous degree $n$ and parity $n\!\!\mod 2$. It defines a graded Poisson algebra structure on $C^\infty(M)_\text{Hamiltonian}$ with bracket of degree $-n$.
\end{definition}

An additional structure, important for the following, is that of cohomological vector field on a graded manifold $M$.
This is an odd vector field $Q$ of degree $+1$ satisfying $[Q,Q]=0$. Note that $Q$ defines a differential on $C^\infty(M)$. For this reason, the pair $(M,Q)$ is called a differential graded manifold (shortly, a dg manifold).

\begin{definition}
    A dg manifold with a compatible symplectic structure, i.e., with $L_Q\varpi=0$, is called a differential graded symplectic manifold (shortly, a dg symplectic manifold).
\end{definition}
 We will always assume that $Q$ is Hamiltonian, namely, that there is an
$S\in C^\infty(M)_\text{Hamiltonian}$ such that $\iota_Q\varpi = \dd S$ and $\{S,S\}=0$ (the master equation).\footnote{For most choices of $n$, the existence of $S$ is guaranteed and the condition $\{S,S\}=0$ is equivalent to $[Q,Q]=0$.}
If $\varpi$ has degree $n$, then $S$ has degree $m=n+1$. In this case, we call the triple $(M,\varpi,S)$ a BF$^m$V manifold.

\begin{remark}
BV manifolds arise in field theories as a generalization of the BRST formalism to discuss independence of gauge-fixing in the perturbative functional-integral quantization; we will not address this issue in this paper. BFV manifolds are used to give a cohomological description of reduced phase spaces. BF$^2$V manifolds describe Poisson structures (up to homotopy). We will recall these two constructions  in Sections \ref{s:BFV_formalism} and \ref{sec:BF2Vstructure}, respectively.
\end{remark}

\subsubsection{Relaxed and induced structures}\label{s:relind}
The above may be generalized by dropping the master equation, the condition that $\varpi$ is nondegenerate, and the strict relation among $(Q,\varpi,S)$. Namely, we only assume that $\varpi$ is a closed two-form on $M$ of homogenous degree $(m-1)$ and parity $(m-1)\!\!\mod 2$ and that $Q$ is a cohomological vector field: we call this a relaxed BF$^m$V structure.
We define $\Tilde\alpha:=\iota_Q\varpi-\dd S$ and $\Tilde\varpi=\dd\Tilde\alpha$. It turns out that $Q$ and $\Tilde\varpi$ are compatible, i.e., $L_Q\Tilde\varpi=0$. We actually assume the slightly stronger condition
$\iota_Q\Tilde\varpi=\dd\Tilde S$ for some function $\Tilde S$. 
One can also show the useful identity $\frac12\iota_Q\iota_Q\varpi=\Tilde S$, called the modified master equation.
We call the triple $(M,\Tilde\varpi,\Tilde S)$, or any of its partial reductions by an integrable subdistribution of the kernel of $\Tilde\varpi$, a pre--BF$^{m+1}$V manifold. If the whole reduction by the kernel is smooth, it is then a BF$^{m+1}$V manifold as defined above. In this case, we say that the relaxed BF$^m$V structure is $1$-extendable.

\begin{remark}
In the case of field theory, we always assume locality. Namely, $M$ is a space locally modeled on sections, the fields, of a vector bundle over some closed manifold $\Sigma$, and the structures $(Q,\varpi,S)$ are integrals over $\Sigma$ of densities defined, at each point, in terms of jets of the fields. The relaxed structure typically arises when one extends the strict structure to a manifold with boundary,\footnote{\label{f:compact}Typically, we assume compactness. Otherwise, one has to specify appropriate vanishing conditions on the fields.} by taking the same triple $(Q,\varpi,S)$. In this case, the ``error term'' $\Tilde\alpha$ arises by integration by parts and is concentrated on $\partial\Sigma$. Modding out by (part of) the kernel of $\Tilde\varpi$ then yields a (pre--)BF$^{m+1}$V structure depending on jets of the fields restricted to $\partial\Sigma$.
\end{remark}

\subsubsection{The BFV formalism}\label{s:BFV_formalism}
If $(M,\varpi,S)$ is a BFV manifold, then the zeroth cohomology group $H^0_Q(C^\infty(M)_\text{Hamiltonian})$ is a Poisson algebra.\footnote{Recall that $Q$, the Hamiltonian vector field of $S$, is a differential on the algebra of Hamiltonian functions.}
Namely, if $[f]$ and $[g]$ are cohomology classes, we define $\{[f],[g]\}:=[\{f,g\}]$. This Poisson algebra is understood as the algebra of function of a would-be symplectic reduction.

This is justified by the BFV construction. Namely, one starts with a symplectic manifold $(M_0,\varpi_0)$ and a coisotropic submanifold $C$ of $M_0$. One can then associate with it a BFV manifold $(M,\varpi,S)$ that contains $(M_0,\varpi_0)$ as its degree zero part and such that $C$ is recovered as the intersection of $M_0$ with the critical locus of $S$. (This construction works in general if $M_0$ is finite dimensional; in the infinite-dimensional case, it works at least when $C$ is given by global constraints.) For example, if $M$ is finite dimensional and $C$ is locally defined by constraints $\phi_i$, then in local coordinates we have $S=c^i\phi_i + \cdots$, where the $c^i$s are the coordinates of degree $+1$ and the dots are in the ideal generated by the coordinates of degree $-1$.
The dots here have to be added to ensure that the master equation is satisfied.

If $C$ has a smooth reduction $\underline C$, then $H^0_Q(C^\infty(M)_\text{Hamiltonian})$ is isomorphic, as a Poisson algebra, to $C^\infty(\underline C)$. In general, one views $(M,\varpi,S)$ as a good replacement (a cohomological resolution) for the reduction of $C$.

\subsection{\texorpdfstring{$P_\infty$}{P infinity} structures from the \texorpdfstring{BF$^2$V}{BF2V}  formalism}\label{sec:BF2Vstructure}
In this case, $\varpi$ is an odd symplectic form of degree $+1$. 
We start with the finite-dimensional case. One then has that
$(M,\varpi)$ is always symplectomorphic to a shifted cotangent bundle $T^*[1]N$, with canonical symplectic structure, for some graded manifold $N$ (with this notation we mean that the fiber coordinates of $T^*N$ are assigned opposite parity and degree shifted by one w.r.t.\ the natural ones). We call this choice of $N$ a polarization. Note that the Poisson algebra of functions on $T^*[1]N$ can be canonically identified with the algebra of multivector fields on $N$ with the Schouten bracket. The function $S$, of degree $+2$, then corresponds to a linear combination $\pi=\pi_0+\pi_1+\pi_2+\cdots$, where $\pi_i$ is an $i$-vector field of degree $2-i$ on $N$. The master equation $\{S,S\}=0$ corresponds to the equations
\begin{align*}
[\pi_0,\pi_1]&=0,\\
[\pi_0,\pi_2]+\frac12[\pi_1,\pi_1]&=0,\\
[\pi_0,\pi_3]+[\pi_1,\pi_2]&=0,\\
[\pi_0,\pi_4]+[\pi_1,\pi_3]+\frac12[\pi_2,\pi_2]&=0,\\
\dots
\end{align*}

We start from the simpler case when $N$ has only coordinates in degree zero (this is possible only if $M$ has only coordinates in degree zero and one). In this case, $\pi=\pi_2$ and $[\pi_2,\pi_2]=0$, so $\pi$ is a Poisson structure on $N$. Algebraically, we can get the corresponding Poisson algebra as the algebra $C^\infty_0(T^*[1]N)$ of functions on $T^*[1]N$ of degree zero with Poisson bracket $\{f,g\}_2=[[\pi,f],g]$.

In the general case, $\pi$ is called a $P_\infty$ structure on $N$ (this stands for Poisson structure up to coherent homotopies). This structure is called curved if $\pi_0\not=0$.
The $\pi_i$s, applied to the differentials of $i$ functions on $N$, define multibrackets $\{\ \}_i$ on $C^\infty(N)$ 
which in turn define a (curved) L$_\infty$-algebra. Moreover, they are graded derivations w.r.t.\ each argument. The multibrackets may also be defined as derived brackets
\[
\{f_1,\dots,f_i\}_i = [[[[\cdots[\pi_i,f_1],f_2],\dots],f_i] = P[[[[[\cdots[\pi,f_1],f_2],\dots],f_i],
\]
where $P$ is the projection from multivector fields to functions. In particular, we have
\begin{align*}
\{\}_0 &= \pi_0,\\
\{f\}_1 &= \pi_1(f),\\
\{f,g\}_2 &=[[\pi_2,f],g].
\end{align*}

We will call these brackets respectively the nullary, unary and binary operations or, equivalently, the 0-bracket, 1-bracket and 2-bracket.

\subsubsection{Generalizations}\label{s:generalization_Poisson}
The above structure may be generalized as follows. Suppose we have a splitting $\mathfrak{a}=\frp\oplus\frh$ of an odd Poisson algebra $\mathfrak{a}$ (e.g., $C^\infty(M)$) into Poisson subalgebras with $\frh$ abelian (i.e., $\frp\cdot\frp\subseteq\frp$, $\frh\cdot\frh\subseteq\frh$, $\{\frp,\frp\}\subseteq\frp$,
$\{\frh,\frh\}=0$). Let $P$ be the projection $\mathfrak{a}\to\frh$. If $S\in\mathfrak{a}$ satisfies the master equation $\{S,S\}=0$, then the multibrackets
\[
\{f_1,\dots,f_i\}_i := P\{\cdots\{S,f_1\},f_2\},\dots\},f_i\}
\]
make $\frh$ into a $P_\infty$ algebra. The previous case consisted in considering $\mathfrak{a}=C^\infty(T^*[1]N)$ and taking $\frp$ as the multivector fields on $N$ of multivector degree larger than zero and $\frh$ as the functions on $N$; note that, in this case, $\frh$ is maximal as an abelian subalgebra. We call the more general choice of $(\frp,\frh)$ a weak polarization.

\begin{remark}\label{r:Poissondegenerateform}
The algebraic construction makes sense also if $\varpi$ is degenerate. In this case, we consider a splitting, with the above properties, of the $-1$-Poisson algebra of Hamiltonian functions:
$C^\infty_\text{Hamiltonian}(M)=\frp\oplus\frh$.
\end{remark}

\begin{remark}
An important case is when $\varpi$ is degenerate but its kernel has constant rank (i.e., the dimension of the kernel of $\varpi_x$ is the same for all $x\in M$). In this case, one calls it a presymplectic form. Note that the kernel is also involutive. If the quotient space of $M$ by the kernel has a smooth structure, it is then symplectic, so it can be identified with some $T^*[1]N$. We can then take $\frh=p^* C^\infty(N)$, where $p$ denotes the projection $M\to T^*[1]N$.
\end{remark}

\begin{remark}
More generally, we can take the quotient of $M$ by an involutive subdistribution of constant rank of the kernel of $\varpi$. If the quotient $\underline M$ has a smooth structure and $p$ denotes the projection from $M$ to $\underline M$, then we can take $\frh=p^* \frh'$, where $C^\infty_\text{Hamiltonian}(\underline M)=\frp'\oplus\frh'$ is a splitting as above.
\end{remark}

Let us now turn to the infinite-dimensional case. The first remark is that in this case, $M$ is symplectomorphic to a symplectic subbundle of $T^*[1]N$, for some infinite-dimensional graded manifold $N$. The only difference with the finite-dimensional case is that now not every function is Hamiltonian. We can anyway define the derived brackets, as above, on
$C^\infty_\text{Hamiltonian}(N):=C^\infty(N)\cap C^\infty_\text{Hamiltonian}(M)$. The algebraic version for weak polarizations and its extension to the degenerate case works verbatim as above.

\section{Corner structures of field theories}

In this section we consider some illustrating examples of BFV and BF$^2$V structures in field theory (electromagnetism, Yang--Mills theory, Chern--Simons theory, $BF$ theory). In particular, the example of $BF$ theory is preliminary to our discussion of these structures in gravity.

\begin{remark}
From here on we denote the differential on a space of fields by $\delta$, reserving the notation $\dd$ to
the de~Rham differentials on the underlying manifolds.
Furthermore, we will denote with an apex $\partial$ all the quantities with fields defined on $\Sigma$ and with an apex $\partial\partial$ all the quantities with fields defined on $\partial\Sigma$. This notation is chosen in order to make contact with the one used in many previous articles. This is due to the fact that often the BFV theory can be induced from a BV theory when $\Sigma$ is considered as a boundary of a manifold $M$.
\end{remark}

\subsection{Electromagnetism}
To warm up, we start with the simple example of electromagnetism in $d+1$ dimensions. In the Hamiltonian formalism, we then consider a $d$-dimensional Riemannian closed\footnote{Later, we will allow $\Sigma$ to be with boundary, but for simplicity we keep assuming compactness; see also footnote~\ref{f:compact} on page~\pageref{f:compact}.} manifold $(\Sigma,g)$, which for simplicity we assume to be oriented. The fields are the vector potential $\bA$ and the electric field $\bE$ with symplectic structure $\varpi^{\partial}_0=\int_\Sigma \delta \bA\cdot\delta \bE\,\sqrt{\det g}$, where $\cdot$ denotes the inner product defined by the Riemannian metric $g$ and $\sqrt{\det g}$ is the corresponding canonical density.

The constraints are given by the Gauss law $\operatorname{div} \bE = 0$. To implement the BFV formalism, we then have to introduce a ghost $c\in C^\infty(\Sigma)[1]$ and its conjugate momentum $b\in\Omega^d(\Sigma)[-1]$. 
We then have the BFV symplectic form

\[
\varpi^{\partial}=\int_\Sigma (\delta \bA\cdot\delta \bE\,\sqrt{\det g}+\delta b\,\delta c)
\]
and the BFV action
\[
S^{\partial} = \int_\Sigma c\,\operatorname{div} \bE\,\sqrt{\det g}.
\]
The variation of $S^{\partial}$ is
\[
\delta S^{\partial} = \int_\Sigma (\delta c\,\operatorname{div} \bE - c\,\operatorname{div} \delta \bE)\,\sqrt{\det g} =
\int_\Sigma (\delta c\,\operatorname{div} \bE + \operatorname{grad}c\cdot \delta\bE) \,\sqrt{\det g},
\]
which shows that $S^{\partial}$ is Hamiltonian, $\iota_{Q^{\partial}}\varpi^{\partial}=\delta S^{\partial}$, with $Q^{\partial}$ given by
\[
Q^{\partial}\bA = \operatorname{grad}c,\quad Q^{\partial}\bE \sqrt{\det g}=0,\quad Q^{\partial} b= \operatorname{div} \bE,\quad Q^{\partial} c=0.
\]
One can then see that the cohomology in degree zero consists of functionals of $\bA$ and $\bE$, modulo the ideal generated by $\operatorname{div} \bE$, that are gauge invariant. This is correctly the algebra of functions of the reduction of $C=\{(\bA,\bE)\ |\ \operatorname{div} \bE = 0\}$.

If $\Sigma$ has a boundary, we instead get
\[
\delta S^{\partial}
=
\int_\Sigma (\delta c\,\operatorname{div} \bE + \operatorname{grad}c\cdot \bE) \,\sqrt{\det g} + \int_{\de\Sigma} c\,\delta E_n \,\sqrt{\det g_{|_{\de\Sigma}}},
\]
where $E_n$ is the transversal component of $\bE$. This fits with the BFV-BF$^2$V prescription $\iota_{Q^{\partial}}\varpi^{\partial}=\delta S^{\partial}+\Tilde\alpha^{\partial}$ with
$\Tilde\alpha^{\partial}=\int_{\de\Sigma} c\,\delta E_n \,\sqrt{\det g_{|_{\de\Sigma}}}$. As $\Tilde\varpi^{\partial}=\delta\Tilde\alpha^{\partial}$ only depends on $c$ and on $E_n$ on $\de\Sigma$, we get the reduced space of fields 
$\calF_{\de\Sigma}=\{(c,E_n)\in C^\infty(\de\Sigma)[1]\oplus C^\infty(\de\Sigma)\}$ with BF$^2$V symplectic structure
\[
\varpi^{\partial\partial} = \int_{\de\Sigma} \delta c\,\delta E_n \,\sqrt{\det g_{|_{\de\Sigma}}}.
\]
As $Q^{\partial}$ is zero on the $c$ and $E$ coordinates, we get $Q^{\partial\partial}=0$ and $S^{\partial\partial}=0$. Therefore, we get a trivial structure.

We now make a change of coordinates that will make the other examples we want to describe easier to write. Namely, instead of the vector field $\bA$ we consider the corresponding $1$-form $A$, via the metric $g$, and instead of the vector field 
$\bE$ we consider the $(d-1)$-form $B=\iota_{\bE}\sqrt{\det g}$. With these new notations we get
\[
\varpi^{\partial}=\int_\Sigma (\delta B\,\delta A+\delta b\,\delta c),
\]
where we omitted the wedge product symbol from the notation, 
and 
\[
S^{\partial} = \int_\Sigma c\,\dd B.
\]
Note that any reference to the metric $g$ has disappeared. Repeating the above computations, we now get
\[
Q^{\partial}A = \dd c,\quad Q^{\partial}B=0,\quad Q^{\partial}b= \dd B,\quad Q^{\partial}c=0.
\]
If $\Sigma$ has a boundary, we get $\calF_{\de\Sigma}=\{(c,B)\in C^\infty(\de\Sigma)[1]\oplus\Omega^{d-1}(\de\Sigma)\}$ with canonical symplectic structure $\varpi^{\partial\partial}=\int_{\de\Sigma}\delta c\,\delta B$ and with  $Q^{\partial\partial}=0$ and $S^{\partial\partial}=0$. Note that, even though the corner operator $Q^{\partial\partial}$ is trivial, the corner theory still has a nontrivial symplectic space of fields. 

\subsection{Yang--Mills theory}
In the nonabelian case, the fields $A$, $B$, $b$, $c$ are $\frg$-valued,\footnote{For simplicity, we consider YM theory based on a trivial principal bundle over $\Sigma$.} where
 $\frg$ is a Lie algebra endowed with a nondegenerate, invariant inner product $\langle\ ,\ \rangle$. 
 The Gauss law is $\dd_AB=0$, where $\dd_A$ denotes the covariant derivative.
 The BFV symplectic form now reads
 \[
\varpi^{\partial}=\int_\Sigma (\langle\delta B,\,\delta A\rangle+\langle\delta b,\,\delta c\rangle).
\]
As this  notation is a bit heavy, we will omit the inner product $\langle\ ,\ \rangle$ throughout, so we simply write
$\varpi^{\partial}=\int_\Sigma (\delta B\,\delta A+\delta b\,\delta c)$ (one may think of the integral sign to contain the inner product as well, or one may think the inner product to be the Killing form and the integral to incorporate the trace sign). By the same convention, the BFV action reads
\[
S^{\partial} = \int_\Sigma \left( c\,\dd_A B +\frac12 b[c,c]\right),
\]
where the BRST term, linear in $b$, has now appeared. We can also easily calculate
\[
Q^{\partial}A = \dd_A c,\quad Q^{\partial}B=[c,B],\quad Q^{\partial}b=\dd_AB+[c,b],\quad Q^{\partial}c=\frac12 [c,c].
\]

If $\Sigma$ has a boundary, we get $\calF_{\de\Sigma}=\{(c,B)\in (C^\infty(\de\Sigma)[1]\oplus\Omega^{d-1}(\de\Sigma))\otimes\frg\}$ with canonical symplectic structure $\varpi^{\partial\partial}=\int_{\de\Sigma}\delta c\,\delta B$ and with
$Q^{\partial\partial} B=[c,B]$ and $Q^{\partial\partial} c=\frac12 [c,c]$, which is the Hamiltonian vector field of
\[
S^{\partial\partial} = \int_{\partial\Sigma} \frac12 B[c,c].
\]
Now, the BF$^2$V structure is no longer trivial.

If we regard $\calF_{\de\Sigma}$ as $T^*[1](\Omega^{d-1}(\de\Sigma)\otimes\frg)$, we then interpret $S^{\partial\partial}$ as the Poisson bivector field
\[
\pi_2 = \int_{\partial\Sigma} \frac12 B\left[\frac\delta{\delta B},\frac\delta{\delta B}\right].
\]
As this is linear, it can actually be viewed (modulo subtleties due to dualization) as the Poisson structure on $\calG^*$,
where $\calG$ is the Lie algebra $C^\infty(\de\Sigma)\otimes\frg$ with pointwise Lie bracket induced by $\frg$. (We have identified $\frg^*$ with $\frg$ using the inner product and we have regarded $\Omega^{d-1}(\de\Sigma)$ as the dual space of $C^\infty(\de\Sigma)$.) For example, on linear functionals we have
\[
\left\{\int_{\de\Sigma} fB,\int_{\de\Sigma} gB\right\}_2=\int_{\de\Sigma} [f,g]B.
\]

The other natural polarization consists in realizing $\calF_{\de\Sigma}$ as $T^*[1](C^\infty(\de\Sigma)[1]\otimes\frg)$. In this case, we interpret $S^{\partial\partial}$ as the cohomological vector field
\[
\pi_1 = \int_{\partial\Sigma} \frac12 [c,c]\frac\delta{\delta c},
\]
which gives $C^\infty(\de\Sigma)[1]\otimes\frg$ the structure of a $P_\infty$-manifold. With the notations of the previous paragraph, this manifold is the same as $\calG[1]$. Its algebra of functions is the exterior algebra $\Lambda\calG^*$, regarded as a graded commutative algebra, and $\pi_1$ corresponds to the Chevalley--Eilenberg differential.

\begin{remark}
Note that for any $B_0\in \Omega^{d-1}(\de\Sigma)$ we can define a polarization choosing the $B_0$-section of
 $T^*[1](C^\infty(\de\Sigma)[1]\otimes\frg)$ instead of the zero section. In this case, in addition to $\pi_1$ as above, we also get a nontrivial $\pi_0=\int_{\de\Sigma} \frac12 B_0[c,c]$, so we have a curved $P_\infty$ structure.

\end{remark}

\subsection{Chern--Simons theory}
In this case, $\Sigma$ is two-dimensional and the field is a $\frg$-connection one-form $A$, 
where $\frg$ again is a Lie algebra endowed with a nondegenerate, invariant inner product.\footnote{\label{f-triv}For simplicity, we use notations as in the case of a trivial principal bundle. For the general case, the Lie-algebra-valued forms are simply replaced by forms taking value in sections of the adjoint bundle.} The space of fields is endowed with the Atiyah--Bott symplectic form $\varpi^{\partial}_0=\frac12\int_\Sigma \delta A\,\delta A$ and the constraint is that the connection be flat. Therefore, we introduce the BFV structure
\[
\varpi^{\partial}=\int_\Sigma \left(\frac12\delta A\,\delta A+\delta b\,\delta c\right),
\]
\[
S^{\partial} = \int_\Sigma \left(c\,F_A + \frac12 b[c,c]\right),
\]
where $F_A=\dd A+\frac12[A,A]$ is the curvature of $A$.
We now get
\[
Q^{\partial}A = \dd_A c,\quad Q^{\partial}b=F_A+[c,b],\quad Q^{\partial}c=\frac12 [c,c].
\]

If $\Sigma$ has a boundary, we get $\calF_{\de\Sigma}=\{(c,A)\in C^\infty(\de\Sigma)[1]\otimes\frg\oplus\calA(\de\Sigma)\}$, where $\calA$ denotes the space of connection one-forms, with canonical symplectic structure $\varpi^{\partial\partial}=\int_{\de\Sigma}\delta c\,\delta A$ and with
$Q^{\partial\partial} A=\dd_A c$ and $Q^{\partial\partial} c=\frac12 [c,c]$, which is the Hamiltonian vector field of
\[
S^{\partial\partial} = \int_{\partial\Sigma} \frac12 c\,\dd_Ac
=\int_{\partial\Sigma} \left(\frac12 c\,\dd_{A_0}c+\frac12 c[a,c]
\right)
,
\]
where $A_0$ is a reference connection and $a=A-A_0$.\footnote{The introduction of the reference connection $A_0$ has the goal of decoupling the term of the corner action containing a derivative from the one depending on the connection, still maintaining a global description of the theory. In particular, this allows to  identify $\pi_0$ and $\pi_1$ in the second polarization chosen. A different choice of $A_0$ does not change the form of the $P_{\infty}$-structure.}

If we regard $\calF_{\de\Sigma}$ as $T^*[1]\calA(\de\Sigma)$, we then interpret $S^{\partial\partial}$ as the Poisson bivector field
\[
\pi_2 
=\int_{\partial\Sigma} \left(\frac12 \frac\delta{\delta a}\dd_{A_0} \frac\delta{\delta a}
+\frac12 a\left[\frac\delta{\delta a},\frac\delta{\delta a}\right]
\right)
.
\]
In this case, we have an affine Poisson structure which
can be viewed (modulo subtleties due to dualization) as the Poisson structure on $\calG^*$ associated with the central extension of $\calG=C^\infty(\de\Sigma)\otimes\frg$ with pointwise Lie bracket induced by that on $\frg$ by the cocycle
$c(f,g)=\int_{\de\Sigma} f \dd_{A_0} g$. For example, on linear functionals we have
\[
\left\{\int_{\de\Sigma} fa,\int_{\de\Sigma} ga\right\}_2=\int_{\de\Sigma} (f \dd_{A_0} g + [f,g]a).
\]

The other natural polarization consists in realizing $(\calF_{\de\Sigma})_{A_0}$ as $T^*[1](C^\infty(\de\Sigma)[1]\otimes\frg)$. In this case, we interpret $S^{\partial\partial}$ as the inhomogeneous multivector field $\pi=\pi_0+\pi_1$ with 
$\pi_0=\int_{\partial\Sigma} \frac12 c\,\dd_{A_0}c$ and
\[
\pi_1 = \int_{\partial\Sigma} \frac12 [c,c]\frac\delta{\delta c},
\]
which gives $C^\infty(\de\Sigma)[1]\otimes\frg$ the structure of a curved $P_\infty$-manifold. Note that the curving $\pi_0$ is different from zero for every choice of $A_0$.

\begin{remark}
Chern--Simons theory is an example of an AKSZ theory \cite{AKSZ}. In particular, this means that we can write the BF$^n$V structures in a compact way  using superfields. For the cases at hand, we set
$\Tilde A = c + A + b$
in the BFV case and $\Tilde A = c + A$ in the BF$^2$V case. The symplectic forms and actions on $\Sigma$ and on $\partial \Sigma$ now simply read
$\frac12\int_T \delta\Tilde A\delta\Tilde A$ and $\int_T\left(\frac12 \Tilde A\dd\Tilde A+\frac16\Tilde A[\Tilde A,\Tilde A]\right)$, by specializing $T$ to $\Sigma$ or to $\de\Sigma$, respectively.\footnote{Since $\Sigma$ and $\de\Sigma$ have different dimensions the integral will pick different summands of the inhomegeneous field $\Tilde A$, e.g. we will have 
\begin{align*}
  \varpi_{\Sigma}=  \frac12\int_{\Sigma} \delta A\delta A + \delta c\delta b \text{ and } \varpi_{\de \Sigma}= \frac12\int_{\de \Sigma} \delta c\delta A.
\end{align*}}
\end{remark}

\subsection{\texorpdfstring{$BF$}{BF} theory}
In $BF$ theory in $d+1$ dimensions, there are two fields: a $\frg$-connection $A$ and a $\frg$-valued $(d-1)$-form $B$.
Here, $\frg$ is, as before, a Lie algebra endowed with a nondegenerate, invariant inner product.\footnote{See also footnote~\ref{f-triv}.} The symplectic form, for a $d$-manifold $\Sigma$,
is $\varpi^{\partial}_0=\int_\Sigma \delta B\,\delta A$ and the constraints are
\[
\dd_AB=0\qquad\text{and}\qquad F_A+ \Lambda P(B) = 0,
\]
where $\Lambda$ is a constant and $P$ an invariant polynomial of degree $k$ such that $k(d-1)=2$.\footnote{The term $\Lambda P(B)$ is called the cosmological term. If it is absent, one speaks of pure $BF$ theory. In pure $BF$ theory, one does not need the invariant inner product on $\frg$, as one can take $B$ as $\frg^*$-valued.} Note that $P$ may be nontrivial only for $d=2,3$.

For $d=1$, for dimensional reasons the only nontrivial constraint is $\dd_AB=0$, so, in this case, the BFV structure is the same as in the case of Yang--Mills in $1+1$ dimensions.

For $d=2$, $BF$ theory is actually a particular case of Chern--Simons theory with a Lie algebra structure, depending on $\Lambda$, on the vector space $\frg\oplus\frg$. If $\frg=\frso(1,2)$ (or $\frso(3)$) and $B$, viewed as a $3\times3$ tensor field, is nondegenerate, it is $2+1$ (Euclidean) gravity with cosmological constant $\Lambda$ in the coframe formulation.

In the rest of the section, we focus on the case $d=3$, which, for $\frg=\frso(1,3)$ (or $\frso(4)$), is related to   $3+1$ (Euclidean) gravity with cosmological constant $\Lambda$ in the coframe formulation. For definiteness, we write the constraints as
\[
\dd_AB=0\qquad\text{and}\qquad F_A+ \Lambda B = 0.
\]
In the BFV formalism, we then need two kinds of ghosts to implement them. The beginning of the BFV action is
\[
S^{\partial}=\int_\Sigma (c\,\dd_A B+\tau\, (F_A+ \Lambda B))+\cdots,
\]
with $c\in\Omega^0(\Sigma)[1]\otimes\frg$ and $\tau\in\Omega^1(\Sigma)[1]\otimes\frg$. 

Note that the $\tau$-dependent Hamiltonian vector field acts on $A$ as $\Lambda\tau$ and on $B$ as $\dd_A\tau$.
Therefore, if $\tau$ is of the form $\dd_A\phi$ for some $0$-form $\phi$, it acts on $A$ as a gauge transformation.  Moreover, it acts on $B$ as $[F_A,\phi]$. If $F_A+\Lambda B=0$, which is what the constraint imposes, it acts also on $B$ as a gauge transformation. This leads to a redundancy to the  $c$-dependent Hamiltonian vector field. To avoid it, one has to mod out $\tau$ by such transformations. If the momentum for $\tau$ is denoted $B^+$, then we add the term $\int_\Sigma \phi\,\dd_AB^+$ to the BFV action, for its Hamiltonian vector field acts on $\tau$ precisely as $\dd_A\phi$.
Note that $\phi$ is now considered as a new ghost (actually a ghost-for-ghost), which is assigned even parity and degree equal to two. It also comes with its own momentum.

As $BF$ theory is an AKSZ theory, we will use the notation standard in that context. Namely, we group the fields into superfields,
\begin{align*}
\Tilde A &= c+A+B^++\tau^+,\\
\Tilde B &= \phi + \tau + B + A^+,
\end{align*}
where the fields appear in decreasing order w.r.t.\ degree and in increasing order w.r.t.\ form degree. The BFV symplectic form is
\[
\varpi^{\partial}=\int_\Sigma\delta\Tilde B\,\delta\Tilde A
=\int_\Sigma(\delta A^+\,\delta c + \delta B\,\delta A + \delta\tau\,\delta B^+ +\delta\phi\,\delta\tau^+),
\]
from which it is clear that our notation for the momenta of $c$, $\tau$, and $\phi$ are $A^+$, $B^+$, and $\tau^+$, respectively. The BFV action reads
\begin{align*}
S^{\partial}&=\int_\Sigma\left(\Tilde BF_{\Tilde A}+\frac12\Lambda\Tilde B\Tilde B\right)\\
&=\int_\Sigma\left(\frac12A^+[c,c]+B\,\dd_Ac+\tau\,(F_A+[c,B^+])+\phi\,(\dd_AB^++[c,\tau^+])+\Lambda\,(B\tau+A^+\phi)
\right),
\end{align*}
from which we get
\begin{align*}
    Q^{\partial} c&=\frac12[c,c]+\Lambda\phi,&\quad Q^{\partial} A&=\dd_Ac+\Lambda\tau,\\
Q^{\partial} B^+&=F_A+\Lambda B+ [c,B^+],&\quad Q^{\partial} \tau^+&=\dd_AB^++[c,\tau^+]+\Lambda A^+,
\end{align*}
and
\begin{align*}
    Q^{\partial}\phi&=[c,\phi], & Q^{\partial}\tau&=\dd_A\phi+[c,\tau],\\ Q^{\partial} B&=\dd_A\tau+[c,B]+[\phi,B^+],& Q^{\partial} A^+&=\dd_AB+[c,A^+]+[B^+,\tau]+[\tau^+,\phi].
\end{align*}

If $\Sigma$ has a boundary, we get that the coordinates of $\calF_{\de\Sigma}$ can also be grouped into superfields
\begin{align*}
\Tilde A &= c+A+B^+,\\
\Tilde B &= \phi + \tau + B.
\end{align*}
The BF$^2$V symplectic form turns out to be
\[
\varpi^{\partial\partial}=\int_{\partial\Sigma}\delta\Tilde B\,\delta\Tilde A
=\int_{\partial\Sigma}(\delta B\,\delta c + \delta \tau\,\delta A + \delta\phi\,\delta B^+ ).
\]
{}From
\begin{align*}
Q^{\partial\partial} c&=\frac12[c,c]+\Lambda\phi,& Q^{\partial\partial} A&=\dd_Ac+\Lambda\tau,& Q^{\partial\partial} B^+&=F_A+\Lambda B+ [c,B^+],\\
Q^{\partial\partial} \phi&=[c,\phi],& Q^{\partial\partial} \tau&=\dd_A\phi+[c,\tau],& Q^{\partial\partial} B&=\dd_A\tau+[c,B]+[\phi,B^+],\\
\end{align*}
we get the BF$^2$V action
\begin{align*}
&S^{\partial\partial}=\int_{\partial\Sigma}\left(\Tilde BF_{\Tilde A}+\frac12\Lambda\Tilde B\Tilde B\right)\\
&=\int_{\partial\Sigma}\left(\frac12B[c,c]+\tau\,\dd_Ac+\phi\,(F_A+[c,B^+])+\Lambda\left(\frac12\tau\tau+B\phi\right)
\right)\\
&=\int_{\partial\Sigma}\left(\frac12B[c,c]+\tau\,(\dd_{A_0}c+[a,c])+\phi\,\left(F_{A_0}+\dd_{A_0}a+\frac12[a,a]+[c,B^+]\right)+\Lambda\left(\frac12\tau\tau+B\phi\right)
\right)
\end{align*}
where $A_0$ is a reference connection and $a=A-A_0$.

\begin{remark}
Note that even    in the abelian case the corner action is nontrivial.
\end{remark}

One natural polarization
 consists in realizing $\calF_{\de\Sigma}$ as the shifted cotangent bundle of the space
$\calN$ with coordinates $A$, $B$, and $B^+$, by choosing $\{c=\phi=\tau=0\}$ as the reference Lagrangian submanifold. This corresponds to having $\pi=\pi_1+\pi_2$ with
\begin{align*}
\pi_1 &= \int_{\de\Sigma} (F_A+\Lambda B)\frac\delta{\delta B^+},\\
\pi_2 &= \int_{\de\Sigma}\left(\frac12B\left[\frac\delta{\delta B},\frac\delta{\delta B}\right]
+\frac\delta{\delta a}\dd_{A_0}\frac\delta{\delta B} + a\left[\frac\delta{\delta a},\frac\delta{\delta B}\right]
+B^+\left[\frac\delta{\delta B^+},\frac\delta{\delta B}\right] + \frac12\Lambda\frac\delta{\delta a}\frac\delta{\delta a}
\right).
\end{align*}
In other words, we split functions on $\calF_{\de\Sigma}$ as $\frp\oplus\frh$ with $\frp$ the subalgebra of functions of positive degree and $\frh$ the subalgebra of functions of nonpositive degree, and the construction turns $\frh$ into a differential graded Poisson algebra. The degree zero part $\frh_0$, consisting of functions on $\calA(\de\Sigma)\oplus\Omega^2(\de\Sigma)\otimes\frg\ni(A,B)$, is a Poisson subalgebra. Actually, we may view the affine Poisson structure on $\calA(\de\Sigma)\oplus\Omega^2(\de\Sigma)\otimes\frg=(A_0+\Omega^1(\de\Sigma)\otimes\frg)\oplus\Omega^2(\de\Sigma)$ as the one on the dual $\calG^*$ associated with the central extension of $\calG=(\Omega^1(\de\Sigma)\oplus\Omega^0(\de\Sigma))\otimes\frg$ with pointwise Lie bracket induced by that on the semidirect sum $\frg\rtimes_{\operatorname{ad}}\frg$ by the cocycle 
$c(\alpha\oplus f,\beta\oplus g)=\int_{\de\Sigma}\left(\alpha\dd_{A_0}g- \beta\dd_{A_0} f+\Lambda\alpha\beta\right)$. For example, on linear functionals we have
\begin{align*}
\left\{\int_{\de\Sigma} \alpha a,\int_{\de\Sigma} \beta a\right\}_2&=\Lambda\int_{\de\Sigma} \alpha\beta,\\
\left\{\int_{\de\Sigma} \alpha a,\int_{\de\Sigma} fB\right\}_2&=\int_{\de\Sigma} (\alpha \dd_{A_0} f + [\alpha,f]a),\\
\left\{\int_{\de\Sigma} fB,\int_{\de\Sigma} gB\right\}_2&=\int_{\de\Sigma} [f,g]B.
\end{align*}
The degree-zero $\pi_1$-cohomology is the quotient of $\frh_0$ by the ideal generated by $F_A+\Lambda B$. Geometrically, this corresponds to restricting the above Poisson structure to the Poisson submanifold
$\{(A,B)\ |\ F_A+\Lambda B=0\}$.

Another natural polarization consists in viewing $\calF_{\de\Sigma}$ as the shifted cotangent bundle of the space $\Tilde\calA$ with coordinates $c$, $A$, and $B^+$, by choosing $\{\Tilde B=0\}$ as the reference Lagrangian submanifold. This corresponds to having $\pi=\pi_1+\pi_2$ with
\begin{align*}
\pi_1 &= \int_{\de\Sigma}\left(\frac12[c,c]\frac\delta{\delta c} + \dd_Ac\frac\delta{\delta A}+(F_A+[c,B^+])\frac\delta{\delta B^+}\right),\\
\pi_2 &=\Lambda\int_{\de\Sigma}\left(\frac12\frac\delta{\delta A}\frac\delta{\delta A}+\frac\delta{\delta c} \frac\delta{\delta B^+}\right).
\end{align*}
In particular, on $C^\infty(\Tilde\calA)$ we have a differential defined by
\[
\pi_1c=\frac12[c,c],\quad \pi_1A=\dd_Ac,\quad \pi_1B^+=F_A+ [c,B^+].
\]
If $\Lambda\not=0$, we also have a constant, nondegenerate Poisson bracket.

One last interesting polarization, which turns out to be important for the rest of this paper, consists instead in viewing $\calF_{\de\Sigma}$ as the shifted cotangent bundle of the space $\Tilde\calB$ with coordinates $\phi$, $\tau$, and $B$, by choosing $\{\Tilde A=A_0\}$ as the reference Lagrangian submanifold. In this case, we have $\pi=\pi_0+\pi_1+\pi_2$ with
\begin{align*}
\pi_0 &= \int_{\de\Sigma}\left(\phi F_{A_0} +\Lambda\,\left(\frac12\tau\tau+B\phi
\right)
\right),\\
\pi_1 &= \int_{\de\Sigma}\left(\dd_{A_0}\tau\frac\delta{\delta B}+\dd_{A_0}\phi\frac\delta{\delta \tau}
\right),\\
\pi_2 &= \int_{\de\Sigma}\left(\frac12B\left[\frac\delta{\delta B},\frac\delta{\delta B}\right]
+\tau\left[\frac\delta{\delta \tau},\frac\delta{\delta B}\right]
+\frac12\phi\left[\frac\delta{\delta \tau},\frac\delta{\delta \tau}\right]
+\phi\left[\frac\delta{\delta \phi},\frac\delta{\delta B}\right]
\right).
\end{align*}
This makes $C^\infty(\Tilde\calB)$ into a curved $P_\infty$ algebra. If $\Lambda=0$, it can be made flat by choosing the reference connection $A_0$ to be flat. It is useful, for further reference, to observe that there is a $P_\infty$ subalgebra generated by the following linear local observables:
\begin{align}\label{e:UVW_BF}
    J_\alpha = \int_{\de\Sigma}\alpha B,\quad
    M_\beta = \int_{\de\Sigma}\beta \tau,\quad
    K_\gamma = \int_{\de\Sigma}\gamma \phi,
\end{align}
where $\alpha$, $\beta$, $\gamma$ are $\frg$-valued $0$-, $1$-, and $2-$forms, respectively. We have
\begin{gather*}
\{\}_0 = \int_{\de\Sigma}\left(\phi F_{A_0} +\Lambda\,\left(\frac12\tau\tau+B\phi
\right)\right)\\
\{J_\alpha\}_1=M_{\dd_{A_0}\alpha},\quad
\{M_\beta\}_1=K_{\dd_{A_0}\beta},\quad
\{K_\gamma\}_1=0,\\
\{J_\alpha,J_{\Tilde\alpha}\}_2=J_{[\alpha,\Tilde\alpha]},\quad
\{J_\alpha,M_\beta\}_2=M_{[\alpha,\beta]},\quad
\{J_\alpha,K_\gamma\}_2=K_{[\alpha,\gamma]},\\
\{M_\beta,M_{\Tilde\beta}\}_2=K_{[\beta,\Tilde\beta]},\quad
\{M_\beta,K_\gamma\}_2=0,\quad
\{K_\gamma,K_{\Tilde\gamma}\}_2=0.
\end{gather*}
Also note that $\{\{\}_0\}_1=0$, that $\{M_\beta,\{\}_0\}_2=0=\{K_\gamma,\{\}_0\}_2$, that $\{\{M_\beta\}_1\}_1=0=\{\{K_\gamma\}_1\}_1$, and that $\{\{J_\alpha\}_1\}_1=\{J_\alpha,\{\}_0\}_2$. Observe that for $\Lambda=0$ we can also write $\{\{J_\alpha\}_1\}_1=K_{[F_{A_0},\alpha]}$.
It is also instructive to compute the above expressions using the derived brackets corresponding to the splitting with $\frh=C^\infty(\calB)$ and $\frp$ the ideal in $C^\infty(\calF_{\de\Sigma})$ generated by $C^\infty(\calA-A_0)$.
In this case, the projection $P\colon C^\infty(\calF_{\de\Sigma})\to C^\infty(\calB)$ simply consists in setting $A$ equal to $A_0$ and $c$ and $B^+$ to zero.
We see that $\{\}_0=PS^{\partial\partial}$. We can also, e.g., compute
\[
\{J_\alpha\}_1=PQ^{\partial\partial} J_\alpha=P\int_{\de\Sigma}\alpha(\dd_A\tau+[c,B]+[\phi,B^+])=
\int_{\de\Sigma}\alpha\dd_{A_0}\tau=M_{\dd_{A_0}\alpha}.
\]
Similarly, we get
\[
\{J_\alpha,M_\beta\}_2=P\{J_\alpha,Q^{\partial\partial} M_\beta\}
=
P\left\{\int_{\de\Sigma}\alpha B,\int_{\de\Sigma}\beta (\dd_A\phi+[c,\tau])\right\}
=
P\int_{\de\Sigma}[\alpha,\beta]\tau=M_{[\alpha,\beta]}.
\]

Note that, when $\Lambda=0$, the above algebra closes also under the nullary operation, since we can write
\[
\{\}_0 = K_{F_{A_0}}.
\]
Otherwise, we have to add more generators. First of all, we introduce
\[
C_\mu= \int_{\de\Sigma}\mu\,\left(\frac12\tau\tau+B\phi
\right),
\]
where $\mu$ is a function,
so that we have
\[
\{\}_0 = K_{F_{A_0}}+C_\Lambda,
\]
where we view $\Lambda$ as a constant function. The algebra now closes as long as $C_\mu$ is defined for constant functions $\mu$ only. 

It is however possible, and natural, to extend the algebra allowing for arbitrary functions $\mu$. In this case, we have to introduce
\begin{align*}
    D_\nu &= \int_{\de\Sigma}\nu\tau\phi,\\
    E_\rho &= \frac12\int_{\de\Sigma}\rho \phi^2.
\end{align*}
It can be readily verified that the 2-brackets of $C$, $D$, and $E$ among themselves or with $J$, $M$, and $K$ all vanish. As for the unary operations, we have
\[
\{C_\mu\}_1=D_{\dd\mu},\qquad\{D_\nu\}_1=E_{\dd\nu},\qquad \{E_\rho\}_1=0.
\]

\section{Boundary structure and BFV data for Palatini--Cartan theory}\label{s:BFVPC}

    The starting point for the construction of the BF$^2$V structure is the BFV boundary structure. In the Palatini--Cartan formalism, this is described in \cite{CCS2020}.
    
We recall here the relevant quantities of this construction.
We consider a four-dimensional closed, oriented\footnote{Orientability is not really necessary, see \cite{CCS2020}, but we assume it here for simplicity. We also assume compactness to avoid discussing vanishing conditions on the fields; see also footnote~\ref{f:compact} on page~\pageref{f:compact}. In the second part of the discussion, $M$ will be allowed to have a boundary $\Sigma$, which later will also be allowed to have a boundary, so $M$ will actually be a manifold with corners.}  smooth manifold $M$ together with a reference Lorentzian structure
so that we can reduce the frame bundle to an $SO(3,1)$-principal bundle $P \rightarrow M$. 
We denote by $\mathcal{V}$ the associated vector bundle by the standard representation. Each fibre of $\mathcal{V}$ is isomorphic to a four-dimensional vector space $V$ with a Lorentzian inner product $\eta$ on it. The inner product allows the identification $\mathfrak{so}(3,1) \cong \bigwedge^2 {V}$. Let now $\Sigma=\partial M$ be the boundary of $M$ and denote with $\mathcal{V}_\Sigma$ the restriction $\mathcal{V}|_{\Sigma}$. We define the following shorthand notation:
\begin{align*}
     \Omega_{\partial}^{i,j}:= \Omega^i\left(\Sigma, \textstyle{\bigwedge^j} \mathcal{V}_{\Sigma}\right).
\end{align*}
\begin{remark}
Throughout the article, we will refer to the local dimensions of the spaces $\Omega^{i,j}$ as the number of degrees of freedom of the space.
Note that this dimension is also the same as their rank as $C^\infty$ modules. %  and as the dimension of their typical fiber. 
\end{remark}
On $\Omega_{\partial}^{i,j}$, we also define the following maps
\begin{align*}
    W_{\partial}^{(i,j)}\colon \Omega_{\partial}^{i,j}  & \longrightarrow  \Omega_{\partial}^{i,j}\\
    X  & \longmapsto  X \wedge e|_{\Sigma}.
\end{align*}
Usually, we will omit writing the restriction of $e$ to the manifold $\Sigma$. The properties of these maps are collected in Appendix \ref{s:appendix_notation}.

{

 We assume $\mathcal{V}_\Sigma$ to be isomorphic to $T\Sigma\oplus\underline{\bbR}$, as is the case if we think of it as the restriction to the boundary of a vector bundle isomorphic to the tangent bundle of the bulk, and we take a nowhere vanishing section $\epsilon_n$ of the summand $\underline{\bbR}$. We then define the space $\Omega_{\epsilon_n}^1(\Sigma, \mathcal{V}_\Sigma)$ 
 %is then correctly defined\footnote{Note that this space now depends on the choice of $\epsilon_n$, however, we will not explicitly change the notation, in order to keep it clean.} 
 to consist of bundle maps $e\colon T\Sigma\to\mathcal{V}_\Sigma$ such that the three components of $e$ together with $\epsilon_n$ form a basis. Equivalently, we may require $eee\epsilon_n$ to be different from zero everywhere.\footnote{As already noted in \cite{CCS2020}, the results are independent on the choice of $\epsilon_n$. In particular, this is clear if $\Sigma$ is spacelike, since the space $\Omega_{\epsilon_n}^1(\Sigma, \mathcal{V}_\Sigma)$ of space-like vectors does not depend on the choice of a specific time-like $\epsilon_n$. Note that in \cite{CCS2020} the space here denoted by $\Omega_{\epsilon_n}^1(\Sigma, \mathcal{V}_\Sigma)$ was denoted by  $\Omega_\text{nd}^1(\Sigma, \mathcal{V}_\Sigma)$.}

As a consequence of this, the field $e$ together with $\epsilon_n$ defines an isomorphism $T\Sigma\oplus\underline{\bbR}\to\mathcal{V}_\Sigma$. Denoting by $f\colon \mathcal{V}_\Sigma\to T\Sigma\oplus\underline{\bbR} $ its inverse and by $\pi_{T\Sigma}$ the projection $T\Sigma\oplus\underline{\bbR}\to T\Sigma$, we have a map
\begin{align}\label{e:hatmap}
\widehat{ \bullet}\ \colon
\begin{array}[t]{ccc}
\Gamma(\mathcal{V}_\Sigma) & \to &\mathfrak X(\Sigma) \\
\nu &\mapsto & \widehat\nu:=\pi_{T\Sigma}(f(\nu))
\end{array}
\end{align}
Note that the definition of the hat map really depends on the choice of $\epsilon_n$ and the field $e$, even though we hide it in the notation.

In local coordinates, the hat map has the following description. We denote by $e_a$, $a=1,2,3$, the three components of the $\mathcal{V}_\Sigma$-valued one-form $e$. Then, for a given $\nu\in\Gamma(\mathcal{V}_\Sigma)$, there are uniquely determined functions $\nu^{(a)}$, $a=1,2,3$, and $\nu^{(n)}$ such that

\begin{align*}
    \nu= \nu^{(a)}e_a + \nu^{(n)}\epsilon_n.
\end{align*}
The induced hat vector field is then 
\[
\widehat{\nu}= \nu^{(a)}\frac\partial{\partial x^a}. 
\]

We also consider the space 
\begin{align*}
    T^* \left(\Omega_{\partial}^{0,2}[1]\oplus \mathfrak{X}[1](\Sigma) \oplus C^\infty[1](\Sigma)\right)
\end{align*}
where the corresponding fields are denoted by $c \in\Omega_{\partial}^{0,2}[1]$, $\xi \in\mathfrak{X}[1](\Sigma)$,  $\lambda\in \Omega^{0,0}[1]$,
$\gamma^\dag\in\Omega_{\partial}^{3,2}[-1]$, and $y^\dag\in\Omega_{\partial}^{3,3}[-1]$.\footnote{Note that here we are using an isomorphism defined by $e$ in order to identify the fiber of $ T^* \left(\mathfrak{X}[1](\Sigma) \oplus C^\infty[1](\Sigma)\right)$ with $\Omega_{\partial}^{3,3}[-1]$.}
%such that the image of the elements in this space form a linearly independent set in $\mathcal{V}$. 
The space of boundary fields is the bundle 
\begin{equation*}
\mathcal{F}^{\partial} \longrightarrow \Omega_{\epsilon_n}^1(\Sigma, \mathcal{V}_\Sigma)\oplus T^* \left(\Omega_{\partial}^{0,2}[1]\oplus \mathfrak{X}[1](\Sigma) \oplus C^\infty[1](\Sigma)\right),
\end{equation*}
with local trivialisation on an open $\mathcal{U}_{\Sigma} \subset \Omega_{\epsilon_n}^1(\Sigma, \mathcal{V}_\Sigma)\oplus T^* \left(\Omega_{\partial}^{0,2}[1]\oplus \mathfrak{X}[1](\Sigma) \oplus C^\infty[1](\Sigma)\right)$ given by
\begin{equation*}
\mathcal{F}^{\partial}\simeq \mathcal{U}_{\Sigma} \times \mathcal{A}^\text{red}(\Sigma),
\end{equation*}
where $\mathcal{A}^\text{red}(\Sigma)$ is the space of connections $\omega$ (on $P|_{\Sigma}$)  such that 
\begin{equation}\label{e:structural_constraint}
\epsilon_n \dd_{\omega}e + \iota_{\widehat{X}}\gamma^{\dag}= e \sigma 
\end{equation} 
for some $\sigma \in \Omega_{\partial}^{1,1}$ and $X=[c, \epsilon_n]+ L_{\xi}^{\omega} \epsilon_n$.
}
The constraint \eqref{e:structural_constraint} is called structural constraint. 
The BFV action and symplectic form are respectively:
\begin{align}
S^{\partial}&= \int_{\Sigma} %& 
\Big(c e \dd_{\omega} e + \iota_{\xi} e e F_{\omega} +\lambda \epsilon_n e F_{\omega} + \frac{1}{3!}\lambda \epsilon_n \Lambda e^3+\frac{1}{2} [c,c] \gamma^{\dag} - L^{\omega}_{\xi} c \gamma^{\dag}+ \frac{1}{2} \iota_{\xi}\iota_{\xi} F_{\omega}\gamma^{\dag}\\ &
 \phantom{=}+[c, \lambda \epsilon_n ]y^{\dag} - L^{\omega}_{\xi} (\lambda \epsilon_n)y^{\dag} - \frac{1}{2}\iota_{[\xi,\xi]}e y^{\dag}\Big),\nonumber\\
\varpi^{\partial} &= \int_{\Sigma} \left(e \delta e \delta \omega + \delta c\delta \gamma^{\dag} - \delta \omega \delta (\iota_\xi \gamma^{\dag}) + \delta \lambda \epsilon_n \delta y^\dag+\iota_{\delta \xi} \delta (e y^\dag)\right).
\end{align}

\begin{remark}
For simplicity, we consider in this paper only the case of dimension $N=4$. However, some of the considerations of this article can be extended to the higher-dimensional cases. This can be done in the same way in which we can extend the boundary results on the boundary from the case $N=4$ to a generic $N\geq 4$ (see \cite{CCS2020}).
Furthermore, in this and the following sections, we assume that the cosmological constant vanishes: $\Lambda=0$. In Section~\ref{s:cosmological} we will discuss the small corrections that have to be implemented when the cosmological constant is nonzero.
\end{remark}

The boundary structure is completed by the cohomological vector field $Q^{\partial}$ defined as the Hamiltonian vector field of $S^{\partial}$ with $\partial \Sigma = \emptyset $. Its expression (in components) reads:
\begin{subequations}\label{e:Q-boundary}
\begin{align}
Q^{\partial} e &= [c,e] + L_\xi^\omega e + \dd_{\omega}(\lambda \epsilon_n) + \lambda \sigma, \\
Q^{\partial} \omega &= \dd_\omega c - \iota_\xi F_\omega + \lambda (W_{\partial}^{(1,2)})^{-1}(\epsilon_n F_{\omega}+\iota_{\widehat{X}}y^\dag)+ \frac{1}{2}\lambda \epsilon_n \Lambda e + \mathbb{V}_{\omega}, \\
Q^{\partial} c &= \frac{1}{2}[c,c] + \frac{1}{2}\iota_\xi\iota_\xi F_\omega+ \lambda \iota_\xi (W_{\partial}^{(1,2)})^{-1}(\epsilon_n F_{\omega}+X^{(a)}y_a^\dag)+ \iota_{\xi}\mathbb{V}_{\omega},\\
Q^{\partial} \lambda &= [c, \lambda \epsilon_n ]^{(n)} + (L_\xi^\omega \lambda \epsilon_n)^{(n)}, \\
%Q^{\partial} \xi^a &= [c, \lambda \epsilon_n ]^{(a)} + (L_\xi^\omega \lambda \epsilon_n)^{(a)} + \frac{1}{2}[\xi, \xi]^a,\\
Q^{\partial} \xi&= \lambda \widehat{X} + \frac{1}{2}[\xi, \xi],\\
Q^{\partial} \gamma^{\dag} &=  e d_\omega e +[c,\gamma^{\dag}] + L_\xi^\omega \gamma^{\dag} + [\lambda \epsilon_n, y^\dag],\\
e_a Q^{\partial} y^\dag &= e_a[c,y^\dag] + e_a L_\xi^\omega y^\dag + e_a e F_\omega +  (\gamma_a^{\dag} \lambda (W_{\partial}^{(1,2)})^{-1}(\epsilon_n F_{\omega}+\iota_{\widehat{X}}y^\dag)\\ & \qquad+ \lambda \sigma_a y^\dag + \mathbb{V}_{\omega}\gamma_a^{\dag},  \nonumber\\
\epsilon_n Q^{\partial} y^\dag &= \epsilon_n[c,y^\dag] + \epsilon_n L_\xi^\omega y^\dag + \epsilon_n e F_\omega + \frac{1}{3!}\Lambda \epsilon_n e^3,
\end{align}
\end{subequations}
where $X=[c,  \epsilon_n ] + L_\xi^\omega (\epsilon_n)$ and $e \mathbb{V}_{\omega}=0$.
\begin{remark}
    The map $W_\partial^{(1,2)}$ is surjective but not injective (see Appendix \ref{s:appendix_notation} for more details), so we can choose a preimage defined up to terms in the kernel of $W_\partial^{(1,2)}$, denoted here by $\mathbb{V}_{\omega}$. This is 
    fixed by requiring that the action of the vector field $Q^{\partial}$ preserves the structural constraint \eqref{e:structural_constraint}, for some choice of the action of $Q^{\partial}$ on $\sigma$; i.e., we require (\cite{CCS2020}) that 
    \begin{align*}
        Q^{\partial}(\epsilon_n d_{\omega}e + \iota_{\widehat{X}}\gamma^{\dag})= Q^{\partial}e \sigma  + e Q^{\partial} \sigma.
    \end{align*}
    This way we get  
    an inverse $(W_\partial^{(1,2)})^{-1}$. 
    Comparing this expression with the corresponding one of the three-dimensional theory \cite{CaSc2019}, we also note that the terms containing the inverse of the function $W_{\partial}^{(1,2)}$ and the auxiliary field $\sigma$ constitute exactly the difference between the two expressions.
\end{remark}

\section{Corner structure of Palatini--Cartan formalism}
\label{s:BFFV_Corner}

\subsection{Corner induced structure}\label{s:cornerind}
From a boundary BFV action, we can now induce a corner structure following the procedure recalled in Section~\ref{s:relind}. From now on, we assume that the manifold $\Sigma$ has a nonempty boundary $\partial \Sigma= \Gamma$.\footnote{Later, we can drop the hypothesis of $\Gamma$ being a boundary and we can just consider the structures to be defined on a generic two-dimensional manifold $\Gamma$.}
In this and in the following sections, we describe the relaxed BF$^2$V structure on the corner. In particular, we
have the following result: \begin{proposition}\label{prop:corner_extension_PC}
The BFV theory $\mathfrak{F}^{\filt{1}}_{PC}=(\mathcal{F}^{\partial}_{PC}, S^{\partial}_{PC}, \varpi^{\partial}_{PC}, Q^{\partial}_{PC})$ is not 1-extendable.
\end{proposition}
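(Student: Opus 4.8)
The plan is to unravel what 1-extendability means for the BFV data $\mathfrak{F}^{\filt{1}}_{PC}=(\mathcal{F}^{\partial}_{PC}, S^{\partial}_{PC}, \varpi^{\partial}_{PC}, Q^{\partial}_{PC})$ when $\Sigma$ is allowed to have a nonempty boundary $\Gamma=\partial\Sigma$. Following Section~\ref{s:relind}, I first compute the ``error one-form'' $\Tilde\alpha^{\partial}:=\iota_{Q^{\partial}}\varpi^{\partial}-\delta S^{\partial}$, which by construction is an integral over $\Gamma$ arising from the integrations by parts needed to show $S^{\partial}$ is Hamiltonian on the closed $\Sigma$. Concretely, every term in $\delta S^{\partial}_{PC}$ involving a $\dd_\omega$ acting on a variation (the terms $ce\dd_\omega e$, $\iota_\xi eeF_\omega$, $\lambda\epsilon_n eF_\omega$, the $L^\omega_\xi$ terms, etc.) contributes a boundary integral over $\Gamma$ upon integrating by parts, and collecting these gives $\Tilde\alpha^{\partial}$ explicitly in terms of the restrictions to $\Gamma$ of $e,\omega,c,\xi,\lambda,\gamma^\dag,y^\dag$. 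Then I form $\Tilde\varpi^{\partial}=\delta\Tilde\alpha^{\partial}$ and must check whether the reduction of $\Tilde\varpi^{\partial}$ by (an integrable subdistribution of) its kernel is smooth, i.e.\ whether the kernel of $\Tilde\varpi^{\partial}$ has locally constant rank. If it does not---if the rank of $\ker\Tilde\varpi^{\partial}_x$ jumps---then no smooth BF$^2$V manifold exists and the structure fails to be 1-extendable.

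**The key step: locating the rank jump.**
The hard part will be pinning down precisely where and why the rank of $\ker\Tilde\varpi^{\partial}$ fails to be constant, and this is exactly where the Palatini--Cartan structure (as opposed to $BF$ theory) bites. The mechanism I expect to exploit is the degeneracy of the coframe-multiplication maps $W^{(i,j)}_\partial$ on the corner---the two-dimensional analogues of the maps $W^{(i,j)}_\partial$ recalled in Section~\ref{s:BFVPC} and Appendix~\ref{s:appendix_notation}---together with the structural constraint \eqref{e:structural_constraint}. On a two-dimensional corner $\Gamma$, the relevant wedge-with-$e$ operators relating the various $\Omega^{i,j}(\Gamma,\bigwedge^k\mathcal{V})$ change rank depending on algebraic conditions on the restricted coframe $e|_\Gamma$ (e.g.\ whether certain combinations of $e$ remain ``nondegenerate'' along $\Gamma$). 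Because $\Tilde\varpi^{\partial}$ is built by pairing variations through these $e$-dependent operators, its kernel acquires a dimension that depends on the point of field space: at generic configurations one rank, on a proper stratum a larger one. I would make this quantitative by writing $\Tilde\varpi^{\partial}$ in a local frame on $\Gamma$ (splitting $\mathcal{V}_\Gamma$ using a normal section as in the boundary discussion), reducing the kernel equations to linear-algebra conditions governed by $W^{(1,1)}_\Gamma$-type maps, and exhibiting an explicit one-parameter family of field configurations across which the rank changes.

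**Concluding non-1-extendability.**
Once the rank jump is exhibited, the conclusion is essentially formal: by the definition at the end of Section~\ref{s:relind}, 1-extendability requires that the full reduction of $(M,\Tilde\varpi^{\partial},\Tilde S^{\partial})$ by $\ker\Tilde\varpi^{\partial}$ be smooth, hence in particular that $\ker\Tilde\varpi^{\partial}$ have locally constant rank so that it defines an integrable distribution with a smooth leaf space; a genuine rank jump obstructs this. I would also verify the auxiliary hypothesis $\iota_{Q^{\partial}}\Tilde\varpi^{\partial}=\delta\Tilde S^{\partial}$ holds (so that we are really in the ``relaxed BF$^2$V'' setting and the only failure is smoothness of the reduction), which follows from the general identities in Section~\ref{s:relind} applied to the explicit $Q^{\partial}_{PC}$ of \eqref{e:Q-boundary}. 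The main obstacle, as flagged above, is the bookkeeping: correctly identifying $\Tilde\alpha^{\partial}$ from the lengthy $S^{\partial}_{PC}$, then honestly computing $\ker\Tilde\varpi^{\partial}$ through the tower of $e$-dependent maps and the structural constraint, and isolating the stratum where the rank drops---rather than any conceptual subtlety. This is precisely the phenomenon anticipated in the introduction when the authors say the corner construction ``is singular,'' motivating the passage in Section~\ref{s:Pstru} to the weaker notion of a local Poisson algebra up to homotopy rather than an honest BF$^2$V manifold.
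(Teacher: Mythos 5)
Your overall skeleton is the paper's: compute the corner one-form $\widetilde\alpha^{\partial}$ produced by integration by parts, set $\widetilde\varpi^{\partial}=\delta\widetilde\alpha^{\partial}$, test whether its kernel is regular, and conclude non-1-extendability from the failure of a smooth reduction. However, the mechanism you plan to exploit in the key step is not the one that is actually available, and as stated it would fail. You locate the expected rank jump in the $e$-dependence of the wedge-with-$e$ operators, ``e.g.\ whether certain combinations of $e$ remain nondegenerate along $\Gamma$.'' But the corner coframe is confined to the nondegenerate locus ($e\in\Omega^1_{\mathrm{nd}}(\Gamma,\mathcal{V}_\Gamma)$ with $eee_m\epsilon_n\neq 0$ everywhere), and on that locus the maps $W_{\partial\partial}^{(i,j)}$ have \emph{constant} rank (Lemma~\ref{lem:Wmaps_properties_corner}; e.g.\ $\dim\Ima W_{\partial\partial}^{(1,1)}=\dim\Ima W_{\partial\partial}^{(0,2)}=5$). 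So there is no stratum of degenerate $e|_\Gamma$ inside the field space along which these operators change rank, and a one-parameter family degenerating the coframe would simply leave the space of fields. The structural constraint \eqref{e:structural_constraint} likewise plays no role in the argument.

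The actual obstruction, which is what the paper uses, is that even at nondegenerate $e$ the maps $W_{\partial\partial}^{(1,1)}\colon\Omega^{1,1}_{\partial\partial}\to\Omega^{2,2}_{\partial\partial}$ and $W_{\partial\partial}^{(0,2)}\colon\Omega^{0,2}_{\partial\partial}\to\Omega^{1,3}_{\partial\partial}$ are neither injective nor surjective (image dimension $5$ against codomain dimensions $6$ and $8$). In the $\delta c$- and $\delta e$-components of the kernel equations, \eqref{e:kernel_Xe} and \eqref{e:kernel_Xc}, the terms $eX_e$ and $eX_c$ are therefore coupled to contributions involving $\xi^m$, $\lambda$, $e_m$, $\gamma^\dag_m$, $y^\dag_m$ that need not lie in the images of these fixed-corank maps; consequently the dimension of the solution space depends on the values of those fields (indeed the kernel becomes regular precisely after freezing $\xi^m=\lambda=0$, the tangent theory, while either condition alone is insufficient). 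It is this field dependence through the ghost/antifield sector, not a degeneration of $e|_\Gamma$, that makes $\ker\widetilde\varpi^{\partial\sharp}$ non-regular and rules out a smooth symplectic reduction, hence 1-extendability. If you redirect your ``explicit family of configurations'' to vary $\xi^m$, $\lambda$ (and the antifields) rather than the coframe, the rest of your plan goes through and coincides with the paper's proof.
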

We will then construct some associated $P_\infty$ algebras and will highlight a relation with $BF$ theory (Section \ref{s:Pstru}). We will also describe particular cases where we freeze some of the fields or do some partial reductions (Section \ref{s:simplified}).

\begin{remark}
Note that the four-dimensional case differs from the three-dimensional case. In this last, it has been proven in \cite{CaSc2019} that it is possible to extend the BFV theory to a BF$^2$V theory on the corner. The differencesbetween the three- and four-dimensional cases are to be accounted to mathematical properties of this particular formulation of the BV theories. At the moment neither do we have a physical interpretation of Proposition \ref{prop:corner_extension_PC}, whose content might even just be a mathematical artifact, nor do we have %a proof of 
a meaningful statement about such an interpretation; hence, we postpone these to future work.
\end{remark}

Before proving Proposition~\ref{prop:corner_extension_PC}, let us introduce some further piece of notation, similarly to what we have done for the boundary structure. Let $M$ be a smooth manifold of dimension $4$ with corners and let us denote by $\Sigma= \partial M$ its three-dimensional boundary and by $\Gamma= \partial\partial M$ its $2$-dimensional corner. Furthermore, we will use the  notation $\mathcal{V}_{\Gamma}$  for the restriction of $\mathcal{V}_\Sigma$ to $\Gamma$. We  define 
\begin{align*}
 \Omega_{\partial\partial}^{i,j}:= \Omega^i\left(\Gamma, \textstyle{\bigwedge^j} \mathcal{V}_{\Gamma}\right).
\end{align*}

On $\Omega_{\partial\partial}^{i,j}$, we define the following map:
\begin{align*}
    W_{\partial\partial}^{  (i,j)}\colon \Omega_{\partial\partial}^{i,j}  & \longrightarrow  \Omega_{\partial\partial}^{i,j}\\ %\label{W_e-boundary} \\
    X  & \longmapsto  X \wedge e|_{\Gamma} .%\nonumber
\end{align*}
\begin{remark}
As before,  we will omit writing the restriction of $e$ to the corner $\Gamma$.
\end{remark}

The properties of these maps are collected in  Appendix \ref{s:appendix_notation}. Furthermore, we recall that the restriction to $\Gamma$ of a vector field $\nu\in\mathfrak{X}(\Sigma)$ contracted through the interior product with a one form $\beta \in \Omega^1(\Sigma)$ reads
\begin{align*}
    \iota_{\nu}\beta= \iota_{\nu|_{\Gamma}}\beta|_{\Gamma}+ \beta_m \nu^m,
\end{align*}
where the index $m$ denotes the components transversal to the corner.
For simplicity, we will omit the restrictions to $\Gamma$.
\begin{proof}[Proof of Proposition~\ref{prop:corner_extension_PC}]
From the variation of the boundary action, using the formula
\begin{align*}
    \delta S^{\partial} = \iota_{Q^{\partial}} \varpi^{\partial} + \widetilde{\alpha}^{\partial},
\end{align*}
we can deduce the pre-corner (or pre-codimension-2) one form
\begin{align*}
\widetilde{\alpha}^{\partial}&= \int_{\Gamma} ( c e \delta e - \iota_{\xi} e e \delta \omega - e_m \xi^m e \delta \omega-\lambda \epsilon_n e  \delta \omega - \delta c  \gamma_m^{\dag}\xi^m-  \delta \omega\iota_{\xi}\gamma_m^{\dag}\xi^m\nonumber\\ &\phantom{=}
- \delta (\lambda \epsilon_n) \iota_{\xi}y^{\dag}- \delta (\lambda \epsilon_n) y_m^{\dag} \xi^m - \iota_{\delta \xi} e y_m^{\dag} \xi^m  + e_m \delta \xi^m y_m^{\dag} \xi^m ).
\end{align*} 

Taking its variation, we obtain the pre-corner two-form:
\begin{align}\label{e:precorner_two_form}
\widetilde{\varpi}^{\partial}= \delta \widetilde{\alpha}^{\partial} &= \int_{\Gamma} ( \delta c e \delta e - \iota_{\delta \xi} e e \delta \omega- \iota_{\xi}( e \delta e) \delta \omega -\delta e_m \xi^m e \delta \omega+ e_m \delta\xi^m e \delta \omega- e_m \xi^m \delta e \delta \omega\\ 
&\phantom{=} -\delta \lambda \epsilon_n e  \delta \omega -\lambda \epsilon_n \delta e  \delta \omega- \delta c  \gamma_m^{\dag}\delta \xi^m- \delta c  \delta \gamma_m^{\dag}\xi^m-  \delta \omega\delta(\iota_{\xi}\gamma_m^{\dag}\xi^m)\nonumber\\ &\phantom{=}
+ \delta (\lambda \epsilon_n) \delta y_m^{\dag} \xi^m+ \delta (\lambda \epsilon_n) y_m^{\dag} \delta\xi^m 
+ \iota_{\delta \xi} \delta e y_m^{\dag} \xi^m +\iota_{\delta \xi} e \delta y_m^{\dag} \xi^m- \iota_{\delta \xi} e y_m^{\dag} \delta \xi^m \nonumber\\ &\phantom{=}
  + \delta e_m \delta \xi^m y_m^{\dag} \xi^m- e_m \delta \xi^m \delta y_m^{\dag} \xi^m  + e_m \delta \xi^m y_m^{\dag} \delta \xi^m).  \nonumber
\end{align} 
In order to proceed, we have to check if this two-form is pre-symplectic, i.e., if the kernel of the corresponding map
\begin{align*}
\widetilde{\varpi}^{\partial \sharp}: T \widetilde{\mathcal{F}}^{\partial}& \rightarrow T^* \widetilde{\mathcal{F}}^{\partial}\\
X & \mapsto \widetilde{\varpi}^{\partial \sharp}(X)=\widetilde{\varpi}^{\partial}(X, \cdot)
\end{align*} is regular.
The equations defining the kernel are:
\begin{subequations}
\begin{align}
\delta c: &\quad  e X_e  +X_{\gamma_m^{\dag}} \xi^m- \gamma_m^{\dag} X_{\xi^m}=0 \label{e:kernel_Xe},\\
\delta e: &\quad  e X_c  - e \iota_\xi X_{\omega} - \lambda \epsilon_n X_{\omega}- e_m \xi^m  X_{\omega} -\iota_{X_{\xi}}y_m^\dag \xi^m=0,\label{e:kernel_Xc}\\
\delta \xi: &\quad e_{\bullet} e X_{\omega}-X_{\omega}c_{m\bullet}^{\dag} \xi^m+ (X_e)_\bullet y_m^{\dag} \xi^m + e_{\bullet}X_{y_m^{\dag}} \xi^m- e_{\bullet}y_m^{\dag}X_{\xi^m}=0,\\
\delta \omega: &\quad - \iota_{X_\xi} e e - \iota_{\xi}( e  X_e) -X_{e_m} \xi^m e + e_m X_{\xi^m} e - e_m \xi^m X_e \nonumber\\ & \qquad -X_\lambda \epsilon_n e   -\lambda \epsilon_n X_e  -  X_{(\iota_{\xi}\gamma_m^{\dag}\xi^m)}=0,\\
\delta e_m: & \quad - \xi^m e X_\omega+X_{\xi^m} y_m^{\dag} \xi^m=0,\\
\delta \xi^m: &\quad e_m e X_\omega- X_c  \gamma_m^{\dag}-  X_\omega \iota_{\xi}\gamma_m^{\dag}+  X_\lambda \epsilon_n y_m^{\dag} - \iota_{X_\xi} e y_m^{\dag}
  + X_{e_m} y_m^{\dag} \xi^m \nonumber\\ 
  & \qquad - e_m X_{ y_m^{\dag}} \xi^m  + 2 e_m y_m^{\dag} X_{\xi^m}=0,\\
\delta \lambda: &\quad -\epsilon_n e  X_\omega+ \epsilon_n X_{y_m^{\dag}} \xi^m+ \epsilon_n y_m^{\dag} X_{\delta\xi^m}=0,\\
\delta \gamma_m^{\dag}: &\quad - X_c  \xi^m+ \iota_{\xi}X_\omega \xi^m=0,\\
\delta y_m^{\dag}: &\quad +X_\lambda \epsilon_n \xi^m +\iota_{X_\xi} e \xi^m - e_m X_{\xi^m} \xi^m  =0.
\end{align}
\end{subequations}
Let us consider \eqref{e:kernel_Xe} and \eqref{e:kernel_Xc}. They can be solved only if the functions $W_{\partial\partial }^{(1,1)}$ and $W_{\partial\partial}^{(0,2)}$ are invertible. 
However, from Lemma \ref{lem:Wmaps_properties_corner} in  Appendix~\ref{s:appendix_notation} we gather that both $W_{\partial\partial }^{(1,1)}$ and $W_{\partial\partial}^{(0,2)}$ are neither injective nor surjective. In particular, $\dim \Ima W_{\partial\partial }^{(1,1)}= \dim \Ima W_{\partial\partial}^{(0,2)}=5$, while the respective codomains $\Omega_{\partial\partial}^{1,1}$ and $\Omega_{\partial\partial}^{0,2}$ have dimension $6$ and $8$, respectively. Hence we deduce that these two equations are singular and so is the kernel of $\widetilde{\varpi}^{\partial \sharp}$. 

Therefore, it is not possible to perform a symplectic reduction, and the BFV data do not induce a 1-extended BFV theory.
\end{proof}

\subsection{Pre-corner theory}\label{s:precorner}

The failure of the standard procedure does not allow us to construct a BF$^2$V theory. It is, however, still possible to analyse the pre-corner structure. To complete the picture, along the pre-corner two form \eqref{e:precorner_two_form} we have to find the pre-corner action $\widetilde{S}^{\partial}$ and  an expression for a Hamiltonian vector field. Even if the two-form is degenerate, we can still get a pair $\widetilde{Q}^{\partial}$ and $\widetilde{S}^{\partial}$ satisfying $\iota_{\widetilde{Q}^{\partial}} \widetilde{\varpi}^{\partial}= \delta \widetilde{S}^{\partial}$, out of the boundary data. 

Before proceeding, let us recall the spaces on which the pre-corner fields are defined. In degree $-1$, we have $\gamma_m^\dag\in\Omega^{2}(\Gamma, \wedge^2\mathcal{V}))[-1]$ and $y_m^\dag\in\Omega^{2}(\Gamma, \wedge^4\mathcal{V}))[-1]$. In degree $1$, we have the ghosts parametrizing the gauge symmetries, $c \in\Omega^{0}(\Gamma, \wedge^2\mathcal{V}_\Gamma))[1]$, and the ones parametrizing the diffeomorphisms: respectively, $\xi \in\mathfrak{X}[1](\Gamma)$ tangential to $\Gamma$, $\xi^m\in \Omega^{0}(\Gamma)[1]$ transversal to $\Gamma$ into $\Sigma$, and  $\lambda\in \Omega^{0}(\Gamma)[1]$ transversal also to $\Sigma$.
In degree zero, we first have the tangent part $e \in \Omega_\text{nd}^1(\Gamma, \mathcal{V}_\Gamma)$  of the coframe restricted to the corner and its transversal part $e_m\in \Omega^0(\Gamma, \mathcal{V}_\Gamma)$, together with a fixed %vector 
nowhere vanishing
field $\epsilon_n \in \Omega^0(\Gamma, \mathcal{V}_\Gamma)$ with the requirement that $eee_m\epsilon_n$ is nowhere $0$.\footnote{The fixed field $\epsilon_n$ and the still dynamical one $e_m$ may be interpreted as the two transversal components of the coframe, the latter being transversal with respect to the inclusion $\Gamma=\partial\Sigma\hookrightarrow\Sigma$ and the former with respect to the inclusion of $\Sigma$ as boundary of a bulk.} 
Furthermore, we also have a connection $\omega\in \mathcal{A}^\text{red}(\Gamma)$  where $\mathcal{A}^\text{red}(\Gamma)$ is the space of connections (on $P|_{\Gamma}$) such that the following equations are satisfied: 
\begin{align*}
\epsilon_n \dd_{\omega}e + &\gamma_m^{\dag}\widehat{Z}^m= e \sigma,\\
    e_m \sigma &\in \Ima W_{\partial\partial}^{(0,1)},
\end{align*}
where $Z=[c, \epsilon_n]+ L_{\xi}^{\omega} \epsilon_n+ \dd_{\omega_m} \epsilon_n \xi^m$.

\begin{remark}
These last equations are a consequence of the fact that the starting data on the boundary were constrained by \eqref{e:structural_constraint}; hence, this constraint will also descend to the pre-corner. However, it will split into two separate equations:
\begin{align*}
\epsilon_n \dd_{\omega}e + &\gamma_m^{\dag}\widehat{Z}^m= e \sigma, \\
\epsilon_n \dd_{\omega_m}e + \epsilon_n \dd_{\omega}e_m  + &\iota_{\widehat{Z}}\gamma_{m}^{\dag}= e_m \sigma + e \sigma_m.
\end{align*}
The second equation is dynamical but still gives some information about $\sigma$ and $\sigma_m$. In particular, we can rewrite it as 
\begin{align*}
    e_m \sigma \in \Ima W_{\partial\partial}^{(0,1)}.
\end{align*}
An interpretation of these constraints is given in Appendix \ref{sec:analysis_constraints}.
\end{remark}

\begin{remark}
    
    The map $\widehat{\bullet}$ has been defined in \eqref{e:hatmap} for fields on the boundary $\Sigma$. However, when we have combinations of the type $\iota_{\widehat{X}}\alpha$ for some form $\alpha$ on the boundary and some section $X$ of $\mathcal{V}_{\Sigma}$, we can pull them back to the corner and get $\iota_{\widehat{X}}\alpha+ \alpha_m\widehat{X}^m $.
\end{remark}

Let us now compute the pre-corner action.
Since we have the boundary cohomological vector field, we can let $\partial \Sigma= \Gamma \neq \emptyset$ and, using the modified master equation $\iota_{Q^{\partial}} \iota_{Q^{\partial}} \varpi^{\partial} = 2 \widetilde{S}^{\partial}$,   find an expression for the pre-corner action. After a long but straightforward computation, we get
\begin{align}\label{e:action-precorner}
\widetilde{S}^{\partial} &= \int_{\Gamma} \Big(\frac{1}{4}[c,c] ee + \frac{1}{2}\iota_\xi (ee) \dd_\omega c + e e_m \xi^m \dd_\omega c + \lambda \epsilon_n e \dd_\omega c \\
 &\phantom{=}+ \frac{1}{4}\iota_\xi \iota_\xi (ee) F_\omega + \iota_\xi e e_m \xi^m F_\omega + \iota_\xi e \epsilon_n \lambda F_\omega + e_m \xi^m \epsilon_n  \lambda F_\omega  \nonumber\\ 
 &\phantom{=}+  \frac{1}{2}[c,c] \gamma_m^{\dag} \xi^m + L_\xi^\omega c \gamma_m^{\dag} \xi^m + \frac{1}{2}\iota_\xi \iota_\xi  F_\omega \gamma_m^{\dag} \xi^m \nonumber\\ 
&\phantom{=}+ \frac{1}{2}\iota_{[\xi,\xi]}e y_m^\dag \xi^m + L_\xi^\omega(\lambda \epsilon_n) y_m^\dag \xi^m + L_\xi^\omega(e_m \xi^m) y_m^\dag \xi^m + [c, \lambda \epsilon_n]y_m^\dag \xi^m\Big). \nonumber
\end{align}

The last bit of information that is missing is a pre-corner cohomological vector field. This can be obtained by pushing forward the one on the boundary to the corner. We collect some technical lemmata that are useful for this computation in Appendix \ref{sec:push_forward_ham}.
\begin{remark}
Due to the degeneracy of the pre-corner two form, a Hamiltonian vector field defined through $\iota_{\widetilde{Q}^{\partial}} \widetilde{\varpi}^{\partial}= \delta\widetilde{S}^{\partial}$ is not unique  and  might differ from the projection of $Q^\partial$ by an element in the kernel of $\widetilde{\varpi}^\partial$.
\end{remark}

Collecting all the above information, we get the following expression for the pre-corner cohomological vector field $\widetilde{Q}^{\partial}$:
%\begin{subequations}%\label{Q-pre-corner}
\begin{align*}
\widetilde{Q}^{\partial} e &= [c,e] + L_\xi^\omega e + \xi^m \dd_{\omega_m} e +e_m \dd \xi^m + \dd_{\omega}(\lambda \epsilon_n) + \lambda \sigma, \\
\widetilde{Q}^{\partial} e_m &= [c,e_m] + L_\xi^\omega e_m + \iota_{\partial_m \xi } e + \dd_{\omega_m}(e_m \xi^m) + \dd_{\omega_m}(\lambda \epsilon_n)+ \lambda \sigma_m, \\
\widetilde{Q}^{\partial} \omega &= \dd_\omega c - \iota_\xi F_\omega -F_{\omega_m} \xi ^m+ \lambda \mu+ \frac{1}{2}\lambda \epsilon_n \Lambda e,\\
\widetilde{Q}^{\partial} \omega_m &= \dd_{\omega_m} c - \iota_\xi F_{\omega_m} + \lambda \mu_m+ \frac{1}{2}\lambda \epsilon_n \Lambda e_m,\\
\widetilde{Q}^{\partial} c &= \frac{1}{2}[c,c] + \frac{1}{2}\iota_\xi\iota_\xi F_\omega+\iota_\xi F_{\omega_m}\xi^m +\lambda \iota_\xi \mu + \lambda \mu_m \xi^m,  \\
\widetilde{Q}^{\partial} \lambda &= Y^{(n)}, \\
\widetilde{Q}^{\partial} \xi &=\widehat{Y}+ \frac{1}{2}[\xi, \xi],\\
\widetilde{Q}^{\partial} \xi^m &=\widehat{Y}^{m}+ \frac{1}{2}[\xi, \xi]^m,\\
\widetilde{Q}^{\partial} \gamma^{\dag} &=  e_m \dd_\omega e +e \dd_{\omega_m} e +e \dd_\omega e_m +[c,\gamma_m^{\dag}] + L_\xi^\omega \gamma_m^{\dag} + \dd_{\omega_m} (\gamma_m^{\dag} \xi^m) + [\lambda \epsilon_n, y_m^\dag],\\
\widetilde{Q}^{\partial} y^\dag &= [c,y_m^\dag] + L_\xi^\omega y_m^\dag +\dd_{\omega_m} (y^\dag_m \xi^m)+ e_m F_\omega + e F_{\omega_m}+ \frac{1}{2} \Lambda e_m e^2 \nonumber \\ & {\qquad+ \lambda (\sigma_m y_m^\dag)^{(m)} + \lambda(\mu_m \gamma_m^{\dag})^{(m)}+ \lambda (\sigma_a y_m^\dag)^{(a)} + \lambda(\mu \gamma_{am}^{\dag})^{(a)}},
\end{align*}
%\end{subequations}
where 
\begin{align*}
    Y&= [c, \lambda \epsilon_n] + L_\xi^\omega(\lambda \epsilon_n) + \xi^m \dd_{\omega_m}(\lambda \epsilon_n), \\ \mu &=(W_{\partial\partial}^{(1,2)})^{-1}(\epsilon_n F_{\omega}+y_m^\dag\widehat{Y}^{m}), \\
 \mu_m&= (W_{\partial\partial}^{(0,2)})^{-1}(e_m \mu + \epsilon_n F_{\omega_m}+\iota_{\widehat{Y}}y_{m}^\dag), & &
\end{align*}
and  $e_a Z_{a}^{(a)}= Z_{a}$, $e_m Z_{m}^{(m)}= Z_{m}$.
The data just collected do not form a BF$^2$V structure on the corner, since the closed two-form \eqref{e:precorner_two_form} is degenerate. Nonetheless, using the procedure described in Section \ref{s:generalization_Poisson}, it is possible to extract information from this structure.
\section{\texorpdfstring{$P_{\infty}$ structure of general pre-corner theory}{Poisson infinity structure of general pre-corner theory}}\label{s:Pstru}
As explained in Section \ref{sec:BF2Vstructure},  BF$^2$V theories define a $P_\infty$ structure once a polarization is chosen on  the space of corner fields. Furthermore (see Remark \ref{r:Poissondegenerateform}), this construction can be generalized to the cases when the two-form is degenerate, which is precisely the case at hand. In this section, we analyze these structures. In order to have a better understanding of the results that we find, we will afterward consider two simplified theories in Section \ref{s:simplified}, for which the structure will be more readable.

Since the two-form is not symplectic, we consider the construction explained in Remarks \ref{r:Hamiltoniandeg} and \ref{r:Poissondegenerateform}. Following the notation introduced in section \ref{sec:BF2Vstructure}, we consider a splitting of the Hamiltonian functionals and define $\mathfrak{h}$ to be a subalgebra of functionals in the variables $e, \xi, \lambda, \xi^m$ and $\gamma_m^\dag \xi^m$. {The projection to it is just obtained by setting $\omega=\omega_0$, a fixed background connection, and by putting to zero all the other fields}.\footnote{The reasons for the choice of these coordinates will be clearer later. Indeed, in one of the simplified cases, the tangent theory (Section \ref{sec:tangenttheory}), this choice corresponds to the generalization of a possible choice of polarization.}
In particular, we consider the following Hamiltonian functionals and prove that they form a $P_\infty$ subalgebra of $\mathfrak{h}$:

\begin{align*}
    J_{\varphi}&= \int_{\Gamma} \varphi \left(\frac{1}{2}ee+ \gamma_m^\dag \xi^m\right),\\
    M_{Y}&= \int_{\Gamma} Y \left(\iota_{\xi}\left(\frac{1}{2}ee+ \gamma_m^\dag \xi^m\right) + \alpha e\right),\\
    K_{Z}&= \int_{\Gamma} Z \left(\frac{1}{2}\iota_{\xi}\iota_{\xi}\left(\frac{1}{2}ee + \gamma_m^\dag \xi^m\right)+ \iota_{\xi}e \alpha+\frac{1}{2}\alpha^2\right),
\end{align*}
where $\alpha= \epsilon_n \lambda+e_m \xi^m$. These functionals are Hamiltonian because it is possible to construct the corresponding Hamiltonian vector fields, which read
\begin{align*}
    \mathbb{J}_{\varphi}&= \int_{\Gamma}\varphi \frac{\delta}{\delta c}, \\
    \mathbb{M}_Y&= \int_{\Gamma} Y \frac{\delta}{\delta \omega},\\
    \mathbb{K}_Z&= \int_{\Gamma} \Bigg(\left(-\iota_\xi Z + (W_{\partial\partial}^{(2,3)})^{-1}(\epsilon_n \lambda Z)\right)\frac{\delta}{\delta \omega}\\
    & \quad +\left(-\frac{1}{2}\iota_\xi \iota_\xi Z+ \iota_\xi(W_{\partial\partial}^{(2,3)})^{-1}(\epsilon_n \lambda Z)-(W_{\partial\partial}^{(2,3)})^{-1}(e_m \xi^m (W_{\partial\partial}^{(2,3)})^{-1}(\epsilon_n \lambda Z)) \right)\frac{\delta}{\delta c}\\
    & \quad +\left(e_m Z+ \gamma_m^\dag (W_{\partial\partial}^{(2,3)})^{-1}(e_m  (W_{\partial\partial}^{(2,3)})^{-1}(\epsilon_n \lambda Z))^{(m)}+ (W_{\partial\partial}^{(2,3)})^{-1}(\epsilon_n \lambda Z)\gamma_{am}^\dag)^{(a)} \right)\frac{\delta}{\delta y_m^\dag}\Bigg).
\end{align*}
We can then prove that they form a subalgebra by computing the various brackets. After a long but straightforward computation, we get the following result:
\begin{align*}
    &\{\}_0=\int_{\Gamma}\left(\frac{1}{2}\iota_{\xi}\iota_{\xi}\left(\frac{1}{2}ee + \gamma_m^\dag \xi^m\right)+ \iota_{\xi}e \alpha+\frac{1}{2}\alpha^2\right)F_{\omega_0}, \\
    &\{J_{\varphi}\}_1= M_{\dd_{\omega_0}\varphi}, &&
    \{M_{Y}\}_1 = K_{\dd_{\omega_0}Y}, &&
    \{K_{Z}\}_1 = 0, \\
    &\{J_{\varphi}, J_{\varphi'}\}_2 = J_{[\varphi, \varphi']}, &&
    \{J_{\varphi}, M_Y\}_2 = M_{[\varphi, Y]}, &&
    \{M_Y, K_Z\}_2 = 0,\\
    &\{M_Y, M_{Y'}\}_2 = K_{[Y, Y']}, &&
    \{J_{\varphi}, K_Z\}_2 = K_{[\varphi,Z]}, &&
    \{K_Z, K_{Z'}\}_2 = 0.
\end{align*}
 Note that the nullary operation is here obtained by the nonvanishing part of the projection of the action to $\mathfrak{h}$. We can write
\begin{align*}
    \{\}_0= K_{F_{\omega_0}},
\end{align*}
so the algebra generated by ${J}$, ${M}$, and ${K}$ closes also under the nullary operation. We also explicitly note that this structure is identical to the \emph{tangent theory} and that of $BF$ theory in \eqref{e:UVW_BF}.

\begin{remark}
As before, the similarity between the structure of the subalgebra of observables and that of $BF$ theory is connected to the possibility of obtaining the constrained theory as $BF$ theory for the Lie algebra $\mathfrak{so(3,1)}$,  restricted to the submanifold of fields parametrized by
 \begin{align*}
     c&=c, & A &= \omega, & B^\dag &= 0,\\
     \phi&= \frac{1}{4} \iota_\xi\iota_\xi(ee)+ \frac{1}{2}\iota_\xi\iota_\xi \gamma_m^\dag \xi^m 
     + \iota_\xi e \alpha + \frac{1}{2}\alpha^2, & \tau&= \frac{1}{2}\iota_\xi(ee)+ \iota_\xi\gamma_m^\dag \xi^m + e \alpha, & B &= \frac12 ee + \gamma_m^\dag \xi^m.
 \end{align*}
\end{remark}

\section{Simplified theories}\label{s:simplified}
{The expressions of the pre-corner data without reduction are rather complicated and the information contained in them is well hidden. For this reason, it is useful to consider some simplified cases in which the Poisson structure is more manifest. In this section, we propose two different simplified theory in which the physical content is more explicit. In the first, we impose some constraints on the boundary data, which do not change the on-shell boundary structure (i.e., we consider a smaller BFV theory still describing the same reduced phase space of the original one). In the second, we assume some ghost fields to vanish, thus not considering some symmetries (the diffeomorphisms normal to the boundary and to the corner).}

\subsection{Constrained theory} \label{sec:constrainedtheory}
This approach is based first on considering the BFV theory on a cylindrical boundary manifold (i.e., assuming $\Sigma=\Gamma\times I$, where $I$ is an interval, and then focusing on one of the two boundary components $\Gamma$). Next, we impose some further constraints, on the line of \eqref{e:structural_constraint}, to get a theory that is on-shell equivalent to the original one but better treatable with the BF$^2$V machinery. 

\begin{remark}
This approach is based on the fact that the failure of the two-form \eqref{e:precorner_two_form} to have a regular kernel has similar causes to the same failure of the pre-boundary two-form \cite{CS2017}. As discussed in \cite{CCS2020b}, it is anyway possible to overcome the problem by constructing a BV theory on the bulk with some additional constraints. Indeed, using the constraints suggested by the AKSZ construction, it is possible to construct a BV theory that induces a BFV theory on the boundary. 

We now want to mimic this behaviour in order to get a BFV theory that induces a BF$^2$V theory on the corner. Since we do not have at hand a corner theory, we cannot use any suggestion from the AKSZ construction and we can only try to guess the correct constraints.
\end{remark}

Assume that the manifold $\Sigma$ has the form of a cylinder, $\Sigma=\Gamma\times I$, and call $x^m$ the coordinate along $I$. Then,
a possible choice is given by the following constraints:
\begin{subequations}\label{e:constraints_corner}
\begin{align}
    &\gamma_m^\dag = e K, \label{e:constraint_corner1}\\
    &e_m \dd_{\omega} e + e_m \dd \xi^m K + \dd_{\omega}(\lambda \epsilon_n) K + \lambda\sigma K + [\lambda \epsilon_n, y_m^\dag]= eL,\label{e:constraint_corner2}\\
    &\epsilon_n K =0, \label{e:constraint_corner3b}\\
    &\epsilon_n L + \epsilon_n \dd_{\omega_m}e+ \epsilon_n \dd_{\omega} e_m + [c, \epsilon_n] K + L_{\xi}^{\omega}\epsilon_n K +\dd_{\omega_m} \epsilon_n  \xi^m K=0. \label{e:constraint_corner4b}
\end{align}
\end{subequations}

\begin{remark}
As we will see  later on, these constraints are sufficient to get a simplified version of the pre-corner structure, but they still do not grant the possibility of doing a  proper symplectic reduction. 
\end{remark}

\begin{remark}
 Note that these constraints do not modify the boundary theory, in the sense that the constraints do not modify the classical critical locus of the unconstrained theory described in Section~\ref{s:BFVPC}. Indeed, \eqref{e:constraint_corner1} and \eqref{e:constraint_corner3b} are constraints on an anti-field and have no meaning in the classical interpretation. On the other hand, \eqref{e:constraint_corner2} and \eqref{e:constraint_corner4b} encode part of the Euler--Lagrange equations on the boundary. To see this, we can rewrite the equation $e d_{\omega}e=0$ on the cylindrical boundary manifold $\Sigma=\Gamma\times I$ and get the equation
\begin{align*}
    e_m d_{\omega}e + e (d_{\omega})_m e + e d_{\omega}e_m=0.
\end{align*}
Since $W_{\partial\partial}^{1,1}$ is neither injective nor surjective, besides the dynamical equation describing $\partial_m e$ we get also 
\begin{align*}
    e_m d_{\omega}e = e L'
\end{align*}
for some $L'$. This last equation, modulo anti-fields (which can be ignored at the classical level), is the same as \eqref{e:constraint_corner2}. Then, \eqref{e:constraint_corner4b} is added to guarantee the invariance under the action of $Q^{\partial}$, as proved in Lemma~\ref{lem:constraints_and_Q} below.
\end{remark}

These constraints are fixing some components of the pre-corner fields $\omega$ and $\gamma_m^\dag$. Namely, we fix three components of $\omega$ in the kernel of $W_{\partial\partial}^{(1,2)}$ and four components of $\gamma_m^\dag$. More details can be found in \ref{sec:analysis_constraints} with the relevant proofs. 
%Note also that the theory proposed here is on-shell equivalent to the original one.

\begin{remark}
These additional constraints on the boundary simplify the expression of the structural constraints \eqref{e:structural_constraint}. Dividing them into tangential and transversal to the corner, we obtain 
\begin{align*}
   \epsilon_n \dd_{\omega}e + \widehat{Y}^{m}e K &= e \sigma, \\
   \epsilon_n \dd_{\omega_m}e + \epsilon_n \dd_{\omega}e_m +\iota_{\widehat{Y}}(e K )&= e_m \sigma + e \sigma_m, 
\end{align*}
where $Y=[c, \epsilon_n]+ L_{\xi}^{\omega} \epsilon_n + \dd_{\omega_m}(\epsilon_n) \xi^m. $

\end{remark}

Furthermore, it is worth noting that since $W_{\partial\partial}^{(1,1)}$ is surjective, we can write $y_m^\dag= e x_m^\dag$ for some $x_m^\dag$. Moreover, since $W_{\partial\partial}^{(1,1)}$ is not injective, we can also ask that $\epsilon_n x_m^\dag = e A$ for some $A$. Indeed, this condition fixes only some components of $x_m^\dag$ in the kernel of $W_{\partial\partial}^{(1,1)}$.

\begin{lemma}\label{lem:constraints_and_Q}
The set of constraints \eqref{e:constraints_corner} is conserved under the action of $Q^{\partial}$, i.e., it is possible to define $Q^{\partial}K$ and $Q^{\partial}L$ so that 
\begin{align*}
   &Q^{\partial} \gamma_m^\dag = Q^{\partial}e K + e Q^{\partial}K, \\
   &\epsilon_n Q^{\partial}K =0, \\
   & Q^{\partial}(e_m \dd_{\omega} e + e_m \dd \xi^m K + \dd_{\omega}(\lambda \epsilon_n) K + \lambda\sigma K + [\lambda \epsilon_n, y_m^\dag])= Q^{\partial}eL + e Q^{\partial}L,\\
       &\epsilon_n Q^{\partial} L + Q^{\partial}(\epsilon_n \dd_{\omega_m}e+ \epsilon_n \dd_{\omega} e_m + [c, \epsilon_n] K + L_{\xi}^{\omega}\epsilon_n K +\dd_{\omega_m} \epsilon_n  \xi^m K)=0.
\end{align*}
\end{lemma}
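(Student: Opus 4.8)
The plan is to extend $Q^{\partial}$ --- the boundary cohomological vector field \eqref{e:Q-boundary}, here decomposed along the cylinder $\Sigma=\Gamma\times I$ so that it also acts on the transversal components $e_m,\omega_m,\dots$ --- to the auxiliary fields $K$ and $L$, and then to verify, one constraint at a time, that with this extension $Q^{\partial}$ is tangent to the locus cut out by \eqref{e:constraints_corner}. In every verification one is free to use the constraints \eqref{e:constraints_corner} themselves, the structural constraints recalled in Section~\ref{s:precorner}, and the properties of the maps $W_{\partial\partial}^{(i,j)}$ from Appendix~\ref{s:appendix_notation} --- above all the surjectivity of $W_{\partial\partial}^{(1,1)}$ and the explicit description of $\ker W_{\partial\partial}^{(1,1)}$ in Lemma~\ref{lem:Wmaps_properties_corner}. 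The organizing device is to split each identity into its components along $e$, along $e_m$, and along $\epsilon_n$ via the appropriate $W_{\partial\partial}$-maps, exactly as in the derivation of the simplified structural constraints above.

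First I would handle \eqref{e:constraint_corner1}: applying $Q^{\partial}$ to $\gamma_m^\dag=eK$ and using the Leibniz rule gives $Q^{\partial}\gamma_m^\dag-(Q^{\partial}e)K=e\,Q^{\partial}K$, and since $W_{\partial\partial}^{(1,1)}$ is surjective while the left-hand side is explicit from \eqref{e:Q-boundary}, this determines $Q^{\partial}K$ up to an element of $\ker W_{\partial\partial}^{(1,1)}$. For \eqref{e:constraint_corner3b}, since $\epsilon_n$ is the fixed non-dynamical section one has $Q^{\partial}\epsilon_n=0$, so preservation of $\epsilon_n K=0$ reduces to $\epsilon_n\,Q^{\partial}K=0$; one then spends the residual freedom from the previous equation, using the explicit form of $\ker W_{\partial\partial}^{(1,1)}$ and the constraint $\epsilon_n K=0$, to pin down the ambiguous components of $Q^{\partial}K$ so that this holds --- a dimension count of the kind used in the proof of Proposition~\ref{prop:corner_extension_PC} shows that there is exactly enough room.

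Next comes \eqref{e:constraint_corner2}: writing $\Phi$ for its left-hand side, so that the constraint reads $\Phi=eL$, one applies $Q^{\partial}$ and the Leibniz rule to obtain $Q^{\partial}\Phi-(Q^{\partial}e)L=e\,Q^{\partial}L$, where $Q^{\partial}\Phi$ is assembled from \eqref{e:Q-boundary}, from the formula for $Q^{\partial}K$ just obtained, and from $Q^{\partial}\sigma$ (the auxiliary-field variation fixed by demanding that $Q^{\partial}$ preserve the structural constraint); surjectivity of $W_{\partial\partial}^{(1,1)}$ then defines $Q^{\partial}L$ up to $\ker W_{\partial\partial}^{(1,1)}$, and the decompositions $y_m^\dag=e\,x_m^\dag$ and $\epsilon_n x_m^\dag=e\,A$ noted above make many terms manifestly divisible by $e$. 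Finally, \eqref{e:constraint_corner4b} is the decisive step: one applies $Q^{\partial}$ to its left-hand side, substitutes $Q^{\partial}L$, $Q^{\partial}K$ and \eqref{e:Q-boundary}, and must show that the result vanishes on the constrained locus. The combination appearing in \eqref{e:constraint_corner4b} is tailored so that, after expansion, the surviving terms reorganize --- using the nilpotency of $Q^{\partial}$ (so that $Q^{\partial}$ applied twice to $e$, $e_m$, $c$, $\lambda$, $\omega$, $\omega_m$, $\xi^m$ produces nothing), the defining formula for $Q^{\partial}L$, and the already-verified relations \eqref{e:constraint_corner1}--\eqref{e:constraint_corner3b} --- into a tautology, any remainder valued in $\ker W_{\partial\partial}^{(1,1)}$ being absorbed by the freedom still left in $Q^{\partial}L$.

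I expect the verification of \eqref{e:constraint_corner4b} to be the main obstacle. Unlike the checks for $K$ and for the defining equation of $L$, which go through on soft grounds --- surjectivity of $W_{\partial\partial}^{(1,1)}$ plus a dimension count for the kernel freedom --- this one depends on the precise algebraic form of the added constraint together with nilpotency and the structural constraints; this is, indeed, exactly why \eqref{e:constraint_corner4b} has to appear among the constraints at all. The computation is long, since every monomial of $\Phi$ and of the connection variations contributes, and the cleanest way to keep it under control is to carry the kernel ambiguities of $Q^{\partial}K$ and $Q^{\partial}L$ explicitly and to track all identities in their $e$-, $e_m$-, and $\epsilon_n$-components until those ambiguities are either fixed or shown to be irrelevant.
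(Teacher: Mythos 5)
Your overall mechanics are the same as the paper's: apply $Q^{\partial}$ and the Leibniz rule to \eqref{e:constraint_corner1} and use the surjectivity of $W_{\partial\partial}^{(1,1)}$ to define $Q^{\partial}K$ up to an element $\mathbb{K}\in\Ker{W_{\partial\partial}^{(1,1)}}$ (with \eqref{e:constraint_corner2} already needed at this stage), do the same with \eqref{e:constraint_corner2} to define $Q^{\partial}L$ up to $\mathbb{L}\in\Ker{W_{\partial\partial}^{(1,1)}}$, and handle \eqref{e:constraint_corner4b} by partially fixing the residual freedom in $Q^{\partial}L$ (in the paper, $\epsilon_n\mathbb{L}=0$). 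The genuine gap is in how you treat \eqref{e:constraint_corner3b}.

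You propose to enforce $\epsilon_n Q^{\partial}K=0$ by spending the $\mathbb{K}$-freedom, justified by ``a dimension count of the kind used in the proof of Proposition~\ref{prop:corner_extension_PC}''. That count in fact goes against you: $\Ker{W_{\partial\partial}^{(1,1)}}$ has only $3$ local degrees of freedom ($\dim\Ima W_{\partial\partial}^{(1,1)}=5$ out of $8$), so $\epsilon_n\mathbb{K}$ sweeps out at most a $3$-dimensional subspace of $\Omega_{\partial\partial}^{1,2}$, which has $12$; a generic obstruction cannot be cancelled this way. Moreover, the $\mathbb{K}$-freedom is not available for this purpose: the invariance of \eqref{e:constraint_corner2} forces $\mathbb{K}=0$, and it is this step that produces $Q^{\partial}L$. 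The actual mechanism is different: with $\mathbb{K}=0$ one finds $\epsilon_n Q^{\partial}K=\epsilon_n(\dd_{\omega_m}e+\dd_\omega e_m+L^{\omega}_{\xi}K+[c,K]+\dd_{\omega_m}(K\xi^m)+L)$, and this vanishes on the constrained locus precisely because of \eqref{e:constraint_corner4b} (together with \eqref{e:constraint_corner3b} itself, to convert $\epsilon_n[c,K]$ into $-[c,\epsilon_n]K$, etc.); that is exactly why the fourth constraint had to be added, as the remark before the lemma indicates. So if you run your plan in the order \eqref{e:constraint_corner1}, \eqref{e:constraint_corner3b}, \eqref{e:constraint_corner2}, \eqref{e:constraint_corner4b}, the second step either has no solution for $\mathbb{K}$ or fixes components of $\mathbb{K}$ that clash with the value $\mathbb{K}=0$ required by \eqref{e:constraint_corner2}. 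The repair is to reorder: fix $\mathbb{K}=0$ by invariance of \eqref{e:constraint_corner2}, deduce the preservation of \eqref{e:constraint_corner3b} from \eqref{e:constraint_corner4b}, and finally secure \eqref{e:constraint_corner4b} by imposing $\epsilon_n\mathbb{L}=0$ on the remaining ambiguity of $Q^{\partial}L$.
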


\begin{proof}
We use the expressions of the components of $Q^{\partial}$ recalled in \eqref{e:Q-boundary}. We start from \eqref{e:constraint_corner1}. After a short computation, it is possible to see that $Q^{\partial} \gamma_m^\dag = Q^{\partial}e K + e Q^{\partial}K$ is satisfied modulo a term proportional to \eqref{e:constraint_corner2} by choosing 
\begin{align*}
    Q^{\partial}K = \dd_{\omega_m} e + \dd_{\omega}e_m + L_{\xi}^{\omega} K + [c,K] + \dd_{\omega_m} ( K \xi^m) +L + \mathbb{K},
\end{align*}
where $\mathbb{K} \in \Ker{W_{\partial\partial}^{(1,1)}}$ is not fixed by this equation. We use this freedom to choose a $Q^{\partial}K$ such that \eqref{e:constraint_corner2} is invariant as well. Indeed, it is a long but straightforward computation to show that \eqref{e:constraint_corner2} is invariant  and the correct choice for $Q^{\partial}K$ is with $\mathbb{K}=0$ and 
\begin{align*}
    Q^{\partial}L =& L_{\xi}^{\omega} L + [c,L] + \dd_{\omega_m} (L\xi^m) + \dd_{\omega}(\lambda\sigma_m)+[(\mathbb{V}_{\omega})_m,e]+[\mathbb{V}_{\omega},e_m] + \iota_{\partial_m \xi} \dd_{\omega} e +[\lambda\epsilon_n, (F_{\omega})_m]\\&+ \dd_{\omega_m}(\lambda \widehat{Y}^{m}K)+ \lambda \iota_{\widehat{Y}}(\dd_{\omega} K)+\iota_{\partial_m \xi} K \dd\xi^m + [((W_{\partial\partial}^{(1,2)})^{-1}(\lambda\epsilon_n F_{\omega}))_m,e] +\mathbb{L}\\& + [(W_{\partial\partial}^{(1,2)})^{-1}(\lambda\epsilon_n F_{\omega}),e_m]+ [(W_{\partial\partial}^{(1,2)})^{-1}(\lambda \widehat{Y}^{m}y_m^\dag),e_m]+[((W_{\partial\partial}^{(0,2)})^{-1}(\lambda \iota_{\widehat{Y}}y^\dag))_m,e]\\&+\dd_{\omega}(\lambda \iota_{\widehat{Y}}K), 
\end{align*}
where $\mathbb{L} \in \Ker{W_{\partial\partial}^{(1,1)}}$ is not fixed by this equation. Lastly, \eqref{e:constraint_corner3b} is invariant thanks to \eqref{e:constraint_corner4b}, which in turn is invariant by choosing $\epsilon_n\mathbb{L}=0$. 
\end{proof}

From the previous lemma we deduce that the constraints \eqref{e:constraints_corner} define a submanifold of $\mathcal{F}^{\partial}$ compatible with $Q^{\partial}$. As a consequence, they define a pre-BFV theory.

\subsubsection{Corner theory}
Starting from this new constrained BFV theory, it is possible to build a partial symplectic reduction on the new pre-corner two-form and to write the pre-corner symplectic form and the pre-corner action in more readable variables. First, we fix a section $\epsilon_m$  of $\mathcal{V}_{\Gamma}$ that is linearly independent from $\epsilon_n$, and we only allow fields $e$ that form a basis together with $\epsilon_m$ and $\epsilon_n$.  In other words, we have that the combination $ee\epsilon_m\epsilon_n \neq 0$ everywhere.
Next, we consider the map
\begin{align*}
    \te &= e + K \xi^m,\\
    \tom &= \omega + x_m^\dag \xi^m,\\
    \tc &= c + \iota_{\xi} x_m^\dag \xi^m +W^{-1}(\lambda\epsilon_n x_m^\dag \xi^m),\\
    \epsilon_{m} &= k^m e_m+ k^a e_a + k^n \epsilon_n,\\
    \txi{m} &= \frac{1}{k^m}\xi^m,\\
    \txi{a} &= \xi^a + \frac{k^a}{k^m}\xi^m,\\
    \widetilde{\lambda} &= \lambda + \frac{k^n}{k^m}\xi^m,
\end{align*}
where $k_a,k_n, k_m$ are functions, with $k_m\neq 0$ , chosen so that $\widetilde{Q}^\partial \epsilon_m=0$. 
The target space is then defined as the direct sum 
\begin{align*}
    \underbrace{\Omega_{\partial\partial\text{nd}}^{1,1}}_{\te} \oplus \underbrace{\mathcal{A}_{\text{red}}^{\partial\partial}}_{\tom}\oplus \underbrace{\Omega_{\partial\partial}^{0,2}[1]}_{\tc}\oplus \underbrace{\mathfrak{X}[1](\Gamma)}_{\txi{}}\oplus
    \underbrace{\Omega_{\partial\partial}^{0,0}[1]}_{\txi{m}}\oplus
    \underbrace{\Omega_{\partial\partial}^{0,0}[1]}_{\widetilde{\lambda}},
\end{align*}
where the fields must satisfy 
\begin{align*}
    \txi{m} \epsilon_m \dd_{\tom}\te + \widetilde{\lambda}\epsilon_n \dd_{\tom}\te = \te (\widetilde{\lambda}\widetilde{\sigma}+ \txi{m}\widetilde{L}),\\
    \txi{m} \epsilon_n \dd_{\tom}\te = \te \widetilde{\sigma}\txi{m},\\
    \txi{m} \epsilon_m \widetilde{\sigma} + \te \widetilde{\sigma}_m \txi{m} + \widetilde{L}\epsilon_n\txi{m}=0,
\end{align*}
for some $\widetilde{\sigma}\in \Omega_{\partial\partial}^{1,1}$, $\widetilde{\sigma}_m\in\Omega_{\partial\partial}^{0,1}$ and  $\widetilde{L}\in\Omega_{\partial\partial}^{1,1}$.

With these variables, the pre-corner two-form and the pre-corner action are, respectively,
\begin{align}\label{e:new_pre-corner_form}
    \widetilde{\varpi}^{\partial\partial}&= \int_{\Gamma} \left(\delta \tc \te \delta \te + \delta (\iota_{\txi{}}\te\te)\delta \tom + \delta (\epsilon_m\txi{m}\te)\delta \tom + \delta (\widetilde{\lambda} \epsilon_n \te)\delta \tom\right), \\
    \widetilde{S}^{\partial\partial} &= \int_{\Gamma}\Big(\frac{1}{4}[\tc,\tc]\te\te
    + \iota_{\txi{}}\te\te \dd_{\tom} \tc + \epsilon_m\txi{m}\te \dd_{\tom} \tc + \widetilde{\lambda} \epsilon_n \te \dd_{\tom} \tc \label{e:new_pre-corner_action}\\
    &\phantom{=} + \frac{1}{4}\iota_{\txi{}}\iota_{\txi{}}(\te\te) F_{\tom}+ \iota_{\txi{}}\te \epsilon_m\txi{m}F_{\tom}+  \iota_{\txi{}}\te\widetilde{\lambda} \epsilon_n F_{\tom}+ \epsilon_m\txi{m} \widetilde{\lambda} \epsilon_n F_{\tom}\Big). \nonumber
\end{align}

It is also possible to give an explicit expression of the cohomological vector field $\widetilde{Q}^{\partial\partial}$. This can be either be computed as the Hamiltonian vector field of the action $\widetilde{S}^{\partial\partial}$ or pushed forward from the boundary vector field ${Q}^{\partial}$. Both these methods lead to the following expression:
\begin{align*}
    \widetilde{Q}^{\partial\partial} \te &= [\tc , \te] + L_{\txi{}}^{\tom} \te + \dd_{\tom} (\epsilon_m \txi{m}+ \widetilde{\lambda} \epsilon_n) + \widetilde{\lambda} \widetilde{\sigma} + \widetilde{L}\txi{m}, \\
    \widetilde{Q}^{\partial\partial} \txi{m} &= X_m^{[m]} + X_n^{[m]} + \widetilde{\lambda}\widetilde{\sigma}_m^{[m]} \txi{m},\\
    \widetilde{Q}^{\partial\partial} \txi{a} &= X_m^{[a]} + X_n^{[a]} + \widetilde{\lambda}\widetilde{\sigma}_m^{[a]} \txi{m}+ \frac{1}{2}[\txi{}, \txi{}]^a,\\
    \widetilde{Q}^{\partial\partial}  \widetilde{\lambda}&= X_m^{[n]} + X_n^{[n]} + \widetilde{\lambda}\widetilde{\sigma}_m^{[n]} \txi{m},\\
    \widetilde{Q}^{\partial\partial} \tom &= \dd_{\tom} \tc - \iota_{\txi{}}F_{\tom} + (W_{\partial\partial}^{(1,2)})^{-1}((\epsilon_m \txi{m}F_{\tom}+\epsilon_n \widetilde{\lambda}F_{\tom}) + \mathbb{V}_{\tom},\\
    \widetilde{Q}^{\partial\partial} \tc &= \frac{1}{2} [\tc,\tc] + \frac{1}{2} \iota_{\txi{}}\iota_{\txi{}}F_{\tom} + \iota_{\txi{}}(W_{\partial\partial}^{(1,2)})^{-1}(\epsilon_m \txi{m}F_{\tom}+\epsilon_n \widetilde{\lambda}F_{\tom}) + \iota_{\txi{}}\mathbb{V}_{\tom}\\
     + &(W_{\partial\partial}^{(0,2)})^{-1}(\epsilon_m \txi{m}\mathbb{V}_{\tom}+\epsilon_n \widetilde{\lambda}\mathbb{V}_{\tom})+ (W_{\partial\partial}^{(0,2)})^{-1}((\epsilon_m \txi{m}+\epsilon_n \widetilde{\lambda})(W_{\partial\partial}^{(1,2)})^{-1}(\epsilon_m \txi{m}F_{\tom}+\epsilon_n \widetilde{\lambda}F_{\tom})),
\end{align*}
where $X_m= [\tc, \epsilon_m \txi{m}]+ L_{\txi{}}^{\tom}(\epsilon_m \txi{m})$, $X_n= [\tc, \epsilon_n \widetilde{\lambda}]+ L_{\txi{}}^{\tom}(\epsilon_n \widetilde{\lambda})$, $\widetilde{\sigma}= \sigma + X^{(m)}K+[\epsilon_n, x^{\dag}_m\xi^{m}]+[A\xi^m, \te]$, $\widetilde{L}= Lk^m + k^n \widetilde{\sigma}+k^a (d_{\tom}\te)_a$,  and $\widetilde{\sigma}_m=k^m \sigma_m +k^m X^{(a)}K_a + k^a \sigma_a + k^a X^m K_a$. The square brackets denote the components with respect to the basis $\{\te, \epsilon_m, \epsilon_n\}$, e.g., $X_m= X_m^{[a]}\te_a + X_m^{[m]}\epsilon_m+ X_m^{[n]}\epsilon_n$.\footnote{Note that using the properties of $e, \epsilon_m$ and $ \epsilon_n$, it would be possible to express these components without local coordinates, using the analogue of the map \eqref{e:hatmap} for the corner fields. Hence these expressions do not depend on the choice of local coordinates.}
Since the two form \eqref{e:new_pre-corner_form} is still degenerate (see below), the Hamiltonian vector field $\widetilde{Q}^{\partial\partial}$ is not unique, as it can be seen by the presence of inverses of maps $(W_{\partial\partial}^{(1,2)})$ which are not injective.

The two-form \eqref{e:new_pre-corner_form} is not symplectic. The equations defining its kernel are the following:
%\begin{subequations}
\begin{align*}
\delta \tc: &\quad  \te X_{\te}  =0, \\
\delta \te: &\quad  \te X_{\tc}  - \te \iota_{\txi{}} X_{\tom} - \widetilde{\lambda} \epsilon_n X_{\tom}- \epsilon_m \txi{m}  X_{\tom}=0,\\
\delta \txi{}: &\quad \te_{\bullet} \te X_{\tom}=0,\\
\delta \tom: &\quad - \iota_{X_{\txi{}}} \te \te - \iota_{\txi{}}( \te  X_{\te}) + \epsilon_m X_{\txi{m}} \te - \epsilon_m \txi{m} X_{\te} \nonumber\\ & \qquad -X_{\widetilde{\lambda}} \epsilon_n \te   -\widetilde{\lambda} \epsilon_n X_{\te}  =0,\\
\delta \txi{m}: &\quad \epsilon_m \te X_{\tom}=0,\\
\delta \widetilde{\lambda}: &\quad -\epsilon_n \te  X_{\tom}=0.
\end{align*}
%\end{subequations}

We can simplify this system by noting that the third and the last two equations together form the equation $\te  X_{\tom}=0.$ 
Hence, it can be rewritten as
%\begin{subequations}
\begin{align*}
 &  \te X_{\te}  =0, \\
 &  \te (X_{\tc}  - \iota_{\txi{}} X_{\tom}) - (\widetilde{\lambda} \epsilon_n + \epsilon_m \txi{m})  X_{\tom}=0,\\
 &  \te X_{\tom}=0,\\
 &  \te ( -\iota_{X_{\txi{}}} \te   + \epsilon_m X_{\txi{m}} -X_{\widetilde{\lambda}} \epsilon_n) - (\epsilon_m \txi{m}   -\widetilde{\lambda} \epsilon_n ) X_{\te}  =0.
\end{align*}
%\end{subequations}

This system is still singular since the map $W_{\partial\partial}^{(0,2)}$ appearing in the second equation is  neither injective nor surjective, and the map $W_{\partial\partial}^{(0,1)}$ appearing in the fourth is injective but not surjective.
However, it is worth noting that with the extra requests $(\widetilde{\lambda} \epsilon_n + \epsilon_m \txi{m})  X_{\tom}=0$ and $(\epsilon_m \txi{m}   -\widetilde{\lambda} \epsilon_n ) X_{\te}  =0$ we get $X_{\te}  =0$, $X_{\tom}=0$ from the first and the third equation, while the second identifies equivalence classes of $[c]$ and the fourth can be solved yielding $X_{\txi{}}$, $X_{\txi{m}}$ and $X_{\lambda}$.

\subsubsection{\texorpdfstring{$P_\infty$ structure }{Poisson infinity structure}}\label{s:constrained_Poisson_algebra}
Let us now analyze the $P_\infty$ structure of this constrained theory. Since the two-form is not symplectic, as in the general case we have to consider the construction explained in Section \ref{s:generalization_Poisson}. In order to keep the notation light, in this section we drop the tildes on the fields since no confusion can arise. The splitting that we consider here follows the one of the general theory described in Section \ref{s:Pstru}. Indeed, we define
% Following the notation introduced in section \ref{sec:BF2Vstructure}, we consider a splitting of the Hamiltonian functionals and define 
$\mathfrak{h}$ to be a subalgebra of functionals in the variables $e, \xi, \lambda$ and $\xi^m$. As before, the projection to it is obtained by fixing $\omega$ to a background connection $\omega_0$  and by setting to zero all the other fields. The Hamiltonian functionals that we consider are again derived from the general case (we also use the same notation) and are the following:
\begin{align*}
    J_{\varphi}&= \int_{\Gamma} \frac12\varphi ee,\\
    M_{Y}&= \int_{\Gamma} Y (\iota_{\xi} e + \alpha)e,\\
    K_{Z}&= \int_{\Gamma} Z \left(\iota_{\xi}e\left(\frac{1}{2}\iota_{\xi} e + \alpha\right)+\frac{1}{2}\alpha^2\right),
\end{align*}
where $\alpha= \epsilon_n \lambda + \epsilon_m \xi^m$.\footnote{Note that here and in the following expression of the Hamiltonian vector fields, $\epsilon_m$ is fixed; hence, there is symmetry between the directions $m$ and $n$, while in the general case $e_m$ is a field of the theory and $\epsilon_n$ is fixed.}
These functionals are Hamiltonian because it is possible to construct the corresponding Hamiltonian vector fields, which read
\begin{align*}
    \mathbb{J}_{\varphi}&= \int_{\Gamma}\varphi \frac{\delta}{\delta c}, \\
    \mathbb{M}_Y&= \int_{\Gamma} Y \frac{\delta}{\delta \omega},\\
    \mathbb{K}_Z&= \int_{\Gamma}\Bigg( \left(-\iota_\xi Z + (W_{\partial\partial}^{(2,3)})^{-1}(\alpha Z)\right)\frac{\delta}{\delta \omega}\\
    & \quad +\left(-\frac{1}{2}\iota_\xi \iota_\xi Z+ \iota_\xi(W_{\partial\partial}^{(2,3)})^{-1}(\alpha Z)-(W_{\partial\partial}^{(2,3)})^{-1}(\alpha (W_{\partial\partial}^{(2,3)})^{-1}(\alpha Z)) \right)\frac{\delta}{\delta c}\Bigg).
\end{align*}
These functionals form a $P_\infty$ subalgebra of $\mathfrak{h}$ and the corresponding brackets read exactly as in the general case.

\begin{remark}
As in the general case, there is a similarity between the structure of the subalgebra of observables and that of $BF$ theory. 
\end{remark}

\subsection{Tangent theory}\label{sec:tangenttheory}

Let us now consider an even simpler case where we assume $\xi^m=0$ and $\lambda=0$ on the corner.\footnote{We call this theory \emph{tangent}\/ because we set to zero the transversal vector fields $\xi^m$ and $\lambda$ and we retain only the tangential vector field $\xi$.} As we will see, these two conditions are sufficient 
in order to get a regular kernel,  so we can perform a symplectic reduction and get a proper BF$^2$V theory. 

\begin{remark}
Note that assuming either only $\xi^m=0$ or only $\lambda=0$ is not sufficient to get a regular kernel. For example, considering the first case, we get that the pre-corner two-form becomes
\begin{align*}
    \widetilde{\varpi}_{part}^{\partial}= \int_{\Gamma} & (\delta c e \delta e - \iota_{\delta \xi} e e \delta \omega- \iota_{\xi}( e \delta e) \delta \omega -\delta \lambda \epsilon_n e  \delta \omega -\lambda \epsilon_n \delta e  \delta \omega )
\end{align*} 
on the space $\widetilde{\mathcal{F}}^{\partial}_{\text{part}}$ (given by the restriction to the corner of the fields appearing above).
The equations defining the kernel of the corresponding application $(\widetilde{\varpi}_{part}^{\partial})^{\sharp}$ are
\begin{subequations}
\begin{align}
    \delta c: &\quad  e X_e  =0, \label{e:kernel_Xe_part}\\
    \delta e: &\quad  e X_c -e \iota_\xi X_{\omega}-\lambda \epsilon_n X_{\omega}=0, \\%\label{e:kernel_Xc_part}\\
    \delta \xi: &\quad e_{\bullet} e X_{\omega}=0,\\
    \delta \omega: &\quad - \iota_{X_\xi} e e - \iota_{\xi}( e  X_e)-X_\lambda \epsilon_n e -\lambda \epsilon_n X_e  =0,\\
    \delta \lambda: &\quad -\epsilon_n e  X_\omega=0.
\end{align}
\end{subequations}
This system is still singular. Indeed, the third element of the second equation might not be proportional to $e$ and the map $W_{\partial\partial}^{(0,2)}$ is not surjective.
\end{remark}

Let us now consider, as announced, the case $\xi^m=0$ and $\lambda=0$; i.e., we retain only the tangential vector fields. 
The pre-corner two-form now reads
\begin{align*}
\widetilde{\varpi}_{\text{part}}^{\partial}= \int_{\Gamma} & \left(\delta c e \delta e - \iota_{\delta \xi} e e \delta \omega- \iota_{\xi}( e \delta e) \delta \omega\right).
\end{align*} 
The only remaining fields are those displayed in this formula. Note that, in particular, the transversal component $e_m$ of the coframe has disappeared.
The only remaining, open, condition is that $e \in \Omega^1(\Gamma, \mathcal{V}_\Gamma)$ should satisfy 
\begin{equation}\label{e:eeemen}
    ee\epsilon_m\epsilon_n\not=0,
\end{equation} where $\epsilon_m$ and  $\epsilon_n$ are fixed linearly independent sections of $\mathcal{V}_\Gamma$.\footnote{\label{f:VGamma}The dynamical field $e_m$ is now replaced by a fixed field $\epsilon_m$. Also note that, since $\mathcal{V}_\Gamma$ is assumed to arise as a restriction to $\Gamma$ from the boundary $\Sigma$, we are tacitly assuming that $\mathcal{V}_\Gamma$ is isomorphic to $T\Gamma\oplus\underline{\bbR}^2$.} 
% In particular, $e \in \Omega_\text{nd}^1(\Gamma, \mathcal{V}_\Gamma)$.
The equations defining the kernel of the corresponding application $(\widetilde{\varpi}_{part}^{\partial})^{\sharp}$ are
\begin{subequations}
\begin{align*}
    \delta c: &\quad  e X_e  =0, \\%\label{e:kernel_Xe_part2}\\
    \delta e: &\quad  e X_c -e \iota_\xi X_{\omega}=0, \\ %\label{e:kernel_Xc_part2}\\
    \delta \xi: &\quad e_{\bullet} e X_{\omega}=0,\\
    \delta \omega: &\quad - \iota_{X_\xi} e e - \iota_{\xi}( e  X_e)=0.
\end{align*}
\end{subequations}
This system is not singular.
Let us then define the following theory:
\begin{definition}\label{def:BFlikecornertheory}
We call \emph{$BF$-like corner theory} the BF$^2$V theory on the space of fields
\begin{align*}
    \check{\mathcal{F}}^{\partial\partial}= T^*[1]\left(\Omega_{\partial \partial}^{2,2}\oplus (\Omega_{\partial \partial}^{2,4} \otimes \Omega^1(\Gamma) )\right)
\end{align*}
with symplectic form
\begin{align*}
\check{\varpi}^{\partial\partial}= \int_{\Gamma} & \left(\delta \tc \delta \widetilde{E} - \iota_{\delta \widetilde{\xi}}  \delta \widetilde{P}\right)
\end{align*} 
and action
\begin{align*}
\check{S}^{\partial\partial} = \int_{\Gamma} &\left(\frac{1}{2}[\tc,\tc] \widetilde{E} + \iota_{\widetilde{\xi}} \widetilde{E} \dd_{\omega_0} \tc -  \frac{1}{2}\iota_{[\widetilde{\xi}, \widetilde{\xi}]} \widetilde{P} + \frac{1}{2}\widetilde{E} \iota_{\widetilde{\xi}}\iota_{\widetilde{\xi}}F_{\omega_0}\right),
\end{align*}
where $\omega_0$ is a reference connection.
\end{definition}

\begin{remark}
It is a straightforward check  that this is actually a BF$^2$V theory, i.e., that the action $\widetilde{S}^{\partial}$ satisfies the classical master equation.
\end{remark}

Furthermore, we can define a map  $\widetilde{\pi_\text{red}}:\widetilde{\mathcal{F}}^{\partial}\rightarrow\check{\mathcal{F}}^{\partial\partial}$:
\begin{equation*}
\widetilde{\pi_\text{red}} := \begin{cases}
\widetilde{E}= \frac{1}{2} e e &\\
\widetilde{c}= c + \iota_{\xi}(\omega-\omega_0) &\\
\widetilde{\xi}^i= \xi^i &\\
\widetilde{P}_i = \frac{1}{2} e e (\omega_i-\omega_{0i}) & 
\end{cases}
\end{equation*}
Notice that here we are assuming to work around a connection $\omega_0$.
It is a short computation to show that this map is compatible with the two-forms (respectively, the pre-corner form $\widetilde{\varpi}_{part}^{\partial}$ on $\widetilde{\mathcal{F}^{\partial}}$ and $\check{\varpi}^{\partial\partial}$ on $\check{\mathcal{F}}^{\partial\partial}$). 

Define now the submanifold $\mathcal{E} \subset \check{\mathcal{F}}^{\partial\partial}$ such that $(E,P,c,\xi)\in \mathcal{E}$ if $E$ is of the form $\frac{1}{2}ee$ for some $e$ satisfying $ee\epsilon_m\epsilon_n\not=0$, with $\epsilon_m$ and $\epsilon_n$ fixed linearly independent sections of $\mathcal{V}_\Gamma$ as above.\footnote{With a slight abuse of notation, we denote the fields in $\mathcal{E}$ with the same letter of those in $\widetilde{\mathcal{F}}^{\partial\partial}$ but without the tilde.} These conditions may be translated to requiring that the Pfaffian of $E$ vanishes 
and $E\epsilon_m\epsilon_n\neq 0$. In these cases, we drop the tilde.
As a consequence of the first statement of Proposition~\ref{p:finalprop}, which we prove in Appendix~\ref{s:appendixPfaffian}, 
$\mathcal{E}$ coincides with the image of $\widetilde{\pi_\text{red}}$. 

Let now $p': \Omega_{\partial\partial}^{0,2} \rightarrow \Omega_{\partial\partial}^{0,2}$ be a projection to the complement of the kernel of the map $W_{\partial\partial}^{(0,2)}: \Omega_{\partial\partial}^{0,2} \rightarrow \Omega_{\partial\partial}^{1,3}$. Then, the characteristic distribution of $\mathcal{E}$ is given by the vector fields $X_{p'c}$. Hence, we have the following

\begin{proposition}
The BF$^2$V space of  fields $\mathcal{F}^{\partial\partial}$ is symplectomorphic to the symplectic reduction of $\widetilde{\mathcal{F}}^{\partial}_{\text{part}}$.
\end{proposition}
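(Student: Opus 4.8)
The plan is to realise the map $\widetilde{\pi_\text{red}}\colon\widetilde{\mathcal F}^{\partial}_{\text{part}}\to\check{\mathcal F}^{\partial\partial}$ as the first stage of the symplectic reduction of $\widetilde{\mathcal F}^{\partial}_{\text{part}}$, the second stage being precisely the reduction of $\mathcal E$ by its characteristic distribution. We already know that $\widetilde{\pi_\text{red}}$ is compatible with the two-forms, $\widetilde{\pi_\text{red}}^{\,*}\bigl(\check{\varpi}^{\partial\partial}|_{\mathcal E}\bigr)=\widetilde{\varpi}_{\text{part}}^{\partial}$, and that its image is exactly $\mathcal E$ by the first statement of Proposition~\ref{p:finalprop}. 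The first step is to upgrade this to the assertion that $\widetilde{\pi_\text{red}}$ is a surjective submersion onto $\mathcal E$: surjectivity is Proposition~\ref{p:finalprop}, and a local section through $(E,P,c,\xi)\in\mathcal E$ is obtained by choosing an $e$ with $\frac12 ee=E$ (possible since the Pfaffian of $E$ vanishes and $E\epsilon_m\epsilon_n\neq 0$), then an $\omega$ with $\frac12 ee\,(\omega_i-\omega_{0i})=P_i$ (possible since $v\mapsto\frac12 ee\,v$ maps $\Omega_{\partial\partial}^{0,2}$ onto $\Omega_{\partial\partial}^{2,4}$), and finally $c'=c-\iota_{\xi}(\omega-\omega_0)$. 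This makes $\ker T\widetilde{\pi_\text{red}}$ a regular integrable distribution with leaf space $\mathcal E$, and shows that $\widetilde{\varpi}_{\text{part}}^{\partial}$ descends along $\widetilde{\pi_\text{red}}$ to $\check{\varpi}^{\partial\partial}|_{\mathcal E}$.

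Next I would invoke the elementary fact that, for a surjective submersion, the kernel of a pulled-back two-form is the preimage of the kernel of the original; applied here this gives
\[
\ker\widetilde{\varpi}_{\text{part}}^{\partial}=\bigl(T\widetilde{\pi_\text{red}}\bigr)^{-1}\bigl(\ker\check{\varpi}^{\partial\partial}|_{\mathcal E}\bigr),
\]
since $\check{\varpi}^{\partial\partial}|_{\mathcal E}\bigl(T\widetilde{\pi_\text{red}}(v),\,\cdot\,\bigr)$ vanishes on $T\mathcal E$ exactly when $T\widetilde{\pi_\text{red}}(v)\in\ker\check{\varpi}^{\partial\partial}|_{\mathcal E}$, $T\widetilde{\pi_\text{red}}$ being fibrewise surjective. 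By the computation preceding the statement, $\ker\check{\varpi}^{\partial\partial}|_{\mathcal E}$ is spanned by the vector fields $X_{p'c}$, a distribution of constant rank $\operatorname{rank}\Ker{W_{\partial\partial}^{(0,2)}}$ because $p'$ is a fixed projection onto a complement of $\Ker{W_{\partial\partial}^{(0,2)}}$. Hence $\ker\widetilde{\varpi}_{\text{part}}^{\partial}$ is the direct sum of $\ker T\widetilde{\pi_\text{red}}$ and a horizontal lift of $\langle X_{p'c}\rangle$, so it has constant rank, consistently with the fact, already checked in this section, that the kernel equations of $\widetilde{\varpi}_{\text{part}}^{\partial}$ form a nonsingular system.

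Finally I would perform the reduction in two steps: quotienting $\widetilde{\mathcal F}^{\partial}_{\text{part}}$ by $\ker T\widetilde{\pi_\text{red}}$ yields $\mathcal E$ with the presymplectic form $\check{\varpi}^{\partial\partial}|_{\mathcal E}$, and then quotienting $\mathcal E$ by $\langle X_{p'c}\rangle$ yields, by definition, the BF$^2$V space of fields $\mathcal F^{\partial\partial}$ with its symplectic form. The composite of the two quotients is the quotient by the full kernel $\ker\widetilde{\varpi}_{\text{part}}^{\partial}$, i.e.\ the symplectic reduction of $\widetilde{\mathcal F}^{\partial}_{\text{part}}$, and the two descended forms agree because both pull back, along the respective projections, to $\widetilde{\varpi}_{\text{part}}^{\partial}$; this gives the claimed symplectomorphism (one also checks that $\widetilde{\pi_\text{red}}$ intertwines the actions, so that the full BF$^2$V structures correspond). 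I expect the main obstacle to be the first step — establishing that $\widetilde{\pi_\text{red}}$ is a submersion with fibres of the expected type — which rests on the Pfaffian analysis of Proposition~\ref{p:finalprop} and on the kernel/image bookkeeping for the maps $W_{\partial\partial}^{(i,j)}$ of Appendix~\ref{s:appendix_notation}, together with the customary care needed in the infinite-dimensional setting to guarantee that the intermediate and final quotients are smooth, which is handled via the explicit local trivialisations furnished by the section constructed above.
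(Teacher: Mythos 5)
Your argument is correct and follows essentially the same route as the paper, which states this proposition as an immediate consequence of the preceding facts: the compatibility of $\widetilde{\pi_\text{red}}$ with the two-forms, the identification of its image with $\mathcal{E}$ via Proposition~\ref{p:finalprop}, and the identification of the characteristic distribution of $\mathcal{E}$ with the $c$-shifts along $\Ker{W_{\partial\partial}^{(0,2)}}$. Your contribution is simply to spell out the standard bookkeeping (local sections making $\widetilde{\pi_\text{red}}$ a surjective submersion, kernel of a pulled-back form being the preimage of the kernel, and the two-step quotient), which is consistent with the paper's later remark that $\widetilde{\pi_\text{red}}$ alone is not yet the reduction.
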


We can express the symplectic form on the space of corner fields as  
\begin{align*}
\varpi^{\partial\partial}= \int_{\Gamma} & \left(\delta [c] \delta E - \iota_{\delta \xi}  \delta P\right),
\end{align*} 
where $E$ is a pure tensor as above and $[c]$ denotes the equivalence class of elements $c \in \Omega_{\partial \partial}^{0,2}[1]$ under the equivalence relation $c+ d \sim c$ for $d \in \Omega_{\partial \partial}^{0,2}[1]$ such that $ed=0$.

From the expression of the pre-corner action in this particular case, 
\begin{align*}%\label{e:action-precorner-lamdba-xi^m}
\widetilde{S}^{\partial} = \int_{\Gamma} &\left(\frac{1}{4}[c,c] ee + \frac{1}{2}\iota_\xi (ee) \dd_\omega c +  \frac{1}{4}\iota_\xi \iota_\xi (ee) F_\omega \right),
\end{align*}
we can deduce the corresponding action on the corner:
\begin{align*}
S^{\partial\partial}_{\omega_0} = \int_{\Gamma} &\left(\frac{1}{2}[[c],[c]] E + \iota_{{\xi}} (E) \dd_{\omega_0} [c] -  \frac{1}{2}\iota_{[{\xi}, {\xi}]} {P} + \frac{1}{2}E \iota_{{\xi}}\iota_{{\xi}}F_{\omega_0}\right).
\end{align*}
This expression is invariant under the quotient map above: $\frac{1}{2}[c,c] ee = [ce,c]e - [e,c] e c= [ce,ce]$, $ \iota_\xi (ee) d c= -d\iota_\xi e e  c=  L_{\xi}(ee) c = 2 (L_{\xi}e) ec$.

\begin{remark}\label{r:openE}
The open condition $E\epsilon_m\epsilon_n\neq 0$ may possibly be dropped to get an extended version of the tangent corner theory (this is analogous to the observation that in $2+1$ PC gravity one may extend the theory dropping the condition that the coframe be nondegenerate). One might want, however, to retain the weaker open condition $E\not=0$ to ensure that the closed condition $\operatorname{Pf}(E)=0$ still defines a submanifold.
\end{remark}

\begin{remark}
The map $\pi_\text{red}$ is not strictly speaking the reduction with respect to the kernel of the pre-corner two-form but does satisfy the BV-BFV axioms. 
\end{remark}

\subsubsection{\texorpdfstring{$P_\infty$ structure }{Poisson infinity structure}}\label{s:Poissontangent}

We start our analysis of the $P_\infty$ structure of the tangent theory. Since it is a proper BF$^2$V theory, we can apply the results Section  \ref{sec:BF2Vstructure}.

We first study the structure of the $BF$-like corner theory as in Definition \ref{def:BFlikecornertheory}, and then, we give an implicit description of the corner Poisson structure of gravity by means of a quotient with respect to a suitable ideal.  Note that in this section we will drop the tilde on the fields, since no confusion can arise.

The case at hand is similar to that of $BF$ theory. The first step is to choose a polarization and reinterpret the space of fields as a cotangent bundle. We will consider two interesting polarizations.
\subsubsection{The first polarization}
Here, we choose the space of fields as the cotangent bundle of the space $\mathcal{N}$ with coordinates $E$ and $\xi$ and choose $\{P=c=0\}$\footnote{Choosing $P=0$ is equivalent to choose $\omega=\omega_0$ where $\omega_0$ is a reference connection.} as the Lagrangian submanifold. From the action, we get $\pi= \pi_0+ \pi_1 + \pi_2$ with
\begin{align*}
    \pi_0&= \int_\Gamma \frac{1}{2} E \iota_{\xi} \iota_{\xi} F_{\omega_0},\\
    \pi_1&= \int_\Gamma \left(\iota_{\xi} E \dd_{\omega_0} \frac{\delta}{\delta E} - \frac{1}{2}\iota_{[\xi,\xi]}\frac{\delta}{\delta \xi}\right),\\
    \pi_2&= \int_\Gamma \frac{1}{2}\left[\frac{\delta}{\delta E},\frac{\delta}{\delta E}\right] E. 
\end{align*}
These  equip $C^{\infty}(\mathcal{N})$ with the structure of a curved $P_\infty$ algebra. Note that this polarization roughly corresponds to the choice of subalgebra $\mathfrak{h}$ that we have made for the general and constrained theory in Sections \ref{s:Pstru} and \ref{s:constrained_Poisson_algebra}. Indeed, we consider  a subalgebra of linear functionals of the form\footnote{Once again, we use here the same notation for the functionals as in the general and constrained cases.}:
\begin{align*}
    J_{\varphi}&= \int_{\Gamma} \varphi E,\\
    M_{Y}&= \int_{\Gamma} Y \iota_{\xi} E,\\
    K_{Z}&= \int_{\Gamma} \frac{1}{2}Z \iota_{\xi}\iota_{\xi} E.
\end{align*}
The derived brackets are as follows
\begin{align*}
    &\{\}_0= \int_\Gamma \frac{1}{2} E \iota_{\xi} \iota_{\xi} F_{\omega_0}, && && \\
    &\{J_{\varphi}\}_1= M_{\dd_{\omega_0}\varphi}, &&
    \{M_{Y}\}_1 = K_{\dd_{\omega_0}Y}, &&
    \{K_{Z}\}_1 = 0, \\
    &\{J_{\varphi}, J_{\varphi'}\}_2 = J_{[\varphi, \varphi']}, &&
    \{J_{\varphi}, M_Y\}_2 = M_{[\varphi, Y]}, &&
    \{J_{\varphi}, K_Z\}_2 = K_{[\varphi,Z]},\\
    &\{M_Y, M_{Y'}\}_2 = K_{[Y, Y']}, &&
    \{M_Y, K_Z\}_2 = 0, &&
    \{K_Z, K_{Z'}\}_2 = 0.
\end{align*}
Observe the similarity with \eqref{e:UVW_BF} in $BF$ theory. Also, note that we can write
\[
\{\}_0= K_{F_{\omega_0}},
\]
so the algebra generated by $J$, $M$, and $K$ closes also under the nullary operation.
\begin{remark}
The striking similarity between the structure of the subalgebra of observable proposed in the present section and that of $BF$ theory is not accidental. In fact, the tangent theory (before the reduction) can be obtained as $BF$ theory, for the Lie algebra $\mathfrak{so(3,1)}$,  restricted to the submanifold of fields parametrized by
 \begin{align*}
     c&=c, & A &= \omega, & B^\dag &= 0,\\
     \phi&= \frac{1}{4} \iota_\xi\iota_\xi(ee), & \tau&=\frac{1}{2} \iota_\xi(ee), & B &=\frac{1}{2} ee.
 \end{align*}
\end{remark}

We now want to describe the $P_\infty$ structure of the real theory describing gravity. Hence, we have to consider the structure described above and assume that the Pfaffian of $E$ vanishes. Instead of describing it directly, we can describe the subalgebra  as the quotient of this $P_\infty$ algebra by the ideal generated by the following additional linear functionals:
\begin{align*}
    P_{\mu}&= \int_{\Gamma} \mu \mathcal{P}_E,\\
    Q_{\nu}&= \int_{\Gamma} \nu \iota_{\xi} \mathcal{P}_E,\\
    R_{\sigma}&= \int_{\Gamma} \frac12 \sigma \iota_{\xi}\iota_{\xi} \mathcal{P}_E,
\end{align*}
where $\mathcal{P}_E= \sqrt{\operatorname{Pf}(E)}$ is the square root of the Pfaffian of $E$.\footnote{Given the definition of Pfaffian in Appendix \ref{s:appendixPfaffian}, here $\mathcal{P}_E$ is a density defined as $$\mathcal{P}_E= \sqrt{\frac18\epsilon_{abcd}E_{12}^{ab}E_{12}^{cd}}\;\dd x^1\dd x^2.$$} It is worth noting that $\mathcal{P}_E$ is invariant under the action of the gauge transformations. Now we have to compute the brackets of these new linear functionals to show that they form an ideal of the $P_\infty$ algebra generated by $J, M, K, P, Q$ and $R$. Let us start from the 1-brackets. They read
\begin{align*}
    &\{P_{\mu}\}_1= Q_{\dd_{\omega_0}\mu}, &&
    \{Q_{\nu}\}_1 = R_{\dd_{\omega_0}\nu}, &&
    \{R_{\sigma}\}_1 = 0.
\end{align*}
On the other hand, all the 2-brackets containing $P,Q$ or $R$ vanish.

Hence, we can describe the $P_\infty$ algebra of such linear functionals on the space of corner fields in the \emph{tangent theory} as the quotient of the $P_\infty$ algebra generated by $J, M, K, P, Q$ and $R$ by the $P_\infty$ ideal generated by $P, Q$ and $R$.
\subsubsection{The second polarization}\label{s:secondpol}
We can now consider another polarization: we choose the space of fields as the cotangent bundle of the space $\mathcal{N}$ with coordinates $E$ and $P$ and choose $\{\xi=c=0\}$ as the Lagrangian submanifold. From the action, we get $\pi= \pi_2$ with
\begin{align*}
    \pi_2&= \int_\Gamma\left( \frac{1}{2}\left[\frac{\delta}{\delta E},\frac{\delta}{\delta E}\right] E  + \iota_{\frac{\delta}{\delta P}} (E) \dd_{\omega_0} \frac{\delta}{\delta E} -  \frac{1}{2}\iota_{[\frac{\delta}{\delta P}, \frac{\delta}{\delta P}]} {P} + \frac{1}{2}E \iota_{\frac{\delta}{\delta P}}\iota_{\frac{\delta}{\delta P}}F_{\omega_0}\right),
\end{align*}
which equips $C^{\infty}(\mathcal{N})$ with the structure of a Poisson algebra. As before, we can consider a subalgebra of linear functionals. Let \begin{align*}
  F_{X}= \int_{\Gamma} \iota_X P \quad \text{and} \quad  J_{\varphi}&= \int_{\Gamma} \varphi E.
\end{align*}
Their binary operations  are as follows:
\begin{align}\label{e:brackets_tangent_EP}
    \{J_{\varphi}, J_{\varphi'}\}_2 = J_{[\varphi, \varphi']}, &&
    \{J_{\varphi}, F_X\}_2 = J_{\iota_X \dd_{\omega_0} \varphi}, &&
    \{F_X, F_{X'}\}_2 = F_{[X, X']}+ J_{\iota_X\iota_{X'}F_{\omega_0}}.
\end{align}
As before, in order to get the structure on the gravity theory, we have to consider the ideal generated by the functional $P_{\mu}= \int_{\Gamma} \mu \mathcal{P}_E$. The only nonzero bracket is the one with $F_X$:
\begin{align*}
    \{P_{\mu}, F_X\}_2= P_{\iota_X \dd_{\omega_0}\mu}.
\end{align*}

It is worth noting that, with this polarization, the structure of linear functionals corresponds to that of (a subalgebra of) an Atiyah algebroid. 

\begin{remark}
    An Atiyah algebroid is a way of describing a combination of a Lie algebra of internal symmetries with the Lie algebra of vector fields (the infinitesimal version of a combination of a Lie group of symmetries with the group of diffeomorphisms). From this point of view, it is not unexpected that an Atiyah algebroid should appear as part of the corner structure (and in fact it also appears in some work completed after ours \cite{CL2023}). What is interesting is the explicit way in which it emerges from our construction.
\end{remark}

The goal of next section is to show this relation.

\subsubsection{Atiyah algebroids}\label{s:atiyah_tangent}
Let us begin with some definitions.

\begin{definition}
Let $M$ be a manifold. A \textbf{Lie algebroid} over $M$ is a triple $(A, [\cdot, \cdot], \rho)$ where $A \rightarrow M$ is a vector bundle over $M$, $[\cdot, \cdot]\colon \Gamma(A) \times \Gamma(A) \rightarrow \Gamma(A)$ an $\mathbb{R}$-Lie bracket, and $\rho\colon A \rightarrow TM$ a morphism of vector bundles, called the anchor, such that
\begin{align*}
    [X, g Y]= \rho (X) g \cdot Y + g[X, Y] \qquad \forall X,Y \in \Gamma(A), \; g \in C^{\infty}(M).
\end{align*}
\end{definition}

The Atiyah algebroid is a particular example of a Lie Algebroid.

\begin{definition}
Let $G$ be a Lie Group and $P \rightarrow M$ a $G$-principal bundle over $M$. The \textbf{Atiyah algebroid} is a Lie Algebroid with 
$A = TP/G$, the Lie bracket on  sections that inherited from the tangent Lie algebroid of $P$, and the anchor  induced by the quotient by $G$ of the differential map $\dd\pi\colon TP \rightarrow TM$.
\end{definition}

The Atiyah algebroid may be written in terms of the short exact sequence
\begin{align*}
    0 \rightarrow \operatorname{ad} P \rightarrow A \rightarrow TM \rightarrow 0.
\end{align*}
The algebroid that we will construct out of the corner data will be of type $A= F \oplus TM$, corresponding to a splitting of the exact sequence. By well known results, this corresponds to a map $\tau\colon TM \rightarrow A$ such that $\pi \circ \tau = \text{id}_{TM}$. Out of this map, we can construct an isomorphism between $A$ and $F \oplus TM$ as follows:
\begin{align*}
    \chi:  F \oplus TM & \rightarrow A\\
     ( a, X ) & \mapsto  \iota(a)+ \tau (X).
\end{align*}
This map is injective. Indeed, let $\chi(a,X)=0$, then $\pi(\chi(a,X))=X=0$. As a consequence,  $\iota(a)=0 $ implying $a=0$.

Using this isomorphism, we can induce an algebroid structure on $F \oplus TM$. After a short computation, we find the following structure:
\begin{align*}
    [(a,X),(b,Y)]=([a,b]+ \iota^{-1}([\iota(a),\tau(Y)]+[\tau(X),\iota(b)]+[\tau(X),\tau(Y)]-\tau[X,Y]),[X,Y])
\end{align*}

We can now introduce the map $\nabla^\tau$
\begin{align*}
    \nabla^\tau \colon \Gamma(TM) \times \Gamma (F) &\rightarrow \Gamma(F)\\
    (X,a) & \mapsto \nabla^\tau_X(a) = \iota^{-1}([\iota(a),\tau(X)])
\end{align*}

\begin{lemma}
The map $\nabla^\tau$ has the following properties:
\begin{enumerate}
    \item $\nabla^\tau$ is a connection for $F$.
    \item The curvature of $\nabla^\tau$ is given by
    \begin{align*}
        R^\tau (X,Y)= \iota^{-1}([\tau(X),\tau(Y)]-\tau[X,Y]).
    \end{align*}
\end{enumerate}
\end{lemma}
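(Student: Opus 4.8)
The plan is to verify the two assertions directly from the definition $\nabla^\tau_X(a) = \iota^{-1}([\iota(a),\tau(X)])$, using only the Leibniz rule of the Lie algebroid bracket on $A$, the fact that $\iota\colon F\to A$ is a bracket-preserving injection with image an ideal (since $\operatorname{ad}P$ is the kernel of the anchor, hence $[\Gamma(A),\iota(\Gamma(F))]\subseteq\iota(\Gamma(F))$), and the fact that $\tau$ is a section of the anchor, i.e. $\rho(\tau(X))=X$. That $\iota^{-1}$ makes sense on the relevant brackets follows precisely from $\operatorname{ad}P$ being an ideal.

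\emph{Step 1: $\nabla^\tau$ is a connection.} I would check the two defining properties of a connection. $C^\infty(M)$-linearity in $X$: since $\tau$ is $C^\infty(M)$-linear and the anchor of any element of $\iota(\Gamma(F))$ vanishes, the Leibniz rule gives $[\iota(a),\tau(gX)] = [\iota(a), g\,\tau(X)] = \rho(\iota(a))g\cdot\tau(X) + g[\iota(a),\tau(X)] = g[\iota(a),\tau(X)]$, because $\rho(\iota(a))=0$. Applying $\iota^{-1}$ yields $\nabla^\tau_{gX}a = g\nabla^\tau_X a$. The Leibniz rule in $a$: for $g\in C^\infty(M)$, $[\iota(ga),\tau(X)] = [g\,\iota(a),\tau(X)] = -[\tau(X), g\,\iota(a)] = -\rho(\tau(X))g\cdot\iota(a) - g[\tau(X),\iota(a)] = -(Xg)\iota(a) + g[\iota(a),\tau(X)]$, using $\rho(\tau(X))=X$ and the antisymmetry of the bracket. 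Applying $\iota^{-1}$ (and using that $\iota$ is $C^\infty(M)$-linear) gives $\nabla^\tau_X(ga) = (Xg)\,a + g\,\nabla^\tau_X a$, which is the Leibniz rule. $\mathbb{R}$-bilinearity is immediate.

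\emph{Step 2: the curvature formula.} By definition $R^\tau(X,Y)a = \nabla^\tau_X\nabla^\tau_Y a - \nabla^\tau_Y\nabla^\tau_X a - \nabla^\tau_{[X,Y]}a$. Translating through $\iota$, this is $\iota^{-1}$ applied to
\[
\big[[\iota(a),\tau(Y)],\tau(X)\big] - \big[[\iota(a),\tau(X)],\tau(Y)\big] - \big[\iota(a),\tau([X,Y])\big].
\]
I would now apply the Jacobi identity in $A$ to the first two terms: $[[\iota(a),\tau(Y)],\tau(X)] - [[\iota(a),\tau(X)],\tau(Y)]$ rearranges (using antisymmetry and Jacobi) to $[\iota(a),[\tau(X),\tau(Y)]]$. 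Substituting, the three terms collapse to $[\iota(a),[\tau(X),\tau(Y)] - \tau([X,Y])]$. Since $\rho([\tau(X),\tau(Y)]) = [\rho\tau(X),\rho\tau(Y)] = [X,Y] = \rho(\tau([X,Y]))$, the element $[\tau(X),\tau(Y)]-\tau([X,Y])$ lies in $\ker\rho = \operatorname{im}\iota$, so writing it as $\iota(R^\tau(X,Y))$ with $R^\tau(X,Y):=\iota^{-1}([\tau(X),\tau(Y)]-\tau[X,Y])$ and using that $\iota$ preserves brackets, we get $\iota^{-1}[\iota(a),\iota(R^\tau(X,Y))] = [a, R^\tau(X,Y)]_F$ — i.e. the curvature endomorphism acts as the adjoint action of $R^\tau(X,Y)\in\Gamma(F)$, which is exactly the claimed formula (with the standard identification of a curvature $2$-form valued in $\operatorname{ad}P$ acting on sections of $F$ via the bracket).

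I do not expect a genuine obstacle here; the only point requiring a little care is bookkeeping of signs in the Jacobi-identity manipulation of Step 2 and being explicit that $\iota^{-1}$ is well defined on each bracket that appears (guaranteed by $\operatorname{ad}P$ being an ideal and by the anchor computation that lands $[\tau(X),\tau(Y)]-\tau[X,Y]$ in $\ker\rho$). One should also state once, before the computation, the conventions being used: that $\iota$ and $\tau$ are the maps from the splitting discussed just above, and that $F$ carries the Lie bracket $[\,\cdot\,,\,\cdot\,]$ induced by $\iota$ from the fibrewise bracket on $\operatorname{ad}P$.
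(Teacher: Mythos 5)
Your strategy---direct verification of the connection axioms and of the curvature identity using only the algebroid Leibniz rule, the fact that $\operatorname{ad}P=\ker\rho$ is an ideal (so $\iota^{-1}$ is legitimate on every bracket that appears), and $\rho\circ\tau=\mathrm{id}$---is exactly the ``easy computation'' the paper has in mind, and the skeleton of the argument is sound. However, your write-up is internally inconsistent on signs, and the inconsistency traces back to the ordering of the bracket in the definition $\nabla^\tau_X(a)=\iota^{-1}([\iota(a),\tau(X)])$. In Step 1 your own displayed computation gives $[\iota(ga),\tau(X)]=-(Xg)\,\iota(a)+g\,[\iota(a),\tau(X)]$, yet you conclude $\nabla^\tau_X(ga)=(Xg)\,a+g\,\nabla^\tau_Xa$: with the ordering as literally written, the derivative term comes out with a minus sign, so the stated conclusion does not follow from the line above it. The same happens in Step 2: with that ordering, the Jacobi identity gives
\begin{align*}
[[\iota(a),\tau(Y)],\tau(X)]-[[\iota(a),\tau(X)],\tau(Y)]=[[\tau(X),\tau(Y)],\iota(a)]=-[\iota(a),[\tau(X),\tau(Y)]],
\end{align*}
not $+[\iota(a),[\tau(X),\tau(Y)]]$, so the three terms collapse to $-[\iota(a),[\tau(X),\tau(Y)]+\tau([X,Y])]$ rather than to the expression you claim, and the advertised curvature formula does not drop out.

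Both conclusions you want are correct once the convention is fixed and carried through consistently: reading the covariant derivative as $\nabla^\tau_X(a)=\iota^{-1}([\tau(X),\iota(a)])$ (i.e.\ flipping the bracket by antisymmetry in the definition), your Step 1 yields the Leibniz rule $\nabla^\tau_X(ga)=(Xg)\,a+g\,\nabla^\tau_Xa$ on the nose, and in Step 2 the rearrangement $[\tau(X),[\tau(Y),\iota(a)]]-[\tau(Y),[\tau(X),\iota(a)]]=[[\tau(X),\tau(Y)],\iota(a)]$ collapses the curvature to the adjoint action of $\iota^{-1}([\tau(X),\tau(Y)]-\tau[X,Y])$, exactly as in the Lemma and consistently with the identification of $\nabla^\tau$ with $\dd_{\omega_0}$ and of $R^\tau$ with $\iota_X\iota_Y F_{\omega_0}$ in \eqref{e:brackets_split_Atiyah}. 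So the gap is not conceptual, but as written two of your steps silently flip a sign to reach the desired formulas; you should state the ordering/sign convention once at the outset and then keep it fixed through both computations (or, if you retain the literal ordering $[\iota(a),\tau(X)]$, record the compensating signs explicitly). The parts of your argument justifying the use of $\iota^{-1}$---the ideal property of $\ker\rho$ and the anchor computation placing $[\tau(X),\tau(Y)]-\tau[X,Y]$ in $\ker\rho$---are correct and worth keeping.
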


\begin{proof}
Easy computation.
\end{proof}

Let us now denote by $\omega_0$ the connection one-form corresponding to the connection $\nabla^\tau$. Then, we can rewrite the brackets on $F \oplus TM$ as

\begin{align}\label{e:brackets_split_Atiyah}
    [(a,X),(b,Y)]=([a,b] - \iota_X \dd_{\omega_0} (b) + \iota_Y \dd_{\omega_0} (a) + \iota_X \iota_Y F_{\omega_0} ,[X,Y]).
\end{align}

The Lie algebroid structure on $A$ allows us to define a Poisson bracket on $\Gamma(A^*)$. We write this down for linear functionals. Namely, we define $U_\beta=\int_M\Phi\beta$, with $\Phi\in\Gamma(A^*)$ and $\beta\in\Gamma(A)$.
We then define
\begin{align*}
    \left\lbrace \int_{M}\Phi \beta_1,\int_{M}\Phi \beta_2\right\rbrace%_{\omega_0} 
    = \int_{M}\Phi [\beta_1,\beta_2].
\end{align*}
Let us now write $\Phi=\mathcal{F}+\mathcal{Q}$ with $\mathcal{F}\in \Gamma (\bigwedge^{\text{top}} T^*M, F^*)$ and  $\mathcal{Q}\in \Gamma (\bigwedge^{\text{top}} T^*M, T^*M)$. Then, using \eqref{e:brackets_split_Atiyah} we get
\begin{align}\label{e:brackets_algebroid}
    \left\lbrace \int_{M}(\mathcal{F}a + \mathcal{Q}X) ,\int_{M}(\mathcal{F}b + \mathcal{Q}Y)\right\rbrace%_{\omega_0} 
    = \int_{M}\!\!(\mathcal{F}([a,b] - \iota_X \dd_{\omega_0} (b) + \iota_Y \dd_{\omega_0} (a) + \iota_X \iota_Y F_{\omega_0})+ \mathcal{Q}[X,Y]).
\end{align}

\begin{theorem}
The BF$^2$V structure of the tangent theory on a corner $\Gamma$ induces an Atiyah algebroid structure on $\operatorname{ad} P \oplus T \Gamma$. 
\end{theorem}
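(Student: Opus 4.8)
The plan is to build a precise dictionary between the second polarization of the tangent corner theory and the Lie--Poisson geometry of the Atiyah algebroid of $P|_\Gamma\to\Gamma$, and then to match the two bracket formulas. First I would fix the bundle identifications over $\Gamma$. Using the Lorentzian inner product $\eta$ to identify $\mathfrak{so}(3,1)\cong\bigwedge^2V$ and the resulting self-duality, one has $\operatorname{ad}P\cong\bigwedge^2\mathcal{V}_\Gamma\cong(\operatorname{ad}P)^*$; using the volume form of $\eta$ together with the chosen orientation to trivialize $\bigwedge^4\mathcal{V}_\Gamma\cong\underline{\bbR}$, the field $E\in\Omega_{\partial\partial}^{2,2}$ becomes a section of $\bigwedge^{\text{top}}T^*\Gamma\otimes(\operatorname{ad}P)^*$ and the field $P\in\Omega_{\partial\partial}^{2,4}\otimes\Omega^1(\Gamma)$ a section of $\bigwedge^{\text{top}}T^*\Gamma\otimes T^*\Gamma$. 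Setting $A:=\operatorname{ad}P\oplus T\Gamma$, so that $A^*=(\operatorname{ad}P)^*\oplus T^*\Gamma$, the pair $\Phi:=(E,P)$ is then a section of $\bigwedge^{\text{top}}T^*\Gamma\otimes A^*$, and the linear observables of Section~\ref{s:secondpol} assemble into $U_\beta:=J_\varphi+F_X=\int_\Gamma\Phi\,\beta$ for $\beta=(\varphi,X)\in\Gamma(A)$ — exactly the linear functionals $\int_\Gamma(\mathcal{F}a+\mathcal{Q}X)$ of \eqref{e:brackets_algebroid}, under $\mathcal{F}\leftrightarrow E$, $\mathcal{Q}\leftrightarrow P$.

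Next I would observe that the bivector $\pi=\pi_2$ of the second polarization is \emph{linear} in the fields $(E,P)$: each of its four summands carries exactly one undifferentiated factor of $E$ or $P$, the remaining ingredients $\dd_{\omega_0}$, $F_{\omega_0}$ being fixed background data. Consequently the brackets \eqref{e:brackets_tangent_EP} of the linear observables are precisely of the form \eqref{e:brackets_algebroid}, i.e.\ the Lie--Poisson brackets $\{U_{\beta_1},U_{\beta_2}\}_2=U_{[\beta_1,\beta_2]}$ attached to a Lie algebroid structure on $A$. Matching coefficients with $\beta_1=(\varphi,X)$ and $\beta_2=(\varphi',X')$ yields
\[
[(\varphi,X),(\varphi',X')]=\bigl([\varphi,\varphi']-\iota_X\dd_{\omega_0}\varphi'+\iota_{X'}\dd_{\omega_0}\varphi+\iota_X\iota_{X'}F_{\omega_0},\,[X,X']\bigr)
\]
together with the anchor $\rho(\varphi,X)=X$ (the $J$--$J$ and $J$--$F$ brackets produce no $F$-component, so $\operatorname{ad}P$ lies in the isotropy); the Leibniz rule and the Jacobi identity for this structure are automatic, since $\pi_2$ is Poisson.

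Finally I would compare this with the Atiyah algebroid. As recalled just above the theorem, the reference connection $\omega_0$ on $P|_\Gamma\to\Gamma$ determines a splitting $\tau$ of the exact sequence $0\to\operatorname{ad}P\to TP/G\to T\Gamma\to0$ and hence an isomorphism $\chi\colon\operatorname{ad}P\oplus T\Gamma\xrightarrow{\sim}TP/G$ under which the Atiyah bracket takes the form \eqref{e:brackets_split_Atiyah} with $\nabla^\tau=\dd_{\omega_0}$ and $R^\tau=F_{\omega_0}$ and anchor the projection to $T\Gamma$. This is literally the bracket and anchor of the preceding paragraph, so the Lie algebroid structure on $\operatorname{ad}P\oplus T\Gamma$ induced by the BF$^2$V Poisson structure of the tangent theory coincides with the Atiyah algebroid of $P|_\Gamma$, which is the assertion. (If one wants the genuine gravity corner theory rather than the $BF$-like one, the same computation applies verbatim: imposing $\operatorname{Pf}(E)=0$ cuts out a Poisson submanifold of $\bigwedge^{\text{top}}T^*\Gamma\otimes A^*$, and the restricted structure is still governed by the same Atiyah algebroid, consistently with the quotient description in Section~\ref{s:Poissontangent} by the ideal generated by $P_\mu,Q_\nu,R_\sigma$.)

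The step I expect to be the main obstacle is not the bracket algebra — both \eqref{e:brackets_tangent_EP} and \eqref{e:brackets_algebroid}/\eqref{e:brackets_split_Atiyah} are already in hand — but the bundle bookkeeping of the first paragraph: one must check carefully that, after the $\eta$-identification of $\mathfrak{so}(3,1)$ with $\bigwedge^2V$ and the trivialization of $\bigwedge^4\mathcal{V}_\Gamma$, the form degrees, the $\bbZ$-degrees and parities, and the density weights all conspire so that $\int_\Gamma\varphi E$ and $\int_\Gamma\iota_XP$ really are the canonical pairings $\int_\Gamma\Phi\,\beta$ with $\Phi\in\Gamma(\bigwedge^{\text{top}}T^*\Gamma\otimes A^*)$ and $\beta\in\Gamma(A)$, and that the signs in \eqref{e:brackets_tangent_EP} match those in \eqref{e:brackets_split_Atiyah} after skew-symmetrization. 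Once this dictionary is nailed down, the rest is a purely formal comparison.
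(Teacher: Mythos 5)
Your proposal is correct and follows essentially the same route as the paper: one identifies the corner phase space of the second polarization with $T^*[1]\Gamma(\operatorname{ad}P\oplus T\Gamma)^*$, compares the brackets of the linear functionals $J_\varphi$, $F_X$ in \eqref{e:brackets_tangent_EP} with the split Atiyah bracket \eqref{e:brackets_split_Atiyah}/\eqref{e:brackets_algebroid} under $E\leftrightarrow\mathcal{F}$, $P\leftrightarrow\mathcal{Q}$, and concludes by dualization. Your added remarks on the bundle identifications and on the linearity of $\pi_2$ are a more careful spelling-out of the same argument, not a different one.
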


\begin{proof}
Let us define $B=\operatorname{ad}P \oplus T \Gamma$. Then, the space of corner fields is $\mathcal{F}^{\partial\partial}= T^*[1]\Gamma (B)^*$.
As explained in the previous section, we can equip this space with a Poisson structure. Comparing \eqref{e:brackets_algebroid} with \eqref{e:brackets_tangent_EP}, it is easy to see that on linear functionals these brackets coincide  with the identification $E= \mathcal{E}$ and $P=\mathcal{Q}$. Hence, dualizing, the induced structure is the one of an Atiyah algebroid.
\end{proof}

\begin{remark}
This construction does not depend on the final quotient. Hence, the symplectic space of corner fields identifies a Poisson subalgebra and consequently a subalgebroid.
\end{remark}

\subsubsection{Quantization} \label{s:quantization_tangent}
In the relatively simple tangent case, we may also describe the quantization of the corner structure for a very important particular situation that arises when we  consider a point defect on a spacelike boundary $\Sigma$. By quantization, we mean here the deformation quantization of a Poisson manifold \cite{BFFLS} (solved in the finite-dimensional case by Kontsevich \cite{K2003}).\footnote{On general grounds, one expects to associate with a corner $\Gamma$ an algebra $A_\Gamma$ that quantizes the underlying Poisson structure. If $\Gamma$ occurs as the boundary of $\Sigma$, one expects $\Sigma$ to be quantized as an $A_\Gamma$-module $H_\Sigma$. Locality is then the statement that cutting some $\Sigma$ along a hypersurface $C$ into manifolds $\Sigma_1$ and $\Sigma_2$ with boundary $C$ should yield $A_\Gamma$-modules $H_{\Sigma_1}$ and $H_{\Sigma_2}$ associated with them such that $H_\Sigma=H_{\Sigma_1}\otimes_{A_\Gamma}H_{\Sigma_2}$.} We
take $\Gamma$ to be an infinitesimal sphere surrounding this point. On $\Gamma$, we only consider uniform  %constant
fields (this is our formalization of its being infinitesimal). For $\xi$, which is a 
 vector field, this implies $\xi=0$. Similarly, we get $P=0$.
 In the resulting
theory, there are then no $\xi$ nor $P$. On the other hand, $c$ and $E$ are $\operatorname{SO}(3)$-equivariant. Since the BF$^2$V action and $2$-form are defined in terms of an invariant pairing, what matters are only the values of $c$ and $E$ at some point. We denote the first as $c\in\Lambda^2V$ and the second as
$E=\mathbf{A}\,\text{vol}$, with $\mathbf{A}\in\Lambda^2V$ and 
$\text{vol}$ the standard, normalized volume form on the sphere $\Gamma$ evaluated at the chosen point. 
We then have the symplectic form
\begin{align*}
    \varpi^{\partial\partial}_q= \delta c\, \delta \mathbf{A}
\end{align*}
and the BF$^2$V action
\begin{align*}
    S^{\partial\partial}_q= \frac12 [c,c]\,\mathbf{A}.
\end{align*}
(Note that both expressions take values in $\Lambda^4V$ which we tacitly identify with $\mathbb{R}$.)
Next, we will have to impose that $E$ is a pure tensor satisfying $E\epsilon_m\epsilon_n\not=0$ for some fixed linearly independent sections $\epsilon_m$ and $\epsilon_n$ in $V$. This corresponds to imposing $\operatorname{Pf}(\mathbf{A})=0$ and $\mathbf{A}\epsilon_m\epsilon_n\not=0$, and to reduce $c$ accordingly. Note that the second condition on $\mathbf{A}$ is an open condition, which, in particular, entails $\mathbf{A}\not=0$.

We first analyze the theory without the conditions on $\mathbf{A}$. In the polarization $c=0$, the above data yield as  Poisson manifold 
the dual of the Lie algebra $\frg=\mathfrak{so}(3,1)\simeq\Lambda^2V$. Its quantization may be identified with the universal enveloping algebra $U(\frg)$ of $\frg$ \cite{G1983}.
The module structures for $\Sigma$ minus the defect that we get from the quantization of the corner then correspond to representations of $U(\frg)$, but this is the same as Lie algebra representations of $\frg$ or group representations of its simply connected Lie group $G=\mathrm{SL}(2,\mathbb{C})$.

The conditions on $\mathbf{A}$  
select a five-dimensional Poisson submanifold %$P$ 
of $\frg^*$.
Since $\operatorname{Pf}(\mathbf{A})$ is quadratic in $\mathbf{A}$ and invariant,
it is a quadratic Casimir.
If we ignore the open condition $\mathbf{A}\epsilon_m\epsilon_n\not=0$,
the quantization then simply amounts to considering representations of $G$ in which this Casimir is represented as zero.
Explicitly
we write
\[
\mathbf{A}=\begin{pmatrix}
0             & A^{01}  & A^{02}   & A^{03}\\
-A^{01}   &   0         & A^{12}   & A^{13}\\
-A^{02}   & -A^{12} &   0         & A^{23}\\
-A^{03}   & -A^{13} &  -A^{23} &   0
\end{pmatrix}=:
\begin{pmatrix}
0             & M^{1}  & M^{2}   & M^{3}\\
-M^{1}   &   0         & J^{3}   & -J^{2}\\
-M^{2}   & -J^{3} &   0         & J^{1}\\
-M^{3}   & J^{2} &  -J^{1} &   0
\end{pmatrix}.
\]
We then have
\[
\operatorname{Pf}(\mathbf{A})
= A^{01}A^{23}-A^{02}A^{13}+A^{03}A^{12}
= \mathbf{M}\cdot\mathbf{J}
= \frac{\mathbf{J}_+^2-\mathbf{J}_-^2}4,
\]
with $\mathbf{J}_\pm=\mathbf{J}\pm\mathbf{M}$. Note that $\mathbf{J}_\pm^2$ are the two standard $\mathfrak{su}(2)$ quadratic Casimirs of the two summands of $\mathfrak{sl}(2,\mathbb{C})=\mathfrak{su}(2)\oplus\mathfrak{su}(2)$. 
The condition $\operatorname{Pf}(\mathbf{A})=0$, i.e., $\mathbf{J}_+^2=\mathbf{J}_-^2$,
therefore implies that we only have
representations of $\mathrm{SO}(3,1)^+$ with highest weight of the form $(m,m)$ (here, $2m$ is a nonnegative integer). 

The open condition $\mathbf{A}\epsilon_m\epsilon_n\not=0$ is more difficult to understand algebraically. The induced open condition
$\mathbf{A}\not=0$ instead corresponds to $\mathbf{J}_+^2\not=0$ and $\mathbf{J}_-^2\not=0$, which would suggest that we have to exclude the case $m=0$.
On the other hand, it might make sense to retain also this possibility in the quantization (essentially working with the extended theory of Remark~\ref{r:openE}).

To summarize the results of this section, we see that, in the case of small $m$, the point defect then corresponds to a scalar ($m=0$), a vector ($m=\frac12$), and a traceless symmetric tensor ($m=2$).

\section{Cosmological term}\label{s:cosmological}
In the previous sections, we have always assumed the vanishing of the cosmological constant. 
We now drop this assumption and add the following term to the boundary BFV action:
\begin{align*}
    S^{\partial}_{\text{cosm}}= \int_{\Sigma}\frac{1}{6}\Lambda \lambda \epsilon_n e^3.
\end{align*}
Since it does not contain any derivatives, this additional term does not change the pre-corner two-form \eqref{e:precorner_two_form} and hence, the extendability of the BFV theory to a BFV-BF$^2$V theory. The only change in the pre-corner structure is an additional term in the pre-corner action \eqref{e:action-precorner} of the form
\begin{align*}
    \widetilde{S}^{\partial}_{\text{cosm}}= \int_{\Gamma}\frac{1}{2}\Lambda \lambda \epsilon_n \xi^m e_m e^2 .
\end{align*}

Since this term contains $\xi^m$, the tangent case is unmodified and carries no information about the cosmological constant. 

However, the action of the constrained case \eqref{e:new_pre-corner_action} gets a contribution of the form
\begin{align*}
    \widetilde{S}^{\partial}_{\text{cosm}}= \int_{\Gamma}\frac{1}{2}\Lambda \widetilde\lambda \epsilon_n \txi{m} \epsilon_m \te^2 .
\end{align*}
In the constrained case and in the pre-corner case, there are some differences when the cosmological constant is present, similarly to what happens in $BF$ theory. Indeed, even though the unary operation $\{\ \}_1$ and the binary operation $\{\ ,\ \}_2$ do not change,
we have  
\begin{align*}
    \{\}_0&=\int_{\Gamma}\left(\iota_{\txi{}}\te\left(\frac{1}{2}\iota_{\txi{}} \te + \alpha\right)+\frac{1}{2}\alpha^2\right)F_{\omega_0} + \int_{\Gamma}\frac{1}{2}\Lambda \lambda \epsilon_n \xi^m \epsilon_m \te^2,\\
    \{\}_0&=\int_{\Gamma}\left(\iota_{\xi}e\left(\frac{1}{2}\iota_{\xi} e + \alpha\right)+\frac{1}{2}\alpha^2\right)F_{\omega_0} + \int_{\Gamma}\frac{1}{2}\Lambda \lambda \epsilon_n \xi^m e_m e^2,
\end{align*}
for the constrained and the pre-corner theories, 
where $\alpha$ is as defined in Sections \ref{s:constrained_Poisson_algebra} and \ref{s:Pstru}, respectively. As a result, the algebra generated by ${J}$, ${M}$, and ${K}$ no longer closes under the nullary operation. To remedy for this, 
we can add a functional $C_{\beta}$ to the $P_\infty$ subalgebra to parametrize this new term as follows\footnote{We spell the details in the pre-corner case. In the constrained case, it is just sufficient to add a tilde to the variables and to change the expression of $\alpha$ to get the required functionals. The brackets hold verbatim.}:
\begin{align*}
   C_{\beta}&= \int_{\Gamma} \frac{1}{2}\beta  ee\alpha^2.
\end{align*}
We now have
\[
\{\}_0= K_{F_{\omega_0}}+C_\Lambda.
\]
In order to get a closed set under the bracket operations, we also  add the following two additional functionals:
\begin{align*}
   D_{\gamma}&= \int_{\Gamma} \frac{1}{2}\gamma  \iota_{\xi}(ee)\alpha^2,\\
   E_{\rho}&= \int_{\Gamma} \frac{1}{4}\rho  \iota_{\xi}\iota_{\xi}(ee)\alpha^2.
\end{align*}
The brackets of these functionals with themselves and with $J_{\varphi}$, $M_y$, $K_Z$ are all zero except for
\begin{align*}
    \{C_\beta\}_1=D_{\dd\beta} \qquad \{D_\gamma\}_1=E_{\dd\gamma}.
\end{align*}

\appendix
\section{Notation and property of maps}\label{s:appendix_notation}
The goal of this appendix is to recall and collect in one place the relevant quantities and maps, to establish the conventions, and to summarize the technical results needed in the article.

Let us first recall some useful shorthand notation introduced in the previous sections. Let $M$ be a smooth manifold of dimension $4$ with corners, and let us denote by $\Sigma= \partial M$ its $3$-dimensional boundary and by $\Gamma= \partial\partial M$ its $2$-dimensional corner. Furthermore, we will use the  notation $\mathcal{V}_{\Sigma}$  for the restriction of $\mathcal{V}$ to $\Sigma$ and $\mathcal{V}_{\Gamma}$  for the restriction of $\mathcal{V}$ to $\Gamma$. We  define 
\begin{align*}
     \Omega_{\partial}^{i,j}:= \Omega^i\left(\Sigma, \textstyle{\bigwedge^j} \mathcal{V}_{\Sigma}\right),
    \qquad \Omega_{\partial\partial}^{i,j}:= \Omega^i\left(\Gamma, \textstyle{\bigwedge^j} \mathcal{V}_{\Gamma}\right).
\end{align*}

On $\Omega_{\partial}^{i,j}$ and $\Omega_{\partial\partial}^{i,j}$, we define the following maps:
\begin{align*}
    W_{\partial}^{(i,j)}\colon \Omega_{\partial}^{i,j}  & \longrightarrow  \Omega_{\partial}^{i,j}\\ %\label{W_e-boundary} \\
    X  & \longmapsto  X \wedge e|_{\Sigma} ,%\nonumber
    \\ %\nonumber\\
    W_{\partial\partial}^{  (i,j)}\colon \Omega_{\partial\partial}^{i,j}  & \longrightarrow  \Omega_{\partial\partial}^{i,j}\\ %\label{W_e-boundary} \\
    X  & \longmapsto  X \wedge e|_{\Gamma} .%\nonumber
\end{align*}
\begin{remark}
Usually, we will omit writing the restriction of $e$ to the corresponding manifold $\Sigma$ or $\Gamma$.
\end{remark}

The properties of these maps are collected in the following lemmata, where we  condensate all the information in two tables, one for the boundary maps and one for the corner maps.  We organize the $\Omega_{\bullet}^{ij}$ spaces in an array and connect them with arrows corresponding to the maps $W_{\bullet}^{\bullet(i,j)}$: a hooked arrow denotes an injective map, while a two-headed arrow denotes a surjective map. 
 The proofs of these properties are similar to those proved in \cite{CCS2020} and are left to the reader. 
 
On the boundary, the index $i$ runs only between 1 and 3. 

\begin{lemma}\label{lem:Wmaps_properties_boundary}
The maps $W_{\partial}^{(i,j)}$ on the boundary fields have the properties described in the following table:
\begin{center}
\begin{equation}
\begin{tikzcd}[ row sep= 2 em, column sep= 3 em]
\Omega_{\partial}^{0,0} \arrow[rd, hook]& \Omega_{\partial}^{0,1} \arrow[rd, hook]& \Omega_{\partial}^{0,2} \arrow[rd, hook]& \Omega_{\partial}^{0,3} \arrow[rd, two heads]& \Omega_{\partial}^{0,4} \\
\Omega_{\partial}^{1,0} \arrow[rd, hook]& \Omega_{\partial}^{1,1} \arrow[rd, hook]& \Omega_{\partial}^{1,2} \arrow[rd, two heads]& \Omega_{\partial}^{1,3} \arrow[rd, two heads]& \Omega_{\partial}^{1,4} \\
\Omega_{\partial}^{2,0} \arrow[rd, hook]& \Omega_{\partial}^{2,1} \arrow[rd, two heads]& \Omega_{\partial}^{2,2} \arrow[rd, two heads]& \Omega_{\partial}^{2,3} \arrow[rd, two heads]& \Omega_{\partial}^{2,4} \\
\Omega_{\partial}^{3,0} & \Omega_{\partial}^{3,1} & \Omega_{\partial}^{3,2} & \Omega_{\partial}^{3,3} & \Omega_{\partial}^{3,4} 
\end{tikzcd}
\end{equation}
\end{center}
\end{lemma}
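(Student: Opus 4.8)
The plan is to reduce everything to pointwise linear algebra and then to a handful of explicit computations. Each $W_\partial^{(i,j)}$ is a morphism of $C^\infty(\Sigma)$-modules, acting fibrewise by wedging with the value of the coframe; hence it is injective, resp.\ surjective, if and only if for every $x\in\Sigma$ the linear map $\bigwedge^i T_x^*\Sigma\otimes\bigwedge^j\mathcal V_x\to\bigwedge^{i+1}T_x^*\Sigma\otimes\bigwedge^{j+1}\mathcal V_x$ is injective, resp.\ surjective. The only available input is the nondegeneracy of $e$: at each $x$ the map $e_x\colon T_x\Sigma\to\mathcal V_x$ is injective, equivalently its three components together with $\epsilon_n$ form a basis of $\mathcal V_x\cong V$. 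Fix $x$, pick a basis $e_1,e_2,e_3$ of $\operatorname{im}e_x$, set $e_4:=\epsilon_n$, and let $dx^1,dx^2,dx^3$ be the dual basis of $T_x^*\Sigma$, so that $e_x=\sum_{a=1}^3 dx^a\otimes e_a$. Then $W_\partial^{(i,j)}$ sends $\sum_{I,K}X^K_I\,dx^I\otimes e_K$ to $\sum_{a=1}^3\sum_{I,K}(\pm)X^K_I\,(dx^I\wedge dx^a)\otimes(e_K\wedge e_a)$, where the $a$-summand is nonzero exactly when $a\notin I\cup K$. Thus for each $(i,j)$ one only has to compute the rank of an explicit matrix, and the dimensions appearing in the table are just $\binom{3}{i}\binom{4}{j}$.

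For the injective entries one shows $\ker W_\partial^{(i,j)}=0$. The cases $W_\partial^{(0,0)}$ (since $e\neq0$), $W_\partial^{(0,1)}$, $W_\partial^{(1,0)}$ and $W_\partial^{(2,0)}$ are immediate: if $\omega\wedge dx^a=0$ for the three independent $dx^a$, or $v\wedge e_a=0$ for the three independent $e_a$, then $\omega=0$, resp.\ $v=0$, in a bundle of rank $\le4$. For $W_\partial^{(0,2)}$ one uses that a bivector $\beta\in\bigwedge^2\mathcal V_x$ with $\beta\wedge e_a=0$ for $a=1,2,3$ is annihilated by three independent vectors, whence $\beta=0$ because the pairing $\mathcal V_x\times\bigwedge^2\mathcal V_x\to\bigwedge^3\mathcal V_x$ is nondegenerate in the first slot. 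The substantive case is $W_\partial^{(1,1)}\colon\Omega_\partial^{1,1}\to\Omega_\partial^{2,2}$: writing $v=\sum_\mu dx^\mu\otimes(v^a_\mu e_a+v^n_\mu\epsilon_n)$, imposing $v\wedge e=0$ and expanding in the basis $\{dx^\mu\wedge dx^\nu\otimes e_k\wedge e_l\}$ forces all $v^a_\mu$ and $v^n_\mu$ to vanish; this argument, which genuinely uses the full nondegeneracy of $e$, is the one given in \cite{CCS2020} (compare \cite{CS2017}), and it transfers verbatim here.

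The surjective entries are dual. Equip all the $\Omega_\partial^{i,j}$ with fibrewise nondegenerate pairings, obtained from an auxiliary metric on $\Sigma$ and a fibre metric and volume on $\mathcal V$; after the induced Hodge-type identifications $\Omega_\partial^{i,j}\cong\Omega_\partial^{3-i,4-j}$, the transpose of $W_\partial^{(i,j)}$ becomes a double interior product with $e$, i.e.\ a map $\Omega_\partial^{3-i,4-j}\to\Omega_\partial^{2-i,3-j}$ lowering both degrees by one. Hence $W_\partial^{(i,j)}$ is surjective iff this contraction is injective, and the contraction maps are handled by the same case analysis as the wedge maps, nondegeneracy of $e$ now guaranteeing that no nonzero element is annihilated. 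Alternatively one checks surjectivity directly in the normalized basis: for instance $W_\partial^{(2,1)}$ is, up to sign, the map sending coefficients $X^k_a$ ($k=1,\dots,4$, $a=1,2,3$) to $\sum_{a,k}X^k_a\,e_k\wedge e_a\in\bigwedge^2\mathcal V_x$, which is onto since the bivectors $e_k\wedge e_a$ with $k\in\{1,\dots,4\}$, $a\in\{1,2,3\}$ span $\bigwedge^2\mathcal V_x$; the remaining entries $W_\partial^{(0,3)},W_\partial^{(1,2)},W_\partial^{(1,3)},W_\partial^{(2,2)},W_\partial^{(2,3)}$ are similar. Comparing the resulting ranks (always the minimum of domain and codomain dimension) with $\binom{3}{i}\binom{4}{j}$ reproduces precisely the hook/two-headed pattern of the table.

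The main obstacle is the borderline pair $W_\partial^{(1,1)}$ (injective, $12\to18$) and $W_\partial^{(1,2)}$ (surjective, $18\to12$): for these neither a dimension count nor the elementary kernel/cokernel arguments above suffice, and one must really use that the components of $e$ together with a transversal vector span all of $V$. These are the computations performed in \cite{CCS2020} for the bulk-to-boundary reduction, and the boundary-to-corner situation is formally identical; note that the $6$-dimensional kernel of $W_\partial^{(1,2)}$ is precisely the source of the ambiguity $\mathbb V_\omega$ in \eqref{e:Q-boundary}. All other entries reduce to the elementary observations above, so the routine bookkeeping is omitted.
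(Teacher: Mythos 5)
Your proposal is correct and takes essentially the same route as the paper, which disposes of this lemma by reducing it to the pointwise linear-algebra computations of \cite{CCS2020} in a basis adapted to $e$ and $\epsilon_n$ and leaving them to the reader; your identification of $W_\partial^{(1,1)}$ and $W_\partial^{(1,2)}$ as the only substantive entries, with the rest following from elementary kernel/rank checks or duality, matches the intended argument. One small imprecision: injectivity of $W_\partial^{(0,2)}$ follows from the two-line computation that a bivector in a rank-four space annihilated by wedging with three independent vectors must vanish, not from nondegeneracy of the pairing $\mathcal V_x\times\bigwedge^2\mathcal V_x\to\bigwedge^3\mathcal V_x$ in the first slot as stated (and for the surjective entries it is cleaner to note that the wedge pairing into $\Omega_\partial^{3,4}$ makes $W_\partial^{(i,j)}$ surjective iff $W_\partial^{(2-i,3-j)}$ is injective, which recycles the injectivity cases directly).
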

\begin{lemma}\label{lem:Wmaps_properties_corner}
The maps $W_{\partial\partial}^{(i,j)}$ on the corner fields have the properties described in the following table:
\begin{center}
\begin{equation}\label{e:cod2We}
\begin{tikzcd}[ row sep= 2 em, column sep= 3 em]
\Omega_{\partial\partial}^{0,0} \arrow[rd, hook]& \Omega_{\partial\partial}^{0,1} \arrow[rd, hook]& \Omega_{\partial\partial}^{0,2} \arrow[rd]& \Omega_{\partial\partial}^{0,3} \arrow[rd, two heads]& \Omega_{\partial\partial}^{0,4} \\
\Omega_{\partial\partial}^{1,0} \arrow[rd, hook]& \Omega_{\partial\partial}^{1,1} \arrow[rd]& \Omega_{\partial\partial}^{1,2} \arrow[rd, two heads]& \Omega_{\partial\partial}^{1,3} \arrow[rd, two heads]& \Omega_{\partial\partial}^{1,4} \\
\Omega_{\partial\partial}^{2,0} & \Omega_{\partial\partial}^{2,1} & \Omega_{\partial\partial}^{2,2} & \Omega_{\partial\partial}^{2,3} & \Omega_{\partial\partial}^{2,4}
\end{tikzcd}
\end{equation}
\end{center}
\end{lemma}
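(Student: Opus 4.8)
The plan is to reduce the statement to fibrewise linear algebra. Each $W_{\partial\partial}^{(i,j)}$ is $C^\infty(\Gamma)$\ndash linear, hence a morphism of vector bundles over $\Gamma$, and it is built purely from the wedge product, so the metric $\eta$ on $\mathcal{V}$ is irrelevant. The nondegeneracy of the corner coframe means that $e_p\colon T_p\Gamma\to\mathcal{V}_{\Gamma,p}$ is injective for every $p\in\Gamma$; since any two injective linear maps from a $2$\ndash dimensional into a $4$\ndash dimensional space are related by the action of $\mathrm{GL}(T_p\Gamma)\times\mathrm{GL}(\mathcal{V}_{\Gamma,p})$, the fibrewise rank of $W_{\partial\partial}^{(i,j)}$ is constant along $\Gamma$. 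Therefore $\Ker{W_{\partial\partial}^{(i,j)}}$ and the image are subbundles, and $W_{\partial\partial}^{(i,j)}$ is injective (resp. surjective) as an operator on sections if and only if it is so on each fibre. It thus suffices to argue at one point.

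Fix $p\in\Gamma$; pick a basis $\partial_1,\partial_2$ of $T_p\Gamma$ with dual basis $\dd x^1,\dd x^2$, put $e_a:=e_p(\partial_a)$, and complete $e_1,e_2$ to a basis $e_1,e_2,\epsilon_m,\epsilon_n$ of $V:=\mathcal{V}_{\Gamma,p}$. With $e=\dd x^a\otimes e_a$ and $X=\sum_{|I|=i}\dd x^I\otimes\xi_I\in\textstyle\bigwedge^iT_p^*\Gamma\otimes\bigwedge^jV$ one has $X\wedge e=\sum_{I,a}(\dd x^I\wedge\dd x^a)\otimes(\xi_I\wedge e_a)$. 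In particular $W_{\partial\partial}^{(2,j)}=0$; the case $i=0$ is the map $\xi\mapsto\dd x^1\otimes(\xi\wedge e_1)+\dd x^2\otimes(\xi\wedge e_2)$; and the case $i=1$ is $(\xi_1,\xi_2)\mapsto(\dd x^1\wedge\dd x^2)\otimes(\xi_1\wedge e_2-\xi_2\wedge e_1)$. What remains is a short computation for each relevant $(i,j)$.

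For $i=0$: $W_{\partial\partial}^{(0,0)}(1)=e\neq0$, hence injective; $\Ker{W_{\partial\partial}^{(0,1)}}=\{v:v\wedge e_1=v\wedge e_2=0\}=\langle e_1\rangle\cap\langle e_2\rangle=0$, injective; $\Ker{W_{\partial\partial}^{(0,2)}}=(e_1\wedge V)\cap(e_2\wedge V)=\langle e_1\wedge e_2\rangle$ (one checks $\beta\wedge e_1=0\Rightarrow\beta\in e_1\wedge V$, and similarly for $e_2$), so the image has dimension $6-1=5$, which is neither all of $\Omega_{\partial\partial}^{1,3}$ (dimension $8$) nor does the kernel vanish; and $W_{\partial\partial}^{(0,3)}$ surjects onto $\Omega_{\partial\partial}^{1,4}$ (dimension $2$) since the map $\gamma\mapsto(\gamma\wedge e_1,\gamma\wedge e_2)$ from $\bigwedge^3V$ onto $\bigwedge^4V\oplus\bigwedge^4V$ is surjective, as seen on $\gamma=e_2\wedge\epsilon_m\wedge\epsilon_n$ and $\gamma=e_1\wedge\epsilon_m\wedge\epsilon_n$. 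For $i=1$: the image of $W_{\partial\partial}^{(1,1)}$ is $(\dd x^1\wedge\dd x^2)\otimes(e_1\wedge V+e_2\wedge V)$, and $e_1\wedge V+e_2\wedge V$ is the $5$\ndash dimensional subspace spanned by $e_1\wedge e_2,\,e_1\wedge\epsilon_m,\,e_1\wedge\epsilon_n,\,e_2\wedge\epsilon_m,\,e_2\wedge\epsilon_n$ (it omits $\epsilon_m\wedge\epsilon_n$), so $W_{\partial\partial}^{(1,1)}$ has rank $5$ and is neither injective nor surjective onto $\Omega_{\partial\partial}^{2,2}$ (dimension $6$); the image of $W_{\partial\partial}^{(1,2)}$ is $(\dd x^1\wedge\dd x^2)\otimes(e_1\wedge\textstyle\bigwedge^2V+e_2\wedge\bigwedge^2V)$, which equals $(\dd x^1\wedge\dd x^2)\otimes\bigwedge^3V=\Omega_{\partial\partial}^{2,3}$ because all four basis $3$\ndash vectors occur, hence $W_{\partial\partial}^{(1,2)}$ is surjective; and $W_{\partial\partial}^{(1,3)}$ is surjective onto $\Omega_{\partial\partial}^{2,4}$ (dimension $1$) because $e_1\wedge\epsilon_m\wedge\epsilon_n\wedge e_2\neq0$. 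The remaining slots $(0,4)$, $(1,4)$ and the whole row $i=2$ have zero or absent target, so nothing is claimed there. This reproduces the table \eqref{e:cod2We}.

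I expect no genuine difficulty: the argument is bookkeeping in $\bigwedge^\bullet V$ with $\dim V=4$ over the $2$\ndash dimensional base. The single point to watch is the low dimension of $\Gamma$: since $\dim\bigwedge^2T_p^*\Gamma=1$, the maps $W_{\partial\partial}^{(1,j)}$ land in a much smaller target than their boundary analogues $W_\partial^{(1,j)}$ (where $\dim\Sigma=3$), which is precisely why, e.g., $W_{\partial\partial}^{(1,2)}$ turns out surjective at the corner. The companion Lemma~\ref{lem:Wmaps_properties_boundary} is proved in the same way, now with $T_p\Sigma$ of dimension $3$, with $e_1,e_2,e_3:=e_p(\partial_a)$ completed by a single vector $\epsilon_n$ to a basis of $\mathcal{V}_{\Sigma,p}$, and with the analogous explicit formulas for $X\wedge e$; this is exactly the type of computation carried out in \cite{CCS2020}.
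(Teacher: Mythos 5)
Your proposal is correct and follows essentially the intended argument: the paper leaves the proof to the reader as a pointwise linear-algebra check analogous to the boundary case of \cite{CCS2020}, and your fibrewise rank computations (e.g.\ image of $W_{\partial\partial}^{(1,1)}$ and of $W_{\partial\partial}^{(0,2)}$ of dimension $5$ inside targets of dimension $6$ and $8$, surjectivity of $W_{\partial\partial}^{(1,2)}$ with an $8$-dimensional kernel) agree with the figures the paper itself uses in the proof of Proposition~\ref{prop:corner_extension_PC} and in Appendix~\ref{sec:analysis_constraints}. The reduction to a single fibre via constancy of rank and the explicit basis computations are exactly the omitted bookkeeping.
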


The coframe $e$ viewed as an isomorphism $e\colon TM \rightarrow \mathcal{V}$ defines, given a set of coordinates on $M$, a preferred basis on $\mathcal{V}$. If we denote by $\partial_i$ the vector field in $TU$, where $U$ is a coordinate neighborhood in $M$, corresponding to the coordinate $x^i$, we get a basis on $\mathcal{V}|_{U}$ by $e_i:= e (\partial_i)$. On the boundary, since $T\Sigma$ has one dimension less than $\mathcal{V}_{\Sigma}$, we can complement the linear independent set $(e_i)$ with another independent vector that we will call $\epsilon_n$. On the corner $\Gamma$, the tangent space loses one further dimension; hence, we will have to introduce one more additional independent vector that will be denoted by $\epsilon_m$. 
Fixed a coordinate system on $M$ (or $\Sigma$ or $\Gamma$), we  call this basis the \emph{standard basis} and, unless otherwise stated, the components of the fields will always be taken with respect to this basis.

\section{Pfaffian and pure tensors}\label{s:appendixPfaffian}
In this appendix, we discuss the relation between having $\operatorname{Pf}(E)=0$ for an element $E \in \Omega_{\partial\partial}^{2,2}$ and requiring that $E$ can be expressed as a pure tensor, i.e., that $E=\frac{1}{2}ee$ for some $e \in \Omega_{\partial\partial}^{1,1}$. 
We start with the local analysis. Let
\[
\phi\colon\begin{array}[t]{ccc}
V\times V &\to &\Lambda^2V\\
(e_1,e_2)&\mapsto &e_1e_2
\end{array}
\]
where $V$ is a four-dimensional vector space and, as usual, we omitted the wedge multiplication symbol on the right hand side.
We then have the following two lemmata.
\begin{lemma}
$
e_1,\, e_2\ \text{linearly independent} \iff \phi(e_1,e_2)\not=0.
$
\end{lemma}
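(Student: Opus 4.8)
The plan is to prove the two implications separately: the ``only if'' direction (more precisely, its contrapositive) by a one-line computation with the wedge product, and the ``if'' direction by extending to a basis.

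First I would dispose of the implication ``$e_1,e_2$ linearly dependent $\implies\phi(e_1,e_2)=0$''. If $e_1,e_2$ are linearly dependent, then either $e_1=0$, in which case $\phi(e_1,e_2)=0$ trivially, or $e_2=\lambda e_1$ for some $\lambda\in\bbR$. In the latter case, bilinearity of $\phi$ together with the antisymmetry relation $e_1\wedge e_1=0$ gives $\phi(e_1,e_2)=\lambda\,(e_1\wedge e_1)=0$. Taking the contrapositive yields $\phi(e_1,e_2)\neq0\implies e_1,e_2$ linearly independent.

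For the converse, suppose $e_1,e_2$ are linearly independent. Since $\dim V=4$ (indeed any dimension $\ge 2$ would do), we may extend $\{e_1,e_2\}$ to a basis $\{e_1,e_2,e_3,e_4\}$ of $V$. By the standard construction of the exterior algebra, the family $\{\,e_i\wedge e_j : 1\le i<j\le 4\,\}$ is then a basis of $\Lambda^2V$ (of dimension $\binom{4}{2}=6$). In particular $e_1\wedge e_2$ is one of these basis vectors, hence nonzero, i.e.\ $\phi(e_1,e_2)\neq0$.

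The only step carrying any content — and therefore the ``main obstacle'', although it is entirely routine — is the fact that a basis of $V$ induces, via wedge products, a basis of $\Lambda^2V$; this I would simply invoke from the theory of the exterior algebra rather than reprove. Note that no four-dimensionality is actually needed here beyond having room to complete $\{e_1,e_2\}$ to a basis, which is automatic once they are independent.
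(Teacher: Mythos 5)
Your proof is correct and essentially follows the paper's argument: the dependent case is handled by the same bilinearity/antisymmetry computation, and the independent case by completing $\{e_1,e_2\}$ to a basis. The only (immaterial) difference is that the paper certifies $e_1e_2\neq0$ by wedging with $e_3e_4$ to get a nonzero element of $\Lambda^4V$, whereas you invoke that the products $e_i\wedge e_j$ form a basis of $\Lambda^2V$; both are standard exterior-algebra facts and the arguments are interchangeable.
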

\begin{proof}
 If $e_1$ and $e_2$ are linearly independent, then we can complete them to a basis $\{e_1,e_2,e_3,e_4\}$, and we clearly have that $\phi(e_1,e_2)e_3e_4=e_1e_2e_3e_4\not=0$ as an element of $\Lambda^4V$, so $\phi(e_1,e_2)\not=0$. If, on the other hand, $e_1$ and $e_2$ are linearly dependent, then we have $e_1=\alpha e_2$ or $e_2=\alpha e_1$, for some scalar $\alpha$, so $e_1e_2=0$.
\end{proof}
\begin{lemma}
$\operatorname{Pf}(\phi(e_1,e_2))=0$ for all $e_1,e_2$.
\end{lemma}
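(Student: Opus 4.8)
The plan is to reduce the statement to the well-known fact that the wedge square of a decomposable $2$-vector vanishes. Recall that the Pfaffian of an element $F\in\Lambda^2V$ is defined (as in the preceding part of this appendix, and in the normalization recorded in the footnote to Section~\ref{s:Poissontangent}) as the coefficient of $F\wedge F$ in $\Lambda^4V\cong\bbR$, the identification being made through the volume element determined by $\eta$; equivalently, in the standard basis, $\operatorname{Pf}(F)=\tfrac18\,\epsilon_{abcd}F^{ab}F^{cd}$, which up to a constant is $F\wedge F$. Hence it suffices to prove that $\phi(e_1,e_2)\wedge\phi(e_1,e_2)=0$ in $\Lambda^4V$.

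First I would compute directly, using associativity and graded-commutativity of the exterior product together with $e_1\wedge e_1=0$:
\[
\phi(e_1,e_2)\wedge\phi(e_1,e_2)=(e_1\wedge e_2)\wedge(e_1\wedge e_2)=e_1\wedge e_2\wedge e_1\wedge e_2=-\,e_1\wedge e_1\wedge e_2\wedge e_2=0,
\]
where the last sign comes from transposing the middle two factors. Invariantly, this is just the statement that any decomposable $2$-vector $v\wedge w$ satisfies $(v\wedge w)\wedge(v\wedge w)=0$, since the product already contains the factor $v\wedge v=0$. Therefore $\operatorname{Pf}(\phi(e_1,e_2))=0$.

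If one prefers to argue in components, an equally short route is available: writing $\phi(e_1,e_2)^{ab}=e_1^ae_2^b-e_1^be_2^a$ and expanding $\operatorname{Pf}(\phi(e_1,e_2))=\tfrac18\,\epsilon_{abcd}\,\phi(e_1,e_2)^{ab}\,\phi(e_1,e_2)^{cd}$ into four terms, one sees that in each term the totally antisymmetric symbol $\epsilon_{abcd}$ is contracted against a pair of indices carried by a single vector (an $e_1^{\bullet}e_1^{\bullet}$ or $e_2^{\bullet}e_2^{\bullet}$ after relabelling), which is symmetric in that pair, so every term vanishes. There is no genuine obstacle in this proof; the only point deserving a line of care is to match the chosen normalization of $\operatorname{Pf}$ with the one fixed earlier in the paper, which is irrelevant to the vanishing asserted here.
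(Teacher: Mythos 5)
Your proof is correct and follows essentially the same route as the paper, which likewise expands $\operatorname{Pf}(E)=\tfrac18\epsilon_{abcd}E^{ab}E^{cd}$ for $E^{ab}=e_1^ae_2^b-e_2^ae_1^b$ and observes that each term pairs the antisymmetric $\epsilon$ with a symmetric combination $e_i^{\bullet}e_i^{\bullet}$. Your leading coordinate-free remark that $\operatorname{Pf}$ is proportional to the wedge square, which vanishes on decomposables since it contains $e_1\wedge e_1$, is just an invariant restatement of the same computation.
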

\begin{proof}
For $E=(E^{ab})$ in some basis, we have
\[
\operatorname{Pf}(E)=\frac18\epsilon_{abcd}E^{ab}E^{cd}.
\]
Therefore, if $E^{ab}=e_1^ae_2^b-e_2^ae_1^b$, we clearly have $\operatorname{Pf}(E)=\frac12\epsilon_{abcd}e_1^ae_2^be_1^ce_2^d=0$.
\end{proof}

A further interesting remark is that, for $E=e_1e_2$, we have $Ee_1=Ee_2=0$. This can also be written in terms of matrix multiplication if we introduce $\Check E:=*E\in\Lambda^2V^*$, i.e., $\Check E_{ab}=\epsilon_{abcd} E^{cd}$. Now, we have $\Check E\cdot e_1=\Check E\cdot e_2 = 0$. For further reference, we also introduce the linear map
$\psi_E\colon V\to V^*$, $v\mapsto\Check E\cdot v$.

Let us finally introduce
\[
W:=\{(e_1,e_2)\in V\times V \ |\ e_1,\, e_2\ \text{linearly independent}
\}
\]
and
\[
B:=\{E\in \Lambda^2 V\setminus\{0\}\ |\ \operatorname{Pf}(E)=0\}.
\]
For every $E\in B$, we define $\Check E=*E\in\Lambda^2V^*$ as above and the corresponding linear map
$\psi_E\colon V\to V^*$. 
\begin{lemma}
The kernel of $\psi_E$ is two-dimensional.
\end{lemma}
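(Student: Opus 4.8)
The plan is to identify $\Ker{\psi_E}$ with the subspace $\{v\in V\mid v\wedge E=0\}$ of vectors that annihilate $E$ under exterior multiplication, and then to pin down its dimension by a matching upper and lower bound, both of which use the elementary fact that a real skew\ndash symmetric matrix has even rank.

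First I would record the coordinate identity: for $v\in V$ the covector $\psi_E(v)=\Check E\cdot v$ has components $\Check E_{ab}v^b=\epsilon_{abcd}v^bE^{cd}$, which are, up to a nonzero constant, the components of $*(v\wedge E)$ (the three cyclic terms of $v\wedge E$ coincide after contraction with $\epsilon$). Hence $\psi_E(v)=0$ if and only if $v\wedge E=0$, i.e.\ $\Ker{\psi_E}=\{v\in V\mid v\wedge E=0\}$. Since $\psi_E$ is represented by the skew\ndash symmetric $4\times4$ matrix $(\Check E_{ab})$, its rank is even, so $\dim\Ker{\psi_E}\in\{0,2,4\}$; it therefore suffices to exclude the two extreme values.

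For the upper bound: $E\neq0$ by the definition of $B$, and the assignment $E\mapsto\Check E$ is injective---it is, up to a sign, the Hodge star, an isomorphism $\Lambda^2V\to\Lambda^2V^*$---so $\Check E\neq0$; a nonzero skew\ndash symmetric map has rank at least $2$, whence $\dim\Ker{\psi_E}\le2$. For the lower bound I would use $\operatorname{Pf}(E)=0$ in the equivalent, basis\ndash independent form $E\wedge E=0$ (in coordinates $E\wedge E=2\operatorname{Pf}(E)\,e_1e_2e_3e_4$). Choosing $\xi\in V^*$ with $v_0:=\iota_\xi E\neq0$ (possible because $E\neq0$) and contracting gives $0=\iota_\xi(E\wedge E)=2\,v_0\wedge E$, so $v_0\in\Ker{\psi_E}\setminus\{0\}$; thus $\dim\Ker{\psi_E}\ge1$, hence $\ge2$ by parity. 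Combining the two bounds yields $\dim\Ker{\psi_E}=2$.

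The argument is short, and the only points that require a little care are the coordinate identity $\Check E_{ab}v^b=\epsilon_{abcd}v^bE^{cd}=(\text{const})\,(*(v\wedge E))_a$ together with the normalisation $E\wedge E=2\operatorname{Pf}(E)\,e_1e_2e_3e_4$, and the observation that it is precisely the parity of the rank of $\Check E$ that upgrades both a ``$\ge1$'' and a ``$\le3$'' estimate to the sharp value $2$. An alternative route would be to first establish that every $E\in B$ is decomposable, $E=e_1e_2$ with $e_1,e_2$ linearly independent, after which $\Ker{\psi_E}=\operatorname{span}(e_1,e_2)$ is immediate in an adapted basis; but since decomposability is essentially the content of Proposition~\ref{p:finalprop}, for whose proof this lemma is used, I would avoid that route and keep the argument self\ndash contained as above.
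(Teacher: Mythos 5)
Your proof is correct and follows essentially the same approach as the paper: both arguments rest on the fact that a skew\ndash symmetric $4\times4$ matrix has even rank (the paper phrases this via eigenvalues coming in conjugate imaginary pairs), with $E\neq0$ ruling out kernel dimension $4$ and $\operatorname{Pf}(E)=0$ ruling out kernel dimension $0$. The only difference is cosmetic: where the paper deduces a rank drop from ``$\operatorname{Pf}(E)=0$ implies $\Check E$ is singular,'' you produce an explicit nonzero kernel element by contracting $E\wedge E=0$, together with the (correct and useful) identification $\Ker{\psi_E}=\{v\in V\mid v\wedge E=0\}$.
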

\begin{proof}
Since the matrix representing $E$ or $*E$ is skew-symmetric, its eigenvalues are either equal to zero or they come in pairs of conjugate nonzero imaginary numbers. Since $E\not=0$, they cannot all vanish. On the other hand, the condition $\operatorname{Pf}(E)=0$ implies that $E$ and $*E$ are singular; therefore, at least one eigenvalue must vanish. It then follows that exactly two eigenvalues vanish, whereas the other two are conjugate nonzero imaginary numbers.
\end{proof}
Let $S_E:=\operatorname{ker}\psi_E$.
\begin{lemma}
Let $(e_1,e_2)$ be a basis of $S_E$. Then, there is a uniquely determined nonzero scalar $\lambda$ such that
$E=\lambda e_1e_2$.
\end{lemma}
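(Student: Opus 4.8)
The plan is to work in a basis adapted to $S_E$. First I would complete $(e_1,e_2)$ to a basis $(e_1,e_2,e_3,e_4)$ of $V$ and pass to the dual basis $(e^1,e^2,e^3,e^4)$ of $V^*$. Since $e_1,e_2\in S_E=\ker\psi_E$, the two-covector $\Check E=*E\in\Lambda^2V^*$ satisfies $\iota_{e_1}\Check E=0$ and $\iota_{e_2}\Check E=0$. Writing $\Check E=\sum_{a<b}\Check E_{ab}\,e^a\wedge e^b$ and contracting, these two conditions read $\Check E_{1b}=0$ and $\Check E_{2b}=0$ for all $b$, so by skew-symmetry the only component that can survive is $\Check E_{34}$, i.e.\ $\Check E=\Check E_{34}\,e^3\wedge e^4$. (More invariantly, the two kernel conditions force $\Check E$ to lie in $\Lambda^2$ of the annihilator of $\langle e_1,e_2\rangle$, a one-dimensional space spanned by $e^3\wedge e^4$.)

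Next I would invert the relation $\Check E_{ab}=\epsilon_{abcd}E^{cd}$: contracting with $\epsilon^{abcd}$ and using the standard identity $\epsilon^{abcd}\epsilon_{cdef}=\pm2(\delta^a_e\delta^b_f-\delta^a_f\delta^b_e)$ shows that $E^{ab}$ is a fixed nonzero multiple of $\epsilon^{abcd}\Check E_{cd}=2\Check E_{34}\,\epsilon^{ab34}$, which vanishes unless $\{a,b\}=\{1,2\}$. Hence $E=\lambda\,e_1e_2$ with $\lambda$ a definite nonzero multiple of $\Check E_{34}$ (concretely $\lambda=E^{12}$ in the convention $E=\sum_{a<b}E^{ab}e_ae_b$). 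Equivalently, this is just the observation that the linear isomorphism $E\mapsto\Check E$ from $\Lambda^2V$ to $\Lambda^2V^*$ carries the line $\Lambda^2\langle e_1,e_2\rangle=\langle e_1e_2\rangle$ onto $\langle e^3\wedge e^4\rangle$, so a $\Check E$ lying on the latter line comes from an $E$ on the former.

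Finally I would dispatch the two remaining claims. For non-vanishing: if $\lambda=0$ then $E=0$, contradicting $E\in B$, so $\lambda\neq0$. For uniqueness: if $\lambda e_1e_2=\lambda'e_1e_2$ then $(\lambda-\lambda')e_1e_2=0$, and since $e_1,e_2$ are linearly independent we have $e_1e_2\neq0$ by the first lemma of this appendix, whence $\lambda=\lambda'$.

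I do not expect a genuine obstacle here. The only substantive point is recognizing that the two kernel conditions cut $\Check E$ down to a one-dimensional space of two-covectors; everything else is bookkeeping with the $\epsilon$-tensor. The single thing to watch is tracking the normalization and signature-dependent constants relating $E$ and $\Check E$, which is precisely why recording the coordinate-free reformulation is convenient.
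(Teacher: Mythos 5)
Your argument is correct and follows essentially the same route as the paper's proof: complete $(e_1,e_2)$ to a basis, use the kernel conditions $\Check E\cdot e_1=\Check E\cdot e_2=0$ together with skew-symmetry to reduce $\Check E$ to its $34$-component, and then read off proportionality to $e_1e_2$, with nonvanishing of $\lambda$ from $E\neq0$. The only cosmetic difference is that you invert the duality $E\mapsto\Check E$ explicitly via the $\epsilon$-identity, whereas the paper applies the same component argument simultaneously to $\Check E$ and to $\Check E'$ with $E'=e_1e_2$ and compares the two.
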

\begin{proof}
Let $E':=e_1e_2$. Then $S_{E'}=S_E$. Let us complete $(e_1,e_2)$ to a basis $(e_1,e_2,e_3,e_4)$ of $V$.
In this basis, we then have $\Check E_{1a}=\Check E'_{1a}=0$ and $\Check E_{2a}=\Check E'_{2a}=0$ for every $a$. By skew-symmetry, we then have that the only nonzero entries of $\Check E$ and $\Check E'$ are the 34 and the 43 ones, one opposite to the other. There is then a uniquely determined nonzero scalar $\lambda$ such that $E_{34}=\lambda E'_{34}$.
\end{proof}
Collecting all the above, we then have the
\begin{proposition}
$\phi(W)=B$.
\end{proposition}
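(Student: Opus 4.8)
The plan is to prove the two inclusions $\phi(W)\subseteq B$ and $B\subseteq\phi(W)$ separately, assembling the lemmas already established just above.

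For the inclusion $\phi(W)\subseteq B$ I would argue as follows. Take $(e_1,e_2)\in W$, i.e.\ $e_1,e_2$ linearly independent. By the first of the two lemmas on $\phi$, we have $\phi(e_1,e_2)=e_1e_2\neq 0$, so $\phi(e_1,e_2)\in\Lambda^2V\setminus\{0\}$; by the second lemma, $\operatorname{Pf}(\phi(e_1,e_2))=0$. Hence $\phi(e_1,e_2)\in B$. This direction is immediate and needs no further computation.

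For the reverse inclusion $B\subseteq\phi(W)$ I would proceed by using the kernel structure. Let $E\in B$, so $E\neq 0$ and $\operatorname{Pf}(E)=0$. Form $*E\in\Lambda^2V^*$ and the associated linear map $\psi_E\colon V\to V^*$, $v\mapsto (*E)\cdot v$. By the lemma on the kernel, $S_E:=\ker\psi_E$ is two-dimensional; choose a basis $(e_1,e_2)$ of $S_E$, which is in particular a linearly independent pair, hence $(e_1,e_2)\in W$. By the last lemma there is a unique nonzero scalar $\lambda$ with $E=\lambda e_1e_2$. Then $E=\phi(\lambda e_1,e_2)$, and $(\lambda e_1,e_2)$ is still a linearly independent pair precisely because $\lambda\neq 0$, so $(\lambda e_1,e_2)\in W$ and $E\in\phi(W)$. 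Combining the two inclusions gives $\phi(W)=B$.

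I do not expect any genuine obstacle here: the content is entirely carried by the preceding lemmas — in particular the fact that $\operatorname{Pf}(E)=0$ with $E\neq 0$ forces $\ker\psi_E$ to be exactly two-dimensional (two vanishing eigenvalues, the other two a conjugate pair of nonzero imaginary numbers), and the uniqueness-of-$\lambda$ lemma. The only point requiring a moment's care is the final rescaling by $\lambda$, which is what lets one land back inside $W$ rather than merely exhibiting $E$ in the image of $\phi$ on all of $V\times V$.
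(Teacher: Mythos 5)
Your proof is correct and follows essentially the same route as the paper: the forward inclusion is the content of the two lemmas on $\phi$, and the reverse inclusion uses the two-dimensionality of $S_E=\ker\psi_E$, the scalar $\lambda$ with $E=\lambda e_1e_2$, and the rescaling $(\lambda e_1,e_2)\in W$, exactly as in the paper's argument. No gaps.
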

\begin{proof}
For every $E\in B$, we can choose a basis $(e_1,e_2)$ of $S_E$ and we then have $E=\lambda e_1e_2$.
But then $(\lambda e_1,e_2)\in W$ and $E=\phi(\lambda e_1,e_2)$.
\end{proof}

The map $\phi$ is clearly not injective. We can, however, relate this to a distribution that is the same as the one that we get from the kernel of the two-form in the tangent corner structure, see \eqref{e:kernel_Xe_part}. Namely, let $K\subset TW$ be the regular involutive distribution spanned by vector fields $X=(X_1,X_2)$ satisfying $e_1X_2+X_1e_2=0$ (wedge product symbols omitted).
It is clear that $\phi$ is constant along $K$. Let $\underline\phi$ be the induced map $W/K\to B$.
\begin{proposition}
$\underline\phi$ is a diffeomorphism.
\end{proposition}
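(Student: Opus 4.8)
The plan is to show that $\underline\phi\colon W/K\to B$ is a bijective local diffeomorphism, hence a diffeomorphism. Surjectivity is already contained in the previous proposition ($\phi(W)=B$), so only injectivity and the local-diffeomorphism property need attention.

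First I would establish injectivity. Suppose $\underline\phi([(e_1,e_2)])=\underline\phi([(e_1',e_2')])$, i.e.\ $e_1e_2=e_1'e_2'=:E$. By the lemma identifying $S_E=\operatorname{ker}\psi_E$, both pairs $(e_1,e_2)$ and $(e_1',e_2')$ are bases of the same two-dimensional subspace $S_E\subset V$. Hence there is an invertible $2\times2$ matrix $g=(g^i_j)$ with $e_i'=g^j_i e_j$, and then $e_1'e_2'=(\det g)\,e_1e_2$, forcing $\det g=1$, i.e.\ $g\in\mathrm{SL}(2,\bbR)$. It remains to check that any such $\mathrm{SL}(2,\bbR)$-change of frame is realized by flowing along $K$: the Lie algebra $\mathfrak{sl}(2,\bbR)$ acts on $(e_1,e_2)$ by $(X_1,X_2)=(a\,e_1+b\,e_2,\ c\,e_1-a\,e_2)$, and one checks directly that $e_1X_2+X_1e_2 = e_1(c e_1 - a e_2) + (a e_1 + b e_2)e_2 = -a\,e_1e_2 + a\,e_1e_2 = 0$ (wedge symbols omitted), so these are exactly the vector fields spanning $K$ by the construction of $K$. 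Since $\mathrm{SL}(2,\bbR)$ is connected, integrating this infinitesimal action shows $(e_1',e_2')$ lies in the same $K$-leaf as $(e_1,e_2)$, giving injectivity. (Equivalently: the leaves of $K$ through a point are precisely the $\mathrm{SL}(2,\bbR)$-orbits for the frame-change action, and $\phi$ is constant on and separates these orbits.)

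Next I would verify that $\underline\phi$ is a local diffeomorphism. Both $W/K$ and $B$ are smooth manifolds of the same dimension: $\dim W = 8$, $\dim K = 3$ (the rank of $\mathfrak{sl}(2,\bbR)$ acting freely, which it does since the $e_i$ are linearly independent), so $\dim(W/K)=5$, matching $\dim B = 5$ (the six-dimensional $\Lambda^2V$ cut by the single nondegenerate quadratic equation $\operatorname{Pf}(E)=0$ away from the origin). It then suffices to show $\dd\underline\phi$ is injective at each point, or equivalently that $\operatorname{ker}\dd\phi_{(e_1,e_2)} = K_{(e_1,e_2)}$ for all $(e_1,e_2)\in W$. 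The inclusion $K\subseteq\operatorname{ker}\dd\phi$ is immediate since $\phi$ is constant along $K$; for the reverse, compute $\dd\phi_{(e_1,e_2)}(X_1,X_2) = X_1 e_2 + e_1 X_2$, and note that a tangent vector $(X_1,X_2)\in T_{(e_1,e_2)}(V\times V)\cong V\times V$ lies in the kernel iff $X_1 e_2 + e_1 X_2 = 0$, which is exactly the defining equation of $K$. Hence $\operatorname{ker}\dd\phi = K$ and $\dd\underline\phi$ is everywhere injective, so by equality of dimensions $\underline\phi$ is a local diffeomorphism; combined with the bijectivity established above, it is a diffeomorphism.

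The main obstacle is the injectivity argument: one must be careful that ``$e_1e_2=e_1'e_2'$ as elements of $\Lambda^2V$'' forces the two frames to be $\mathrm{SL}(2,\bbR)$-related \emph{and} that this group of frame changes coincides with the holonomy of $K$. The first point uses the lemma $S_E=\operatorname{ker}\psi_E$ with the subsequent uniqueness-of-$\lambda$ lemma; the second is the short computation above showing the $\mathfrak{sl}(2,\bbR)$ generators satisfy $e_1X_2+X_1e_2=0$, together with connectedness of $\mathrm{SL}(2,\bbR)$ to pass from the infinitesimal to the global statement. Everything else (dimension count, the kernel-of-$\dd\phi$ computation) is routine linear algebra.
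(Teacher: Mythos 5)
Your proof is correct, and its core mechanism is the same as the paper's: both arguments reduce injectivity of $\underline\phi$ to the fact that any two frames $(e_1,e_2)$, $(e'_1,e'_2)$ with $e_1e_2=e'_1e'_2=E$ are bases of $S_E$ related by a frame change that is realized by motion along $K$. The packaging differs. The paper proceeds by explicit one-parameter $K$-flows: the shear $(X_1,X_2)=(0,e_1)$ to make the frame orthogonal, the dilation $(e_1,-e_2)$ to equalize lengths, and the rotation $(e_2,-e_1)$ to align the two normalized frames — i.e., it generates the needed frame changes by elementary moves. You instead note that the frame change must have determinant one, identify the fiber $\phi^{-1}(E)$ with an $\mathrm{SL}(2,\bbR)$-orbit, verify that the $\mathfrak{sl}(2,\bbR)$ generators satisfy $e_1X_2+X_1e_2=0$, and use connectedness of $\mathrm{SL}(2,\bbR)$ to place the whole orbit in one $K$-leaf; this is a more conceptual rendering of the same idea. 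You also add the local-diffeomorphism step ($\ker\dd\phi_{(e_1,e_2)}=K_{(e_1,e_2)}$, which is immediate from the definition of $K$, together with the count $8-3=5=\dim B$), which the paper leaves implicit; this is a genuine, if routine, completion. Two small points: ``rank of $\mathfrak{sl}(2,\bbR)$'' should read ``dimension''; and for the equality $\dim K=3$ (needed in your dimension count) you should also record the reverse inclusion, namely that any $(X_1,X_2)$ with $X_1e_2+e_1X_2=0$ has components only along $e_1,e_2$ with vanishing trace — a one-line check in a basis extending $(e_1,e_2)$ — so that $K$ coincides with, and is not larger than, the $\mathfrak{sl}(2,\bbR)$-span.
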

\begin{proof}
We have already seen that every $E\in B$ is of the form $E=\phi(e_1,e_2)$ with $(e_1,e_2)$ of $S_E$ a basis of $S_E$. Choose an inner product on $S_E$ and a reference vector $v\not=0$. By moving along $K$ (with $X_1=0$ and $X_2=e_1$), we can always arrange $e_1$ and $e_2$ to be orthogonal. By further moving along $K$ (with $X_1=e_1$ and $X_2=-e_2$), we can arrange $e_1$ and $e_2$ to have the same length. 

Now, suppose that $E=\phi(e_1,e_2)=\phi(e'_1,e'_2)$. By the above discussion, we may assume that $e_1$, $e_2$, $e'_1$, and $e'_2$ have the same length, that
$e_1$ is orthogonal to $e_2$, that $e'_1$ is orthogonal to $e'_2$, and that the two pairs have the same orientation on $S_E$. We can now rotate the vectors $e_1$ and $e_2$ (by choosing $X_1=e_2$ and $X_2=-e_1$) to send $e_1$ to $e'_1$. This automatically sends $e_2$ to $e'_2$.
\end{proof}

To get in touch with the corner structure, we need one more piece of information to implement condition \eqref{e:eeemen}; namely, the datum of two linearly independent vectors $\epsilon_m$ and $\epsilon_n$ in $V$. We then define
\[
W':=\{(e_1,e_2)\in V\times V \ |\ (e_1,\, e_2,\,\epsilon_m,\,\epsilon_n)\ \text{linearly independent}
\}\subset W
\]
and
\[
B':=\{E\in \Lambda^2V\ |\ E\epsilon_m\epsilon_n\not=0\text{ and }
\operatorname{Pf}(E)=0\}\subset B.
\]
Note that $W'$ is an open subset of $W$ and $B'$ is an open subset of $B$. It is immediately clear that $\phi(W')\subseteq B'$. On the other hand, if $E\in B'\subset B$, we can write $E=e_1e_2$. The condition $E\epsilon_m\epsilon_n\not=0$ implies that $e_1,\, e_2,\,\epsilon_m,\,\epsilon_n$ are linearly independent, so $(e_1,e_2)\in W'$. Moreover, the $K$-leaf of $(e_1,e_2)\in W'$ is contained in $W'$, as it has image a fixed $E\in B'$. Therefore, we have the following
\begin{proposition}
$\phi(W')=B'$, and $\underline\phi\colon W'/K\to B'$ is a diffeomorphism.
\end{proposition}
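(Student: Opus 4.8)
The statement to prove is that $\phi(W')=B'$ and that $\underline\phi\colon W'/K\to B'$ is a diffeomorphism, where $W'$ consists of pairs $(e_1,e_2)$ such that $(e_1,e_2,\epsilon_m,\epsilon_n)$ is linearly independent and $B'$ consists of $E\in\Lambda^2V$ with $E\epsilon_m\epsilon_n\neq 0$ and $\operatorname{Pf}(E)=0$.

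The plan is to reduce everything to the already-established unprimed statements, namely $\phi(W)=B$ and $\underline\phi\colon W/K\to B$ a diffeomorphism. First I would observe that $W'$ is an open subset of $W$ and $B'$ is an open subset of $B$: both $W'$ and $B'$ are cut out by the nonvanishing of a continuous (in fact polynomial) function, namely the determinant of the $4\times 4$ matrix with columns $e_1,e_2,\epsilon_m,\epsilon_n$ in the first case and the component $E\epsilon_m\epsilon_n\in\Lambda^4V\cong\bbR$ in the second. So the primed spaces are open submanifolds of the unprimed ones, and $W'/K$ is an open submanifold of $W/K$ (once we check $K$-saturation, see below).

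Next I would verify the set-theoretic equality $\phi(W')=B'$. The inclusion $\phi(W')\subseteq B'$ is immediate: if $(e_1,e_2,\epsilon_m,\epsilon_n)$ is a basis then $e_1e_2\epsilon_m\epsilon_n\neq 0$ in $\Lambda^4V$, which is exactly the condition $E\epsilon_m\epsilon_n\neq 0$ for $E=\phi(e_1,e_2)$; and $\operatorname{Pf}\phi(e_1,e_2)=0$ by the lemma already proved. For the reverse inclusion, take $E\in B'\subseteq B$; by the proposition $\phi(W)=B$ we may write $E=e_1e_2$ with $(e_1,e_2)\in W$, i.e. $e_1,e_2$ linearly independent. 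The condition $E\epsilon_m\epsilon_n\neq 0$ says precisely that $e_1e_2\epsilon_m\epsilon_n\neq 0$ in $\Lambda^4V$, which is equivalent to $(e_1,e_2,\epsilon_m,\epsilon_n)$ being a basis, i.e. $(e_1,e_2)\in W'$. Hence $E\in\phi(W')$.

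Finally, for the diffeomorphism claim, the key point is that the $K$-leaf through a point $(e_1,e_2)\in W'$ stays inside $W'$. This holds because $\phi$ is constant along $K$ and $\phi$ maps the leaf to a single $E\in B'$; since any point mapping to $E\in B'$ lies in $W'$ by the argument just given, the whole leaf lies in $W'$. Therefore $W'$ is a $K$-saturated open subset of $W$, so $W'/K$ is an open submanifold of $W/K$, and $\underline\phi\colon W/K\to B$ restricts to a bijection $W'/K\to B'$ between open submanifolds. Since the restriction of a diffeomorphism to an open set (with open image) is again a diffeomorphism, $\underline\phi\colon W'/K\to B'$ is a diffeomorphism. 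I do not expect any genuine obstacle here; the only mild subtlety is confirming $K$-saturation of $W'$, which as noted follows formally from $\phi$ being constant on leaves together with $\phi^{-1}(B')\cap W=W'$.
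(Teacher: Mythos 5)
Your proof is correct and follows essentially the same route as the paper: both arguments show $\phi(W')\subseteq B'$ directly, obtain the reverse inclusion from $\phi(W)=B$ plus the observation that $E\epsilon_m\epsilon_n\neq 0$ forces $(e_1,e_2,\epsilon_m,\epsilon_n)$ to be a basis, and then use the $K$-saturation of the open subset $W'$ (the leaf maps to a fixed $E\in B'$) to restrict the diffeomorphism $\underline\phi\colon W/K\to B$. Your explicit remark on saturation and on restricting a diffeomorphism to an open saturated subset just spells out what the paper leaves implicit.
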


We finally move to the setting of the corner structure. The data are the following: a two-manifold $\Gamma$, a rank-four vector bundle $\mathcal{V}_\Gamma$ over $\Gamma$, which is assumed to be isomorphic to $T\Gamma\oplus\underline\bbR^2$, and two linearly independent sections $\epsilon_m,\epsilon_n$ of the $\underline\bbR^2$ summand of $\mathcal{V}_\Gamma$. 
We consider the map 
\begin{align*}
    \phi\colon   \Omega_{\partial\partial}^{1,1}:=\Gamma(T^*\Gamma\otimes \mathcal{V}_\Gamma)  & \to   \Gamma(\Lambda^2T^*\Gamma\otimes \Lambda^2\mathcal{V}_\Gamma)=:\Omega_{\partial\partial}^{2,2}\\
      e &  \mapsto     \frac12ee
\end{align*}
In local coordinates, we write $e=e_1\dd x^1+e_2\dd x^2$, so $E=\phi(e)=-e_1e_2\dd x^1\dd x^2$, which is the same map $\phi$ (up to the density $-\dd x^1\dd x^2$) that we considered in the first part of this section when we restrict ourselves to a fiber of $\mathcal{V}_\Gamma$.

We then define
\[
\mathcal{W}':=\{e\in\Omega_{\partial\partial}^{1,1} \ |\ ee\epsilon_m\epsilon_n\not=0
\}
\]
and
\[
\mathcal{B}':=\{E\in \Omega_{\partial\partial}^{2,2}\ |\ E\epsilon_m\epsilon_n\not=0\text{ and }
\operatorname{Pf}(E)=0\}.
\]
\begin{proposition}\label{p:finalprop}
$\phi(\mathcal{W}')=\mathcal{B}'$, and $\underline\phi\colon \mathcal{W}'/\mathcal{K}\to \mathcal{B}'$ is an isomorphism of fiber bundles where $\mathcal{K}$ is a distribution fiberwise defined as $K$.
\end{proposition}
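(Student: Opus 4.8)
The plan is to upgrade to the family over $\Gamma$ the three pointwise facts already established in this appendix: that $\phi(W')=B'$, that $K$ is a regular involutive distribution, and that $\underline\phi\colon W'/K\to B'$ is a diffeomorphism. The key point is that every object in sight is fibrewise over $\Gamma$. Indeed, in a local trivialisation of $\mathcal{V}_\Gamma$ adapted to $\mathcal{V}_\Gamma\cong T\Gamma\oplus\underline{\bbR}^2$ (with $\epsilon_m,\epsilon_n$ the standard frame of the $\underline{\bbR}^2$ summand) and to local coordinates on $\Gamma$, one has $e=e_1\,\dd x^1+e_2\,\dd x^2$ and $\phi(e)=-e_1e_2\,\dd x^1\dd x^2$, which is precisely the fibrewise map of the first part of this appendix applied to $V=(\mathcal{V}_\Gamma)_x$. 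Hence $\phi$ is a bundle map over $\Gamma$ restricting to a map of the fibre bundles $\mathcal{W}'\to\mathcal{B}'$ whose fibres over $x$ are the manifolds $W'$ and $B'$, and $\mathcal{K}$ is the corresponding fibrewise distribution.

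First I would record that these bundles are smooth. Given $E\in\mathcal{B}'$, the Lemma on $\operatorname{ker}\psi_E$ shows that $\psi_E\colon\mathcal{V}_\Gamma\to\mathcal{V}_\Gamma^*$ has constant rank $2$, so $\mathcal{S}_E:=\operatorname{ker}\psi_E$ is a smooth rank\ndash two subbundle of $\mathcal{V}_\Gamma$; the open condition $E\epsilon_m\epsilon_n\neq 0$ says that $\epsilon_m,\epsilon_n$ descend to a frame of $\mathcal{V}_\Gamma/\mathcal{S}_E$, i.e.\ that $\mathcal{S}_E$ is a smooth complement of $\langle\epsilon_m,\epsilon_n\rangle$ in $\mathcal{V}_\Gamma$; in particular $\mathcal{S}_E\cong T\Gamma$. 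Moreover $E$ is a nowhere\ndash vanishing section of the line bundle $\Lambda^2 T^*\Gamma\otimes\Lambda^2\mathcal{S}_E$. Likewise, by the pointwise analysis, $\mathcal{K}$ is fibrewise the regular involutive distribution $K$.

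Next I would prove the two assertions. The inclusion $\phi(\mathcal{W}')\subseteq\mathcal{B}'$ is the pointwise identity $\operatorname{Pf}(\phi(e))=0$ together with $ee\epsilon_m\epsilon_n\neq 0$. For the reverse inclusion, given $E\in\mathcal{B}'$ I would choose a bundle isomorphism $e\colon T\Gamma\xrightarrow{\sim}\mathcal{S}_E$; then $\frac12 ee$ is a nowhere\ndash vanishing section of $\Lambda^2 T^*\Gamma\otimes\Lambda^2\mathcal{S}_E$, the same line bundle that contains $E$, so after composing $e$ with a suitable automorphism of $T\Gamma$ one may arrange $\frac12 ee=E$. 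Viewing this $e$ as an element of $\Omega_{\partial\partial}^{1,1}$, one has $ee\epsilon_m\epsilon_n=2E\epsilon_m\epsilon_n\neq 0$, so $e\in\mathcal{W}'$ and $\phi(e)=E$. Finally, since $\phi$ is constant along $\mathcal{K}$ it factors as $\underline\phi\colon\mathcal{W}'/\mathcal{K}\to\mathcal{B}'$; by the pointwise Proposition $\underline\phi$ is fibrewise a diffeomorphism, and it is given by the same algebraic formulae in every trivialisation, so it is an isomorphism of fibre bundles over $\Gamma$.

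The hard part will be precisely this globalisation: verifying that $\psi_E$ and $\mathcal{K}$ have locally constant rank (so that the pointwise lemmata assemble into statements about genuine smooth subbundles and distributions), and that the normalised isomorphism $e\colon T\Gamma\to\mathcal{S}_E$ can be produced globally and smoothly from $E$. Everything else is the fibre by fibre translation of the lemmata already proved for a single four\ndash dimensional vector space $V$.
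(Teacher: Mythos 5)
Your strategy is essentially the paper's: all objects are fibrewise over $\Gamma$, so the pointwise lemmata give the statement fibre by fibre, and the only real content is producing, for a given $E\in\mathcal{B}'$, a globally defined smooth $e$ with $\frac12 ee=E$; like the paper, you extract this from the fact that $E\epsilon_m\epsilon_n\neq 0$ forces $\mathcal{S}_E$ to be a complement of $\langle\epsilon_m,\epsilon_n\rangle$, hence isomorphic to $T\Gamma$. Your constant-rank argument for the smoothness of $\mathcal{S}_E$ is fine and in fact more explicit than what the paper writes.

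The gap sits in the one step you leave as an assertion: ``after composing $e$ with a suitable automorphism of $T\Gamma$ one may arrange $\frac12 ee=E$.'' Writing $E=g\cdot\frac12 ee$ with $g$ a nowhere-vanishing function, you need an automorphism $A$ of $T\Gamma$ with $\det A=g$. If $g>0$ this is just $A=\sqrt{g}\,\mathrm{id}$; but if $g<0$ you need an automorphism of $T\Gamma$ with everywhere-negative determinant, and such an $A$ need not exist: its two real eigendirections would be line subbundles of $T\Gamma$, which for $\Gamma=S^2$ (exactly the case used in the quantization section) are trivial by simple connectedness and would yield nowhere-vanishing vector fields, contradicting the hairy ball theorem; the same obstruction appears whenever $\chi(\Gamma)\neq 0$. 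The paper's proof handles precisely this point by choosing the reference coframe $e^0$ so that $e^0e^0\epsilon_m\epsilon_n$ induces the \emph{same} orientation as $E\epsilon_m\epsilon_n$, which makes the rescaling function positive and allows a square-root scaling $f$; your argument should do the analogous thing, e.g.\ take $e$ to be the canonical isomorphism $T\Gamma\to\mathcal{S}_E$ (inverse of the projection) and show that the orientation compatibility gives $g>0$, or restrict explicitly to that orientation sector (this is also where the content of the paper's own proof lies, since the orientation-matched $e^0$ exists only in that sector). Apart from this, and the deferred but routine verification that $\underline\phi\colon\mathcal{W}'/\mathcal{K}\to\mathcal{B}'$ is smooth with smooth inverse, your argument coincides with the paper's.
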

\begin{proof}
Fiberwise we follow the proofs of the first part of this appendix. The only problem is to prove that globally we can write $E\in \mathcal{B}'$ as $\frac12ee$. The point is that the condition $E\epsilon_m\epsilon_n\not=0$ implies that the distribution of two-planes $S_E$ is transversal to the distribution $S_{\epsilon_m\epsilon_n}$, i.e., the $\underline\bbR^2$ summand of $V$. This means that for a given isomorphism $e^{0}$ of $T\Gamma$ with a complement of the $\underline\bbR^2$ summand (chosen in such a way that $e^0e^0\epsilon_m\epsilon_n$ defines the same orientation as $E\epsilon_m\epsilon_n$), 
we have $E=\frac12ee$ with $e$ of the form $fe^{0}+\alpha \epsilon_m +\beta \epsilon_n$, with $\alpha,\beta$ 1-forms on $\Gamma$ and $f$ a nowhere vanishing function.
\end{proof}

\section{Analysis of the constraints} \label{sec:analysis_constraints}
In this appendix, we analyze the constraints \eqref{e:constraints_corner} and show which fields are they fixing. Let us start with some preliminary results.
Consider $W_{\partial \partial}^{(1,2)}\colon \Omega_{\partial \partial}^{1,2}  \longrightarrow \Omega_{ \partial\partial}^{2,3}$.
The dimensions of domain and codomain are 
$\dim \Omega_{ \partial\partial}^{1,2} = 12$ and $\dim \Omega_{ \partial\partial}^{2,3} = 4$. The kernel of $W_{\partial \partial}^{(1,2)}$ is defined by 
\begin{align*}
X_{\mu_1}^{ab} e_a e_b %\wedge 
e_{\mu_2} %\wedge 
\cdots %\wedge 
e_{\mu_{2}} \dd x^{\mu_1}\dd x^{\mu_2}\cdots \dd x^{\mu_{2}}=0,
\end{align*}
where we used $e_a$ as a basis for $\mathcal{V}_\Gamma$.\footnote{For simplicity of notation, we assume $\epsilon_n=e_4 $. The proof does not depend on this assumption.}  Since $\dd x^{1}\dd x^{2}$ is a basis for $\Omega^{2}(\Gamma)$, we obtain one equation of the form
\begin{align*}
 X_{1}^{ab} e_a e_b  e_{2} -  X_{2}^{ab} e_a e_b  e_{1} =0.  
\end{align*}
Recall now that $e_a e_b e_{\mu}$ for $\mu=1,2$ is a basis of $\wedge^{3}\mathcal{V}_\Gamma$. Hence, we obtain the following equations:
\begin{align*}
& X_1^{13}+X_2^{23}=0, & \qquad X_1^{14}+X_2^{24}=0,\\
& X_1^{34}=0, &   X_2^{34}=0.
\end{align*}
Hence, the map $W_{\partial \partial}^{(1,2)}$ is surjective but not injective. In particular, $\dim \text{Ker}W_{\partial \partial}^{(1,2)}=8$ and the kernel is generated by the following components:

\begin{align*}
& X_1^{13}-X_2^{23}, &  X_1^{14}-X_2^{24}, & & X_1^{12}, & &  X_2^{12},\\
& X_1^{23}, &   X_2^{13}, &
& X_1^{24}, & &  X_2^{14}.
\end{align*}

Consider now $\psi_e\colon \Omega_{\partial \partial}^{1,2} \rightarrow \Omega_{\partial \partial}^{2,1}$, $\psi_e(v):=[v,e]$.
The components of $\psi_e$ are defined by
\footnote{Here, we use that at every point we can find a basis in $\mathcal{V}_\Gamma$ such that $e_\mu^i= \delta_\mu^i$: $[v,e]_{\mu_1\mu_2}^a= v_{\mu_1}^{ab}\eta_{bc}e_{\mu_2}^c- v_{\mu_2}^{ab}\eta_{bc}e_{\mu_1}^c= v_{\mu_1}^{ab}e_{b}^d\eta_{dc}e_{\mu_2}^c- v_{\mu_2}^{ab}e_{b}^d\eta_{dc}e_{\mu_1}^c$.}
\begin{align*}
[v,e]_{\mu_1\mu_2}^a= v_{\mu_1}^{ab} g^{\partial\partial}_{b \mu_2} - v_{\mu_2}^{ab}g^{\partial\partial}_{b\mu_1}=0.
\end{align*}
Using now normal geodesic coordinates, we can diagonalize $g^{\partial\partial}$ with eigenvalues on the diagonal $\alpha_{\mu} \in \{1,-1,0\}$:
\begin{align*}
[v,e]_{\mu_1\mu_2}^a= v_{\mu_1}^{a\mu_2} \alpha_{ \mu_2} - v_{\mu_2}^{a\mu_1} \alpha_{\mu_1}.
\end{align*}

Let us now assume that $g^{\partial\partial}$ is nondegenerate and in particular space-like ($\alpha_{\mu}=1$). Then, the components of $\psi_e$ are defined by
\begin{align*}
[v,e]_{12}^1= v_{1}^{12},  & & [v,e]_{12}^3=v_{1}^{32}  - v_{2}^{31}, \\
[v,e]_{12}^2= v_{2}^{12},  & & [v,e]_{12}^4=v_{1}^{42}  - v_{2}^{41}. 
\end{align*}

We can now analyze part of the constraints \eqref{e:constraints_corner}. At the beginning, we just consider the classical part of them (i.e., we assume $c=\xi=\xi^m=\lambda=0$). The results will then straightforwardly generalize to the complete case. 

\begin{lemma}
The constraints
\begin{align*}
    \epsilon_n \dd_{\omega}e =e\sigma, && \epsilon_n \dd_{\omega_m}e+\epsilon_n \dd_{\omega}e_m=e\sigma_m + e_m \sigma, \\
    e_m \dd_{\omega}e=eL, && \epsilon_n L +e_m \sigma + e \sigma_m=0,
\end{align*}
fix four components of $\omega$.
\end{lemma}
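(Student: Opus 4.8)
The strategy is to set up local coordinates on $\Gamma$ and a convenient local frame for $\mathcal{V}_\Gamma$ (the standard basis, with $\epsilon_n=e_4$, normal geodesic coordinates so that $g^{\partial\partial}$ is diagonal and space-like), and then to count how many of the components of the connection $\omega$ are determined by the four displayed equations. The starting observation is that, as in the boundary analysis of \cite{CCS2020}, a connection $\omega$ can be written as $\omega=\omega^0+[v,e]+ w$ where $\omega^0$ is a fixed reference connection, $v\in\Omega_{\partial\partial}^{1,2}$ parametrizes the image of the map $\psi_e$ computed above, and $w$ ranges over a complement of $\Ima\psi_e$ inside $\Omega_{\partial\partial}^{1,2}$ (equivalently, over $\Ker W_{\partial\partial}^{(1,2)}$ after identifications); the $v$-part is the ``$e\sigma$-part'' and the $w$-part contains the genuinely free components of $\omega$. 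Concretely, the first equation $\epsilon_n\dd_\omega e=e\sigma$ is the structural constraint already imposed, and the plan is to show that the three remaining equations (two being the transversal structural constraint and $e_m\dd_\omega e=eL$, the last one fixing $L$ in terms of $\sigma,\sigma_m$) cut down the free $w$-components by exactly four.

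The key steps, in order, are the following. First I would recall from the computation in this appendix that $\dim\Omega_{\partial\partial}^{1,2}=12$, that $W_{\partial\partial}^{(1,2)}$ is surjective onto the $4$-dimensional $\Omega_{\partial\partial}^{2,3}$ with an $8$-dimensional kernel, and that $\psi_e\colon\Omega_{\partial\partial}^{1,2}\to\Omega_{\partial\partial}^{2,1}$ has (in the nondegenerate space-like case) the four independent components $[v,e]_{12}^1=v_1^{12}$, $[v,e]_{12}^2=v_2^{12}$, $[v,e]_{12}^3=v_1^{32}-v_2^{31}$, $[v,e]_{12}^4=v_1^{42}-v_2^{41}$, so $\dim\Ima\psi_e=4$. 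Second, I would expand $\omega$ in components and plug into $\epsilon_n\dd_\omega e=e\sigma$: since $e\cdot$ is surjective from $\Omega_{\partial\partial}^{1,2}$ to $\Omega_{\partial\partial}^{2,3}$ with $4$-dimensional image, this equation fixes $\sigma$ (up to $\Ker W_{\partial\partial}^{(0,1)}$-type ambiguities as already discussed) and imposes no new condition on $\omega$ beyond the definition of $\calA^\text{red}(\Gamma)$; it is the two transversal equations $\epsilon_n\dd_{\omega_m}e+\epsilon_n\dd_\omega e_m=e\sigma_m+e_m\sigma$ and $e_m\dd_\omega e=eL$ that are the operative ones. Third, I would substitute the ansatz $\omega=\omega^0+[v,e]+w$ into these two equations: the $[v,e]$ contributions reshuffle among the $\sigma$'s, $\sigma_m$'s and $L$, while the projection onto the complement of $\Ima\psi_e$ forces four independent linear combinations of the components of $w$ to be expressed in terms of the dynamical fields $e$, $e_m$, $F_\omega$, etc. Finally, the fourth equation $\epsilon_n L+e_m\sigma+e\sigma_m=0$ does not touch $\omega$ further — it only pins down $L$ — so the net effect on $\omega$ is exactly four fixed components, which is what is claimed. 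At the end I would remark that passing from the classical case $c=\xi=\xi^m=\lambda=0$ to the full case is immediate, since the ghost-dependent terms enter linearly and do not alter the rank count.

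The main obstacle I anticipate is the bookkeeping in the third step: one must carefully choose the complement of $\Ima\psi_e$ in $\Omega_{\partial\partial}^{1,2}$ and track which of the $12$ components of $w$ (equivalently of the $8$ generators of $\Ker W_{\partial\partial}^{(1,2)}$ listed above, together with the $4$-dimensional $\Ima W_{\partial\partial}^{(1,2)}$) actually appear in the two transversal equations, and verify that precisely four independent combinations are constrained rather than three or five. This is delicate because $e_m$ is still a dynamical field and because the maps $W_{\partial\partial}^{(0,1)}$ and $W_{\partial\partial}^{(0,2)}$ (injective-but-not-surjective, resp.\ neither) interact with the decomposition; one has to use the precise images recorded in Lemma~\ref{lem:Wmaps_properties_corner} to be sure the counting is exact. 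A clean way to organize this is to fix normal geodesic coordinates at a point, diagonalize $g^{\partial\partial}$, and simply write out the four scalar equations $[\dots]^a_{12}=0$ for $a=1,2,3,4$ coming from each constraint, read off which $\omega$-components solve them, and check linear independence of the resulting system; the degenerate (non-space-like) cases are handled by the same argument with the $\alpha_\mu$ adjusted, as in \cite{CCS2020}.
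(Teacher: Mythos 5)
Your overall strategy (standard basis with $\epsilon_n=e_4$, normal geodesic coordinates, component count using the kernel analysis of $W_{\partial\partial}^{(1,2)}$ and $\psi_e$) is the same computational framework the paper uses, but your allocation of which constraints actually constrain $\omega$ is wrong, and this is not a bookkeeping detail: it is the mechanism of the proof. In the paper's computation the first constraint $\epsilon_n\dd_\omega e=e\sigma$ does impose a genuine condition on $\omega$, namely $\omega_1^{32}-\omega_2^{31}=Y_{12}^3$ (with $Y=\dd e$), and in addition it expresses $\sigma_1^4$ and $\sigma_2^4$ in terms of $\omega_2^{12}$ and $\omega_1^{12}$. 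Symmetrically, $e_m\dd_\omega e=eL$ fixes $\omega_1^{24}-\omega_2^{14}=Y_{12}^4$ and expresses $L_1^3,L_2^3$ in terms of the same components $\omega_2^{12},\omega_1^{12}$. The transversal structural constraint, which you designate as one of the two ``operative'' equations, in fact yields no direct condition on $\omega$ at all: it is dynamical (it involves $\omega_m$) and only produces relations among the components of $\sigma$ and $\sigma_m$. Finally, the fourth constraint $\epsilon_n L+e_m\sigma+e\sigma_m=0$, which you explicitly dismiss as ``not touching $\omega$'', is exactly what closes the argument: it gives $\sigma_1^4+L_1^3=0$ and $\sigma_2^4+L_2^3=0$, which combined with the expressions coming from the first and third constraints force $\omega_1^{12}+Y_{12}^1=0$ and $\omega_2^{12}+Y_{12}^2=0$. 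So the count $1+1+2=4$ arises from constraints one, three, and the interplay of one, three, and four; following your plan as stated (only equations two and three constrain $\omega$, equation four only pins down $L$) you would find at most two fixed components and could not reach the statement of the lemma.

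Your proposed decomposition $\omega=\omega^0+[v,e]+w$ is not objectionable in itself, and your final fallback (write out all scalar component equations and check linear independence) would, if executed on all four constraints simultaneously, recover the correct answer. But as written, the plan predicts the wrong source for the four conditions and discards precisely the equations whose combination produces half of them, so the forecasted rank count is unsupported. To repair it, you must keep track of how $\sigma$, $\sigma_m$, and $L$ (which are existentially quantified auxiliary fields) are partially determined by $\omega$ through the first and third constraints, and then feed those expressions into the fourth constraint; only then do $\omega_1^{12}$ and $\omega_2^{12}$ become fixed. The remark about the ghost-dependent generalization is fine and matches the paper.
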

\begin{proof}
Let us start with the restriction of the boundary constraint to the corner:
$\epsilon_n \dd_{\omega}e=\epsilon_n \dd e + \epsilon_n [\omega, e] =e\sigma$.
Let us denote $Y=\dd e$. Then using the results of the previous lemmata, we get that this equation translates into the following equations for components of the fields:
\begin{align*}
\omega_1^{32}-\omega_2^{31}= Y_{12}^{3},  & & \sigma_{2}^{4}=\omega_1^{12} + Y_{12}^1,&& \sigma_{1}^{4}=-\omega_2^{12} + Y_{12}^2,\\
\sigma_{1}^{3}=0,  & & \sigma_{2}^{3}=0, && \sigma_{1}^{1}+\sigma_{2}^{2}=0.
\end{align*}

The part transversal to the corner of the boundary structural constraint is
$\epsilon_n \dd_{\omega_m}e+\epsilon_n \dd_{\omega}e_m=e\sigma_m + e_m \sigma$. On the corner, it is a dynamical equation but also introduces some relations between the components of $\sigma$ and $\sigma_m$. These are
\begin{align*}
\sigma_{m}^{2}=0,  & & \sigma_{m}^{1}=0, && \sigma_{1}^{2}=0,\\
\sigma_{2}^{1}=0,  & & \sigma_{m}^{3}+\sigma_{1}^{1}=0, && \sigma_{m}^{3}+\sigma_{2}^{2}=0.
\end{align*}

In a similar way, we get the following equations for the components from the equation $e_m \dd_{\omega}e=e_m \dd e + e_m [\omega, e] =eL$:
\begin{align*}
\omega_1^{24}-\omega_2^{14}= Y_{12}^{4},  & & L_{2}^{3}=\omega_1^{12} + Y_{12}^1,&& L_{1}^{3}=-\omega_2^{12} + Y_{12}^2,\\
L_{1}^{4}=0,  & & L_{2}^{4}=0, && L_{1}^{1}+L_{2}^{2}=0.
\end{align*}

Lastly, we consider the constraint $\epsilon_n L +e_m \sigma + e \sigma_m=0.$ In components, we obtain some equations proportional to the previous ones and the following:
\begin{align*}
\sigma_{1}^{4}+L_1^3=0,  & & \sigma_{2}^{4}+L_2^3=0, && L_{1}^{2}=0,\\
L_{2}^{1}=0,  & & \sigma_{m}^{4}-L_{1}^{1}=0, && \sigma_{m}^{4}-L_{2}^{2}=0.
\end{align*}

Collecting all the information, we get the following equations for the components of $\omega:$
\begin{align*}
    \omega_1^{32}-\omega_2^{31}= Y_{12}^{3} && \omega_1^{24}-\omega_2^{14}= Y_{12}^{4} && \omega_1^{12} + Y_{12}^1=0 && \omega_2^{12} + Y_{12}^2=0.
\end{align*}
\end{proof}

To generalize this result to the case where also the ghosts are present, it is sufficient to modify the definitions of $\sigma, \sigma_m, L$, and $Y$. The components fixed will not change, but they will be fixed to a different combination of the other fields.

Let us now consider the two constraints $\gamma_m^{\dag}= eK$ and $\epsilon_n K=0.$ 
\begin{lemma}
The constraints \eqref{e:constraint_corner1} and \eqref{e:constraint_corner3b} fix four components of the field $\gamma_m^{\dag}$.
\end{lemma}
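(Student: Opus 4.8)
The plan is to reduce the statement to a pointwise computation in a basis of $\mathcal{V}_\Gamma$ adapted to the coframe, in the spirit of the preceding lemmata of this appendix. Since \eqref{e:constraint_corner1} and \eqref{e:constraint_corner3b} contain no derivatives and no ghosts, the whole analysis is fiberwise linear algebra. Fix a point $x\in\Gamma$ and, as is allowed because $ee\epsilon_m\epsilon_n\neq 0$ forces $\{e_1,e_2,\epsilon_m,\epsilon_n\}$ to be a basis of $V=\mathcal{V}_{\Gamma,x}$, choose the standard basis so that the tangential coframe reads $e=e_1\,\dd x^1+e_2\,\dd x^2$ with $e_\mu^a=\delta_\mu^a$ and $\epsilon_n=e_4$ (the vector $e_3=\epsilon_m$ plays no role here).

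First I would unpack \eqref{e:constraint_corner3b}. Writing $K=K_\mu^a\,e_a\,\dd x^\mu\in\Omega_{\partial\partial}^{1,1}$, the equation $\epsilon_n K=0$ reads $K_\mu^a\,(e_4\wedge e_a)=0$ for $\mu=1,2$; since $w\mapsto\epsilon_n\wedge w$ on $V$ has kernel exactly $\operatorname{span}(\epsilon_n)$, this forces $K_\mu^1=K_\mu^2=K_\mu^3=0$, i.e.\ $K=\kappa\,\epsilon_n$ for a $1$-form $\kappa=\kappa_\mu\,\dd x^\mu$ on $\Gamma$. Substituting into \eqref{e:constraint_corner1} and using that $\dd x^1\dd x^2$ is a basis of $\Omega^2(\Gamma)$, I would compute
\[
\gamma_m^\dag=(e\wedge K)_{12}\,\dd x^1\dd x^2=\bigl((\kappa_2 e_1-\kappa_1 e_2)\wedge\epsilon_n\bigr)\,\dd x^1\dd x^2 .
\]
Because $e_1,e_2,\epsilon_n$ are linearly independent, the vector $\kappa_2 e_1-\kappa_1 e_2\in\operatorname{span}(e_1,e_2)$ vanishes only if $\kappa=0$, so $\gamma_m^\dag$ is confined to the two-dimensional subspace of the six-dimensional fiber $\bigwedge^2 V$ spanned by $e_1\wedge\epsilon_n$ and $e_2\wedge\epsilon_n$. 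In components: of the six functions $(\gamma_m^\dag)_{12}^{ab}$, only $(\gamma_m^\dag)_{12}^{14}=\kappa_2$ and $(\gamma_m^\dag)_{12}^{24}=-\kappa_1$ survive, while
\[
(\gamma_m^\dag)_{12}^{12}=(\gamma_m^\dag)_{12}^{13}=(\gamma_m^\dag)_{12}^{23}=(\gamma_m^\dag)_{12}^{34}=0 ,
\]
so the constraints fix exactly four components of $\gamma_m^\dag$. With the ghosts restored, these four vanishing conditions become equalities with the corresponding combinations of the other fields, but their number is unchanged.

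Finally I would observe that this is the local shadow of a global statement: $\Ima(e)\subset\mathcal{V}_\Gamma$ is a rank-two subbundle transverse to $\underline{\epsilon_n}$ (again by $ee\epsilon_m\epsilon_n\neq 0$), hence $\Ima(e)\wedge\underline{\epsilon_n}\subset\bigwedge^2\mathcal{V}_\Gamma$ is a rank-two subbundle, and \eqref{e:constraint_corner1}--\eqref{e:constraint_corner3b} say precisely that $\gamma_m^\dag$ is a $2$-form on $\Gamma$ valued in it; equivalently, the surviving components of $K$ are those transverse to $\Ker{W_{\partial\partial}^{(1,1)}}$. The only delicate point---the ``main obstacle'', such as it is---is the dimension bookkeeping: one has to make sure that $\epsilon_n$ is genuinely transverse to $\Ima(e)$ and that $K\mapsto e\wedge K$ does not drop rank on the plane $\{K:\epsilon_n K=0\}$, for only then does the count of fixed components come out to exactly $6-2=4$ and not smaller.
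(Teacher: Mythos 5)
Your proposal is correct and follows essentially the same route as the paper: a pointwise component computation in the adapted basis with $\epsilon_n=e_4$, where $\epsilon_n K=0$ forces $K=\kappa\,\epsilon_n$ and substitution into $\gamma_m^\dag=eK$ leaves only $(\gamma_m^\dag)_{12}^{14}$ and $(\gamma_m^\dag)_{12}^{24}$ free, fixing the same four components $(\gamma_m^\dag)_{12}^{12},(\gamma_m^\dag)_{12}^{13},(\gamma_m^\dag)_{12}^{23},(\gamma_m^\dag)_{12}^{34}$ as the paper (which merely runs the two constraints in the opposite order). The only differences are cosmetic: your closing global reformulation via the rank-two subbundle $\Ima(e)\wedge\underline{\epsilon_n}$ is a nice addition, and the remark about ghosts is superfluous since \eqref{e:constraint_corner1} and \eqref{e:constraint_corner3b} contain no ghost fields, so the four components are fixed to zero in the full theory as well.
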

\begin{proof}
In components, \eqref{e:constraint_corner1} corresponds to the following relations:
\begin{align*}
    (\gamma_m^{\dag})_{12}^{12}=K_1^1+ K_2^2, && (\gamma_m^{\dag})_{12}^{13}=K_2^3, && (\gamma_m^{\dag})_{12}^{14}= K_2^4,\\
    (\gamma_m^{\dag})_{12}^{23}= -K_1^3, && (\gamma_m^{\dag})_{12}^{24}=-K_1^4, && (\gamma_m^{\dag})_{12}^{34}=0.
\end{align*}
On the other hand, \eqref{e:constraint_corner3b} correspond to the following relations:
\begin{align*}
    K_1^1=0, && K_1^3=0, && K_1^2=0, && K_2^1=0, && K_2^3=0, && K_2^2=0.
\end{align*}
Hence, combining the two sets of equations, we get four equations for the components of $\gamma_m^{\dag}$: 
\begin{align*}
    (\gamma_m^{\dag})_{12}^{12}=0, && (\gamma_m^{\dag})_{12}^{13}=0, && (\gamma_m^{\dag})_{12}^{23}=0, && (\gamma_m^{\dag})_{12}^{34}=0.
\end{align*}
\end{proof}

\section{Results about the push-forward of Hamiltonian vector fields}\label{sec:push_forward_ham}
In this appendix, we present some technical results that are useful to push-forward the Hamiltonian vector field $Q^{\partial}$ from the boundary to the corner.
Since the expression \eqref{e:Q-boundary} of $Q^{\partial}$  contains nonexplicit terms involving the function $(W_{\partial}^{(i,j)})^{-1}$, we must find a way to invert it. 
\begin{lemma}
Let $\widetilde{\gamma}\in \Omega_{\partial}^{i,j}$ and $\widetilde{X} \in \Omega_{\partial}^{i+1,j+1}$ be such that $\widetilde{\gamma}= (W_{\partial}^{(i,j)})^{-1}(\widetilde{X})$. If we let $\widetilde{e}=e|_{\Gamma} + e_m \dd x^m$, $ \widetilde{\gamma}= \gamma|_{\Gamma} + \gamma_m \dd x^m$, and $\widetilde{X}= X|_{\Gamma}+ X_m \dd x^m$, then  we have
\begin{align*}
    \gamma|_{\Gamma} &= (W_{\partial\partial}^{(i,j)})^{-1}(\pi_I (X|_{\Gamma})),\\
    \gamma_m &= (W_{\partial\partial}^{,(i-1,j)})^{-1}(\pi_I(-e_m (W_{\partial\partial}^{(i,j)})^{-1}(\pi_I (X|_{\Gamma})) + X_m)).
\end{align*}
\end{lemma}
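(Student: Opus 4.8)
The plan is to exploit the decomposition of differential forms on $\Sigma$ near the corner $\Gamma$ into a tangential part and a part containing $\dd x^m$, and to unwind the equation $W_\partial^{(i,j)}\widetilde\gamma = \widetilde X$, i.e. $\widetilde\gamma\wedge\widetilde e = \widetilde X$, degree by degree in $\dd x^m$. First I would write $\widetilde e = e + e_m\,\dd x^m$, $\widetilde\gamma = \gamma + \gamma_m\,\dd x^m$, $\widetilde X = X + X_m\,\dd x^m$ (all restrictions to $\Gamma$ understood, and $e,\gamma,X$ the pull-backs to $\Gamma$), and expand the wedge product. Since $\dd x^m\wedge\dd x^m = 0$, one gets
\[
\widetilde\gamma\wedge\widetilde e = \gamma\wedge e + \bigl(\gamma_m\wedge e \pm e_m\wedge\gamma\bigr)\dd x^m,
\]
with the sign fixed by the form-degree of $\gamma$. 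Matching the $\dd x^m$-free part gives $\gamma\wedge e = X$ and matching the $\dd x^m$-part gives $\gamma_m\wedge e = X_m \mp e_m\wedge\gamma$, i.e. $W_{\partial\partial}^{(i,j)}\gamma = X$ and $W_{\partial\partial}^{(i-1,j)}\gamma_m = X_m - e_m\wedge\gamma$ (absorbing the sign by commuting the zero-form $e_m$, and noting $e_m\wedge\gamma$ lives in $\Omega_{\partial\partial}^{i,j+1}$, matching the codomain of $W_{\partial\partial}^{(i-1,j)}$ after one checks the bidegrees).

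The one subtlety, and the step I expect to be the main obstacle, is the appearance of the projector $\pi_I$ in the statement: the maps $W_{\partial\partial}^{(i,j)}$ on the corner are in general neither injective nor surjective (Lemma~\ref{lem:Wmaps_properties_corner}), so $(W_{\partial\partial}^{(i,j)})^{-1}$ only makes sense after projecting onto the image of $W_{\partial\partial}^{(i,j)}$, which is what $\pi_I$ does, and the ``inverse'' is in turn only defined up to the kernel. So the argument is really: the tangential component $X$ of $\widetilde X$ automatically lies in $\Ima W_{\partial}^{(i,j)}$, and restricting to $\Gamma$ we must check that $\pi_I(X|_\Gamma)$ is the correct element on which to apply $(W_{\partial\partial}^{(i,j)})^{-1}$, and that a choice of preimage on $\Sigma$ restricts to a (valid) choice of preimage on $\Gamma$. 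This compatibility of the ambiguities (kernel of $W_\partial$ restricting into kernel of $W_{\partial\partial}$) is the place where one has to be careful; it follows from the explicit description of the kernels in Appendix~\ref{sec:analysis_constraints}, but writing it cleanly is the delicate point.

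Concretely, I would proceed as follows. Step 1: expand $\widetilde\gamma\wedge\widetilde e$ as above and read off the two equations $\gamma\wedge e = \pi_I(X|_\Gamma)$ (the projection is needed precisely because $X|_\Gamma$ need not lie in $\Ima W_{\partial\partial}^{(i,j)}$ even though $X$ lies in $\Ima W_\partial^{(i,j)}$, the extra components being killed by $\pi_I$ and reappearing in the $\gamma_m$-equation) and $\gamma_m\wedge e = \pi_I\bigl(X_m - e_m\wedge\gamma\bigr)$. Step 2: invert the first to get $\gamma|_\Gamma = (W_{\partial\partial}^{(i,j)})^{-1}\pi_I(X|_\Gamma)$, fixing the kernel ambiguity by the same rule used on $\Sigma$. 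Step 3: substitute this $\gamma$ into the second equation and invert $W_{\partial\partial}^{(i-1,j)}$, obtaining
\[
\gamma_m = (W_{\partial\partial}^{(i-1,j)})^{-1}\Bigl(\pi_I\bigl(-\,e_m\,(W_{\partial\partial}^{(i,j)})^{-1}\pi_I(X|_\Gamma) + X_m\bigr)\Bigr),
\]
which is the claimed formula. Step 4: check the ambiguities match — a different preimage $\gamma + \mathbb V$ with $\mathbb V\in\Ker W_{\partial\partial}^{(i,j)}$ changes $\gamma_m$ only by an element that is still a legitimate preimage, so the two clauses of the lemma are consistent — invoking the kernel computations of the appendices. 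The only genuine computation is the bookkeeping of signs and bidegrees in Step 1, which is routine.
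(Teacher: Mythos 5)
Your proposal follows essentially the same route as the paper's proof: expand $\widetilde\gamma\wedge\widetilde e=\widetilde X$ in powers of $\dd x^m$, read off the two equations $e\gamma = X$ and $e\gamma_m + e_m\gamma = X_m$, and invert the corner maps $W_{\partial\partial}^{(i,j)}$, $W_{\partial\partial}^{(i-1,j)}$ using the projection $\pi_I$ to handle their noninvertibility, with the answer defined only up to kernel terms. Your extra discussion of the kernel ambiguity is consistent with the paper's remark following the proof (the only nuance is that, under the hypothesis, $X|_\Gamma=e\gamma$ already lies in $\Ima W_{\partial\partial}^{(i,j)}$, so $\pi_I$ there is bookkeeping rather than a correction), so the argument is correct.
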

\begin{proof}
Omitting the restriction to the corner, we have that
\begin{align*}
 \widetilde{e} \widetilde{\gamma}= (e + e_m \dd x^m)(\gamma + \gamma_m \dd x^m) = X+ X_m \dd x^m = \widetilde{X}.
\end{align*}
This equation splits into two subequations, containing $\dd x^m$ or not:
\begin{align*}
 e \gamma = X, \qquad \qquad e \gamma_m + e_m \gamma = X_m.
\end{align*}
From the first, we deduce $\gamma = (W_{\partial\partial}^{(i,j)})^{-1}(\pi_I (X))$, while from the second we find $$\gamma_m = (W_{\partial\partial}^{(i-1,j)})^{-1}(\pi_I(-e_m (W_{\partial\partial}^{(i,j)})^{-1}(\pi_I (X)) + X_m)),$$ where $\pi_I$ stands for the projection to the image of the map $W_{\partial\partial}^{(i,j)}$.
\end{proof}

\

 \begin{remark}
One has to be careful here because the map $W_{\partial\partial}^{(i,j)}$ can be noninvertible. Hence technically here we are finding the values of $\gamma$ and $\gamma_m$ up to terms in the kernel of the map $W_{\partial\partial}^{(i,j)}$, and we need to keep using the projection $\pi_I$ at all times.
\end{remark}
As an example, we consider  $Q^\partial \omega$: it contains a term of the form $\lambda (W_{\partial}^{(1,2)})^{-1}(\epsilon_n F_\omega)$. Here, $X= \epsilon_n F_\omega$. Hence, we  have
\begin{align*}
\widetilde{Q}^{\partial\partial} \omega &= \dots +  (W_{\partial\partial}^{(
1,2)})^{-1}(\epsilon_n F_\omega), \\
\widetilde{Q}^{\partial\partial} \omega_m &= \dots + (W_{\partial\partial}^{(
0,2)})^{-1}(\pi_I(-e_m (W_{\partial\partial}^{(
1,2)})^{-1}(\epsilon_n F_\omega) + \epsilon_n F_{\omega_m})) + K,
\end{align*}
where $e  K=0$.
Notice that since $W_{\partial\partial}^{(1,2)}$ is surjective on $\Omega_{\partial\partial}^{1,2}$, we do not need the projection on $\epsilon_n F_\omega$, while, since the map $W_{\partial\partial}^{(
0,2)}$ is neither surjective nor injective on $\Omega_{\partial\partial}^{0,2}$, we need the projection $\pi_I$ on the second expression and we still miss something in the kernel of $W_{\partial\partial}^{(
0,2)}$, denoted by $K$.

A similar procedure is needed also for $Q^{\partial}{y^{\dag}}$. On the boundary we have
$$ \widetilde{e}_i \widetilde{Q^{\partial}{y^{\dag}}}= \lambda \widetilde{\sigma_i} \widetilde{y}^{\dag} + \widetilde{\mu}\widetilde{\gamma_i}^\dag$$ for $i = a, m$.
Hence, since $y_m^\dag$ is a top form on the boundary, we get
\begin{align*}
e_m Q^{\partial}{y^{\dag}_m} \dd x^m &= \lambda \sigma_m y^\dag_m \dd x^m + \mu_m \dd x^m \gamma_m^{\dag}, \\
e_a Q^{\partial}{y^{\dag}_m} \dd x^m &= \lambda \sigma_a y^\dag_m \dd x^m + \mu  \gamma_{am}^{\dag} \dd x^m,
\end{align*}
from which we can easily deduce the expression of $\widetilde{Q}^{\partial}$ on the pre-corner.

  \begin{refcontext}[sorting=nyt]
    \printbibliography[] 
  \end{refcontext}

\end{document}